\newcolumntype{H}{>{\setbox0=\hbox\bgroup}c<{\egroup}@{}}
\newcommand{\sequal}{\!=\!}
\newcommand{\sminus}{\!\!-\!\!}
\newcommand{\splus}{\!\!+\!\!}
\DeclareMathOperator*{\argmax}{argmax}
\DeclareMathOperator*{\bd}{bd}
\DeclareMathOperator*{\aff}{aff}
\DeclareMathOperator*{\lexmax}{lex--max}
\begin{document}

\title{Computing a Pessimistic Leader-Follower Equilibrium with Multiple Followers: the Mixed-Pure Case}

\titlerunning{Computing Pessimistic Leader-Follower Equilibria: the Mixed-Pure Case}        

\author{Stefano Coniglio \and Nicola Gatti \and Alberto Marchesi}

\authorrunning{S. Coniglio \and N. Gatti \and A. Marchesi} 

\institute{Stefano Coniglio \at
              University of Southampton, University Road, Southampton, SO17 1BJ, UK\\
              Tel.: +44-023-8059-4546\\
              \email{s.coniglio@soton.ac.uk}           
           \and
           Nicola Gatti \at
              Politecnico di Milano, piazza Leonardo da Vinci 32, Milano, 20133, Italy \\
              Tel.: +39-02-2399-3658\\
              Fax.: +39-02-2399-3411\\
   	    \email{nicola.gatti@polimi.it}
	    \and
           Alberto Marchesi \at
              Politecnico di Milano, piazza Leonardo da Vinci 32, Milano, 20133, Italy \\
              Tel.: +39-02-2399-9685\\
              Fax.: +39-02-2399-3411\\
   	    \email{alberto.marchesi@polimi.it}
}

\date{}

\maketitle




\begin{abstract}
The search problem of computing a {\em leader-follower equilibrium} (also referred to as an \emph{optimal strategy to commit to}) has been widely investigated in the scientific literature in, almost exclusively, the single-follower setting. Although the \emph{optimistic} and \emph{pessimistic} versions of the problem, i.e., those where the single follower breaks any ties among multiple equilibria either in favour or against the leader, are solved with different methodologies, both cases allow for efficient, polynomial-time algorithms based on linear programming. The situation is different with multiple followers, where results are only sporadic and depend strictly on the nature of the followers' game.

In this paper, we investigate  the setting of a normal-form game with a single leader and multiple followers who, after observing the leader's commitment, play a Nash equilibrium. The corresponding search problem, both in the optimistic and pessimistic versions, is known to be not in Poly-$\textsf{APX}$ unless $\textsf{P}=\textsf{NP}$ and exact algorithms are known only for the optimistic case.
We focus on the case where the followers play in pure strategies---a restriction that applies to a number of real-world scenarios and which, in principle, makes the problem easier---under the assumption of pessimism (as it is easy to show, the optimistic version of the problem can be straightforwardly solved in polynomial time). After casting this search problem as a {\em pessimistic bilevel programming problem}, we show that, with two followers, the problem is \textsf{NP}-hard and, with three or more followers, it is not in Poly-\textsf{APX} unless $\textsf{P}=\textsf{NP}$. This last result matches the inapproximability result which holds for the unrestricted case and shows that, differently from what happens in the optimistic version, hardness in the pessimistic problem is not due to the adoption of mixed strategies.
We then show that the problem admits, in the general case, a supremum but not a maximum, and we propose a single-level mathematical programming reformulation which calls for the maximisation of a nonconcave quadratic function over an unbounded nonconvex feasible region defined by linear and quadratic constraints.
Since, due to admitting a supremum but not a maximum, only a restricted version of this formulation can be solved to optimality with state-of-the-art methods, we propose an exact {\em ad hoc} algorithm, which we also embed within a branch-and-bound scheme, capable of computing the supremum of the problem and, for cases where there is no leader's strategy where such value is attained, also an $\alpha$-approximate strategy where $\alpha > 0$ is an arbitrary additive loss.
We conclude the paper by evaluating the scalability of our algorithms via computational experiments on a well-established testbed of game instances.
\end{abstract}

	\keywords{Leader-follower games \and Stackelberg equilibria \and Pessimistic bilevel programming}

\section{Introduction}\label{sec:introduction}

In recent years, {\em Leader-Follower} (or {\em Stackelberg}) {\em Games} (LFGs) and their corresponding {\em Leader-Follower Equilibria} (LFEs) have attracted a growing interest in many disciplines, including theoretical computer science, artificial intelligence, and operations research. LFGs describe situations where one player (the {\em leader}) commits to a {\em strategy} and the other players (the {\em followers}) first observe the leader's commitment and, then, decide how to play. In the literature, LFEs are often referred to as \emph{optimal strategies} (for the leader) {\em to commit to}. 
LFGs encompass a broad array of real-world games. A prominent example is that one of security games, where a defender, acting as a leader, is tasked to allocate scarce resources to protect valuable targets from an attacker, acting as a follower~\cite{an2011guards,KiekintveldJTPOT09,paruchuri2008playing}. Besides the security domain, applications can be found in, among others, interdiction games~\cite{caprara2016bilevel,matuschke2017protection}, toll-setting problems~\cite{labbe2016bilevel}, and network routing~\cite{amaldi2013network}.

While, to the best of our knowledge, the majority of the game theoretical investigations on the computation of LFEs assumes the presence of a single follower, we address, in this work, the multi-follower case.

When facing an LFG and, in particular, a multi-follower one, two aspects need to be considered: the \emph{type} of game (induced by the leader's strategy) the followers play and, in it, \emph{how} ties among the multiple equilibria which could arise are broken.

As to the nature of the followers' game, and restricting ourselves to the cases which look more natural, the followers may play hierarchically one at a time, as in a hierarchical Stackelberg game~\cite{Conitzer:2006}, simultaneously and cooperatively~\cite{ConitzerK11}, or simultaneously and noncooperatively~\cite{basilico2016methods}.

As to breaking ties among multiple equilibria, it is natural to consider two cases: the {\em optimistic} one, where the followers end up playing an equilibrium which {\em maximises} the leader's utility, and the {\em pessimistic} one, where they end up playing an equilibrium by which the leader's utility is {\em minimised}. Note that we are not assuming, here, that the followers could {\em agree} on an optimistic or pessimistic equilibrium in a practical application. Rather, the optimistic and pessimistic cases allow for the computation of the tightest range of values the leader's utility may take without making any assumptions on which equilibrium the followers would {\em actually} end up playing. From this perspective, while an optimistic LFE accounts for the best case for the leader, a pessimistic LFE accounts for the worst case.
%
In this sense, the computation of a pessimistic LFE is paramount in realistic scenarios as, differently from an optimistic one, the former is robust. As we will see, though, this degree of robustness comes at a high computational price, as computing a pessimistic LFE is a much harder task than computing its optimistic counterpart.

\subsection{Leader-Follower Nash Equilibria}

Throughout the paper, we will focus on the case of normal-form games where, after the leader's commitment to a strategy, the followers play {\em simultaneously} and {\em noncooperatively}, reaching a Nash equilibrium. We refer to the corresponding equilibrium as {\em Leader-Follower Nash Equilibrium} (LFNE).

In particular, we consider the case where the followers are restricted to pure strategies.
This restriction is motivated by some reasons. First, the problem of finding an NE in mixed strategies, a subproblem of that of finding an LFNE, is already hard with two or more players (which clearly implies the hardness of finding an LFNE in mixed strategies)---differently from the problem of computing an NE in pure strategies, which can be solved in polynomial time. Thus, the study of the case where followers play pure strategies does not appear direct and could allow one to characterise more accurately the tractability of the problem. Secondly, many games admit pure-strategy NEs, among which potential games~\cite{monderer1996potential}, congestion games~\cite{rosenthal1973class}, and toll-setting problems~\cite{labbe2016bilevel}. The same also holds, with high probability, in many unstructured games (see Subsection~\ref{subsec:preliminary}).

\subsection{Original Contributions}

After briefly pointing out that an optimistic LFNE (with followers restricted to pure strategies) can be computed efficiently (in polynomial time) by a mixture of enumeration and linear programming, we entirerly devote the remainder of the paper to the pessimistic case (with, again, followers restricted to pure strategies). In terms of computational complexity, we show that, differently from the optimistic case, in the pessimistic case the equilibrium-finding problem 
is \textsf{NP}-hard with two or more followers and not in Poly-\textsf{APX} when the number of followers is three or more unless $\textsf{P}=\textsf{NP}$. To establish these two results, we introduce two reductions, one from Independent Set and the other one from 3-SAT.

After analysing the complexity of the problem, we focus on its algorithmic aspects.
First, we formulate the problem as a {\em pessimistic bilevel programming problem with multiple followers}. We, then, show how to recast it as a single-level Quadratically Constrained Quadratic Program (QCQP), which we show to be impractical to solve due to admitting a supremum, but not a maximum. We, then, introduce a restriction based on a Mixed-Integer Linear Program (MILP) which, while forsaking optimality, always admits an optimal (restricted) solution.
Next, we propose an exact algorithm to compute the value of the supremum of the problem, based on an enumeration scheme which, at each iteration, solves a lexicographic MILP (lex-MILP) where the two objective functions are optimised in sequence. Subsequently, we embed the enumerative algorithm within a branch-and-bound scheme---obtaining an algorithm which is, in practice, much faster. We also extend the algorithm (in both versions) so that, for cases where the supremum is not a maximum, it returns a strategy by which the leader can obtain a utility within an additive loss $\alpha$ with respect to the supremum, for an arbitrarily chosen $\alpha > 0$.
%
To conclude, we experimentally evaluate the scalability of our methods over a rich testbed of instances which is standard in game theory.

The status, in terms of complexity and known algorithms, of the problem of computing an LFNE (with followers playing pure or mixed strategies) is summarised in Table~\ref{tab:summaryofresults}. The original results we provide in this paper are reported in boldface. 

\begin{table}[t]
\caption{Summary of known results for the computation of an LFNE. The entries in boldface correspond to original contributions of this work. The number of players is denoted by $n$.}
\label{tab:summaryofresults}
\hspace{-0.1cm}\includegraphics[width=\textwidth
]{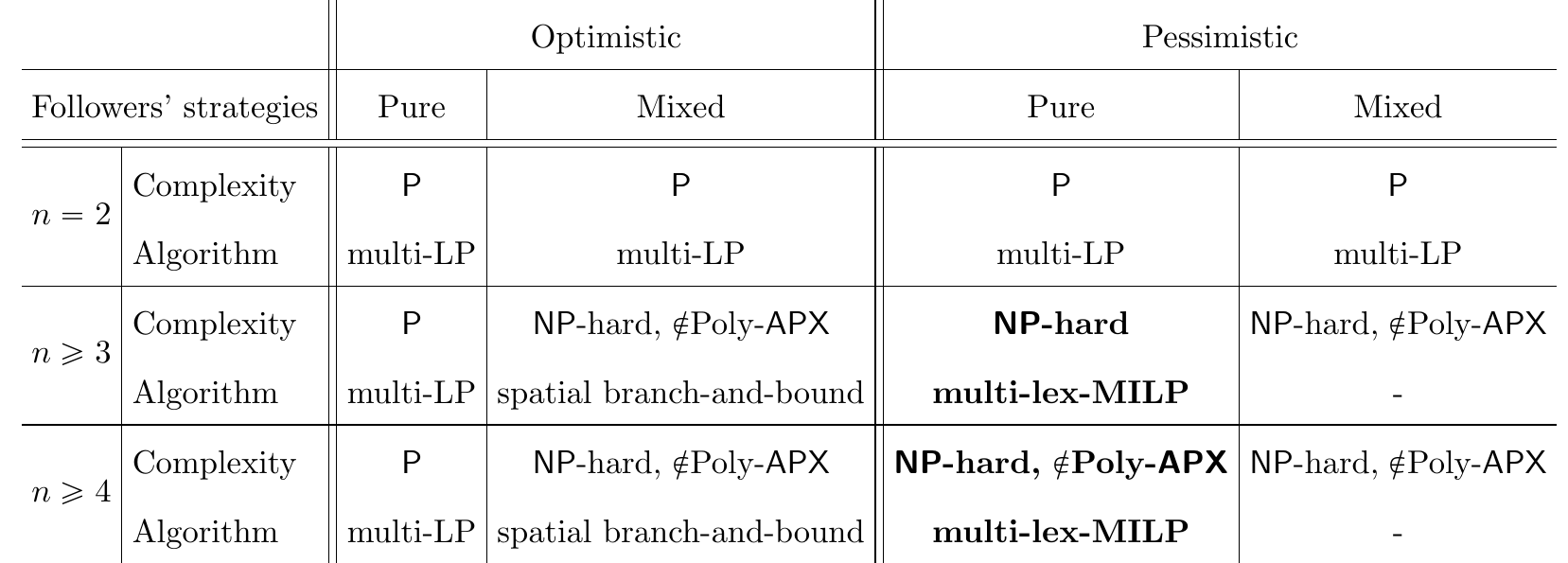}
\end{table}

\subsection{Paper Outline}

The paper is organised as follows.\footnote{A preliminary version of this work appeared in~\cite{coniglio2017pessimistic}.} Previous works are introduced in Section~\ref{sec:previouswork}. The problem we study is formally stated in Section~\ref{sec:problem}, together with some preliminary results. In Section~\ref{sec:reduction}, we present the computational complexity results. Section~\ref{sec:reform} introduces the single-level reformulation(s) of the problem, while Section~\ref{sec:exact_algorithm} describes our exact algorithm (in its two versions). An empirical evaluation of our methods is carried out in Section~\ref{sec:experiments}. Section~\ref{sec:conclusions} concludes the paper.

\section{Previous Works}\label{sec:previouswork}

As we mentioned in Section~\ref{sec:introduction}, most of the works on (normal-form) LFGs focus on the single-follower case. In such case, as shown in~\cite{Conitzer:2006}, the follower always plays a pure strategy, except for degenerate games. In the optimistic case, an LFE can be found in polynomial time by solving a Linear Program (LP) for each action of the (single) follower (the algorithm is, thus, a multi-LP). Each LP maximises the expected utility of the leader, subject to a set of constraints imposing that the given follower's action is a best-response~\cite{Conitzer:2006}. As shown in~\cite{ConitzerK11}, all these LPs can be encoded into a single LP---a slight variation of the LP that is used to compute a \emph{correlated equilibrium} (the solution concept where all the players can exploit a correlation device to coordinate their strategies).\footnote{In this case, the leader and the follower play correlated strategies under the rationality constraints imposed on the follower only, maximising the leader's expected utility.} Some works study the equilibrium-finding problem (only in the optimistic version) in structured games where the action space is combinatorial. See~\cite{DBLP:journal/ai/BasilicoNG17} for more references.

For what concerns the pessimistic single-follower case, the authors of~\cite{leaderfollower} study the problem of computing the supremum of the leader's expected utility.
They show that, for the latter, it suffices to consider the follower's actions which constitute a best-response to a full-dimensional region of the leader's strategy space. The multi-LP algorithm the authors propose solves two LPs per action of the follower, one to verify whether the best-response region for that action is full-dimensional (so to discard it if full-dimensionality does not hold) and a second one to compute the best leader's strategy within that best-response region. The algorithm runs in polynomial time. 
%
While the authors limit their analysis to computing the supremum of the leader's utility, we remark that such value does not always translate into a strategy that the leader can play as, in the general case where the leader's utility does not admit a maximum, there is no leader's strategy giving her a utility equal to the supremum. In such cases, one should rather look for a strategy providing the leader with an expected utility which approximates the value of the supremum. This aspect, which is not addressed in~\cite{leaderfollower}, will be tackled, on the multi-follower case, by our work.

The multi-follower case, which, to the best of our knowledge, has only been investigated in~\cite{basilico2016methods,basilico2017bilevel}, is computationally much harder than the single-follower case, being, in the general case with the leader and the followers entitled to mixed strategies, \textsf{NP}-hard and inapproximable, in polynomial time, to within any polynomial factors unless $\mathsf{P}=\mathsf{NP}$. In the aforementioned works, the problem of finding an equilibrium in the optimistic case is formulated as a nonlinear and nonconvex mathematical program and solved to global optimality (within a given tolerance) with spatial branch-and-bound techniques. No exact methods are proposed for the pessimistic case.

\section{Problem Statement and Preliminary Results}\label{sec:problem}

After setting the notation used throughout the paper, this section offers a formal definition of the equilibrium-finding problem we tackle in this work and illustrates some of its properties.

\subsection{Notation}

Let $N=\{1,\dots,n\}$ be the set of players and, for each player $p \in N$, let $A_p$ be her set of actions, of cardinality $m_p = |A_p|$. Let also $A = \bigtimes_{p \in N} A_p = A_1 \times \dots \times A_n$. For each player $p \in N$, let $x_p \in [0,1]^{m_p}$, with $\sum_{a_p \in A_p}x_p^{a_p} = 1$, be her {\em strategy vector} (or strategy, for short), where each component $x_{p}^{a_p}$ of $x_p$ represents the probability by which player~$p$ plays action~$a_p \in A_{p}$. For each player $p \in N$, let also $\Delta_p = \{ x_p \in [0,1]^{m_p} : \sum_{a_p \in A_p} x_p^{a_p} = 1 \}$ be the set of her strategies, or {\em strategy space}, which corresponds to the standard $(m_p-1)$-simplex in $\mathbb{R}^{m_p}$. A strategy is said {\em pure} when only one action is played with positive probability, i.e., when $x_p \in \{0,1\}^{m_p}$, and {\em mixed} otherwise. In the following, we denote the collection of strategies of the different players, or {\em strategy profile}, by $x=(x_{1}, \ldots, x_{n})$. For the case where all strategies are pure, we denote the collection of actions played by the players, or {\em action profile}, by $a = (a_1, \ldots, a_n)$.

Given a strategy profile $x$, we denote the collection of all the strategies in it but that one of player $p \in N$ by $x_{-p}$, i.e., $x_{-p}=(x_1,\ldots,x_{p-1},x_{p+1},\ldots,x_n)$. Given $x_{-p}$ and a strategy vector $x_p$, we denote the whole strategy profile $x$ by $(x_{-p},x_p)$. For action profiles, $a_{-p}$ and $(a_{-p},a_p)$ are defined analogously. For the case were all players are restricted to pure strategies, with the sole exception of player $p$, who is allowed to play mixed strategies, we use the notation $(a_{-p}, x_p)$.

We consider {\em normal-form games} where $U_p \in \mathbb{Q}^{m_1 \times \ldots \times m_n}$ represents, for each player $p \in N$, her (multidimensional) utility (or payoff) matrix, assuming, without loss of generality, entries in $[1, U_{max}]$ and $U_{max} \geq 1$. For each $p \in N$ and given an action profile $a=(a_1,\ldots,a_n)$, each component $U_p^{a_1 \ldots a_n}$ of $U_p$ corresponds to the utility of player~$p$ when all the players play the action profile~$a$. For the ease of presentation and when no ambiguity arises, we will often write, in the following,  $U_p^a$ in place of $U_p^{a_1 \ldots a_n}$ and, given a collection of actions $a_{-p}$ and an action $a_p \in A_p$, we will also use $U_p^{a_{-p},a_p}$ to denote the component of $U_p$ corresponding to the action profile $(a_{-p},a_p)$. Given a strategy profile $x=(x_1,\ldots,x_n)$, the expected utility of player~$p \in N$ is the $n$-th-degree polynomial $\sum_{a \in A} U_p^{a} x_1^{a_1} \, x_2^{a_2} \dots \, x_n^{a_n}$.

An action profile $a = (a_1, \ldots, a_n)$ is called {\em pure strategy Nash Equilibrium} (or pure NE, for short) if,
when the players in $N \setminus \{p\}$ play as the equilibrium prescribes, player $p$ cannot improve her utility by deviating from the equilibrium and playing another action $a_p' \neq a_p$, for all $p \in N$. More generally, a {\em mixed strategy Nash Equilibrium} (or mixed NE, for short) is a strategy profile $x = (x_1, \ldots ,x_n)$ such that no player $p \in N$ can improve her utility by playing a strategy $x_p' \neq x_p$, assuming the other players would play as the equilibrium prescribes. Observe that, in a normal-form game, a mixed NE always exists~\cite{Nash50}, while a pure NE may not. For more details on (noncooperative) game theory, we refer the reader to~\cite{shoham-book}.

Similar definitions hold for the case of LFGs when assuming that only a subset of players (the followers) play an NE, given the strategy the leader has committed to.

\subsection{The Problem and Its Formulation}\label{subsec:problem}

In the following, we assume that the $n$-th player takes the role of leader. We denote the set of followers (the first $n-1$ players) by $F=N \setminus \{n\}$. For the ease of notation, we also define  $A_F = \bigtimes_{p \in F} A_p$ as the set of followers' action profiles, i.e., the set of all collections of followers' actions. We also assume, unless otherwise stated, $m_p=m$ for every player $p \in N$, where $m$ denotes the number of actions available to each player. This is without loss of generality, as one can always introduce additional actions with a utility small enough to guarantee that they would never be played, so to obtain a game where each player has the same number of actions.

As we mentioned in Section~\ref{sec:introduction}, we tackle, in this work, the problem of computing an equilibrium in a normal-form game where the followers play a pure NE once observed the leader's commitment to a mixed strategy. We refer to an {\em Optimistic Leader-Follower Pure Nash Equilibrium} (O-LFPNE) when the followers play a pure NE which maximises the leader's utility, and to a {\em Pessimistic Leader-Follower Pure Nash Equilibrium} (P-LFPNE) when they seek a pure NE by which the leader's utility is minimised.

\subsubsection{The Optimistic Case}

Before focusing our attention entirely on the pessimistic case, let us briefly address the optimistic one.

An O-LFPNE can be found by solving the following {\em bilevel programming problem with $n-1$ followers}:

\everymath{\displaystyle}
\begin{equation} \label{problem:opt}
  \begin{array}{llllr}
  \max_{\substack{x_n, x_{-n}}} & \multicolumn{3}{l}{\sum_{a \in A} U_n^{a} x_1^{a_1} \, x_2^{a_2} \dots \, x_n^{a_n}}\\
    \text{s.t.}           &  x_n \in & \multicolumn{2}{l}{\Delta_n}\\
                          &  x_p \in & \argmax_{x_p} & \sum_{a \in A} U_p^{a} x_1^{a_1} \, x_2^{a_2} \dots \, x_n^{a_n} & \quad \forall p \in F\\
                          &          & \text{s.t.} & x_p \in \Delta_p \cap \{0,1\}^{m_p}.
  \end{array}
\end{equation} 
\everymath{\textstyle}

\noindent Note that, due to the integrality constraints on $x_p$ for all $p \in F$, each follower can play a single action with probability 1. By imposing the $\argmax$ constraint for each $p \in F$, the formulation guarantees that each follower plays a best-response action $a_p$, thus guaranteeing that the action profile $a_{-n} = (a_1, \dots, a_{n-1})$ with, for all $a_p \in A_p$, $a_p = 1$ if and only if $x_p^{a_p} = 1$, be an NE for the given $x_n$. It is crucial to note that the maximisation in the upper level is carried out not only w.r.t. $x_n$, but also w.r.t. $x_{-n}$. This way, if, for the chosen $x_n$, the followers' game admits multiple NEs, optimal solutions to Problem~\eqref{problem:opt} are guaranteed to contain followers' action profiles which maximise the leader's utility---thus satisfying the assumption of optimism.

As easily shown in the following proposition, computing an O-LFPNE is an easy task:
\begin{proposition}
In a normal-form game, an O-LFPNE can be computed in polynomial time by solving a multi-LP.
\end{proposition}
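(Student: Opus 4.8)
The plan is to exploit the structure of Problem~\eqref{problem:opt}: since each follower $p \in F$ is restricted to pure strategies, the followers collectively choose one action profile $a_{-n} \in A_F$. First I would enumerate all such followers' action profiles $a_{-n} = (a_1,\dots,a_{n-1})$; there are $|A_F| = m^{n-1}$ of them, which is polynomial whenever the number of players $n$ is fixed. For each fixed candidate profile $a_{-n}$, the only remaining decision variable is the leader's mixed strategy $x_n \in \Delta_n$, and the task reduces to deciding whether $a_{-n}$ can be sustained as a pure NE of the followers' game induced by $x_n$, and, if so, to finding the leader's strategy within that region which maximises her own expected utility.

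The key observation is that, once $a_{-n}$ is fixed, all the equilibrium and objective expressions become \emph{linear} in $x_n$. Indeed, for each follower $p \in F$, the best-response (no-deviation) requirement that action $a_p$ weakly dominates every alternative $a_p' \in A_p$, given that the other followers play $a_{-p}$ (restricted to followers) and the leader plays $x_n$, reads $\sum_{a_n \in A_n} U_p^{a_{-p},a_p} x_n^{a_n} \ge \sum_{a_n \in A_n} U_p^{a_{-p},a_p'} x_n^{a_n}$. Each such inequality is linear in $x_n$ because all follower actions are held fixed. Collecting these constraints over all $p \in F$ and all deviations $a_p'$, together with $x_n \in \Delta_n$, yields a polyhedron in $x_n$. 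Maximising the leader's expected utility $\sum_{a_n \in A_n} U_n^{a_{-n},a_n} x_n^{a_n}$, which is likewise linear in $x_n$ once $a_{-n}$ is fixed, over this polyhedron is a linear program of polynomial size. I would solve one such LP per action profile $a_{-n}$ (discarding infeasible ones, for which $a_{-n}$ is not supportable as an NE for any $x_n$) and return the profile, together with its associated $x_n^\star$, attaining the largest leader utility across all LPs.

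Correctness follows because the enumeration ranges over every possible followers' pure action profile, and for each the LP exactly characterises the set of leader strategies sustaining it as a pure NE; taking the maximum over all feasible LPs therefore computes the global optimum of Problem~\eqref{problem:opt}. Crucially, the optimism assumption is handled for free: since we maximise the leader's utility jointly over the choice of supportable $a_{-n}$ and over $x_n$, whenever multiple equilibria coexist for a given leader strategy the procedure automatically selects the one most favourable to the leader, exactly as the upper-level maximisation in~\eqref{problem:opt} over both $x_n$ and $x_{-n}$ prescribes.

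The main obstacle, and the only point requiring care, is the complexity bookkeeping: the algorithm solves $m^{n-1}$ linear programs, so it runs in polynomial time \emph{only when $n$ is treated as a constant}, which is the standard assumption for normal-form games of fixed size. I would make this explicit, noting that each individual LP has $m$ variables and $O(n\,m)$ constraints and is thus solvable in polynomial time, so that the overall running time is polynomial in the size of the game once the number of players is regarded as fixed, establishing that the multi-LP scheme computes an O-LFPNE efficiently.
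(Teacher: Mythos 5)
Your proposal is correct and follows essentially the same route as the paper: enumerate all $m^{n-1}$ followers' action profiles, for each one solve a single LP that both checks whether some leader strategy $x_n$ sustains the profile as a pure NE and maximises the leader's linear objective over the resulting polyhedron, and return the best profile--strategy pair; the optimism assumption is indeed handled automatically by taking the maximum over all feasible LPs. The one point where you diverge from the paper is the final complexity bookkeeping, and there your hedge is unnecessarily weak: you claim polynomiality holds ``only when $n$ is treated as a constant,'' whereas the paper observes that the normal-form representation itself has size $O(m^n \lceil \log U_{max} \rceil)$ (each utility matrix has $m^n$ entries), so the $m^{n-1}$ followers' profiles can be enumerated, and the corresponding LPs solved, in time polynomial in the \emph{input size} even when $n$ is part of the input---no fixed-$n$ assumption is needed.
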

\begin{proof}
It suffices to enumerate, in $O(m^{n-1})$, all the followers' action profiles $a_{-n} \in A_F$ and, for each of them, solve an LP to {\em i}. check whether there is a strategy vector $x_n$ for the leader for which the action profile $a_{-n}$ is an NE and {\em ii}. find, among all such strategy vectors $x_n$, one which maximises the leader's utility. The action profile $a_{-n}$ which, with the corresponding $x_n$, yields the largest expected utility for the leader is an O-LFPNE.

Given a followers' action profile $a_{-n}$, {\em i} and {\em ii}
can be carried out in polynomial time by solving the following LP, where the second constraint guarantees that, for any of its solutions $x_n$, $a_{-n} = (a_1, \dots, a_{n-1})$ is a pure NE for the followers' game:
{\everymath{\displaystyle}
\begin{equation*}
\begin{array}{rlll}
\max_{x_n}         & \sum_{a_n \in A_n} U_n^{a_{-n},a_n} x_n^{a_n}\\
\text{s.t.} & \sum_{a_n \in A_n} U_p^{a_{-n},a_n} x_n^{a_n} \geq \sum_{a_n \in A_n} U_p^{a_1 \dots a_p' \dots a_{n-1} a_n} x_n^{a_n} & \forall p \in F, a_p' \in A_p \setminus \{a_p\}\\
             & x_n \in \Delta_n.
\end{array}
\end{equation*}
}

Note that, assuming utilities in $[1,U_{max}]$ and a binary encoding, the size of an instance of the problem is $O(m^n \lceil \log U_{max} \rceil)$ and, thus, the followers' action profiles can be enumerated in polynomial time. The claim of polynomiality of the overall algorithm follows due to linear programming problems being solvable in polynomial time. \qed
\end{proof} 



\subsubsection{The Pessimistic Case}\label{subsubsec:pess}

In the pessimistic case, the computation of a P-LFPNE amounts to solving the following {\em pessimistic bilevel problem with $n-1$ followers}:

\everymath{\displaystyle}
\begin{equation} \label{problem:pess}
  \begin{array}{llllr}
  \sup_{x_n} \min_{x_{-n}}   & \multicolumn{3}{l}{\sum_{a \in A} U_n^{a} x_1^{a_1} \, x_2^{a_2} \dots \, x_n^{a_n}}\\
    \text{s.t.}           &  x_n \in & \multicolumn{2}{l}{\Delta_n}\\
                          &  x_p \in & \argmax_{x_p} & \sum_{a \in A} U_p^{a} x_1^{a_1} \, x_2^{a_2} \dots \, x_n^{a_n} & \quad \forall p \in F\\
                          &          & \text{s.t.} & x_p \in \Delta_p \cap \{0,1\}^{m_p}.
  \end{array}
\end{equation} 
\everymath{\textstyle}

\noindent There are two differences between this problem and its optimistic counterpart: the presence of the $\min$ operator in the objective function and the fact that, rather than for a $\max$, Problem~\eqref{problem:pess} calls for a $\sup$. The former guarantees that, in the presence of more pure NEs in the followers' game for the chosen $x_n$, one which minimises the leader's utility is selected. The $\sup$ operator is introduced dbecause, as illustrated in Subsection~\ref{subsec:preliminary}, the pessimistic problem does not admit, in the general case, a maximum.

Throughout the paper, we will compactly refer to the above problem as
$$\sup_{x_n \in \Delta_n} f(x_n),$$
where $f$ is the leader's utility in the pessimistic case, defined as a function of~$x_n$. Since a pure NE may not exist for every leader's strategy $x_n$, we define $\sup_{x_n \in \Delta_n} f(x_n) = - \infty$ whenever there is no $x_n$ such that the resulting followers' game admits a pure NE. Note that $f$ is always bounded from above when assuming bounded payoffs and, thus, $\sup_{x_n \in \Delta_n} f(x_n) < \infty$.

\subsection{Some Preliminary Results}\label{subsec:preliminary}
As it is clear, since not all normal-form games admit a pure NE, a normal-form game may not admit an LFPNE. Nevertheless, assuming that the payoffs of the game are independent and follow a uniform distribution, a leader's commitment such that the resulting followers' game has at least one pure NE exists with high probability, provided that the number of players' actions is sufficiently large. This is shown in the following proposition:
\begin{proposition}
Given a normal-form game with $n$ players with independent and uniformly distributed payoffs, the probability that there exists a leader's strategy $x_n \in \Delta_n$ inducing at least one pure NE in the followers' game approaches $1$ as the number of players' actions~$m$ goes to infinity.   
\end{proposition}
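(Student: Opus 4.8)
The plan is to prove existence by looking only at the $m$ \emph{pure} leader actions, since these are vertices of $\Delta_n$ and so establishing a pure NE in the followers' game for one of them is sufficient. If the leader commits to the pure strategy placing all mass on action $j$, then follower $p$'s payoff at a followers' profile $a_{-n} \in A_F$ collapses to the single entry $U_p^{a_{-n}, j}$, and the induced followers' game is an ordinary $(n-1)$-player normal-form game with $m$ actions per player whose entries are i.i.d.\ uniform. Thus it suffices to show that, with probability tending to $1$, at least one of these $m$ induced games admits a pure NE.

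The key observation is that the $m$ games induced by the distinct pure leader actions are \emph{mutually independent}: the game associated with leader action $j$ depends only on the slice $\{U_p^{a_{-n}, j} : p \in F,\, a_{-n} \in A_F\}$ of the payoff tensors, and these slices are pairwise disjoint across $j$ (and across followers). Hence, writing $p_m$ for the probability that a single random $(n-1)$-player $m$-action game has a pure NE, the probability that \emph{none} of the $m$ induced games has one equals $(1-p_m)^m$. Since $(1-p_m)^m \to 0$ whenever $p_m \ge c > 0$, the proof reduces to bounding $p_m$ away from $0$ for large $m$.

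To lower-bound $p_m$, let $X$ be the number of pure NE of a single such game, expressed as a sum over the $m^{n-1}$ followers' profiles of the indicator that the profile is a pure NE. A profile is a pure NE iff each of the $n-1$ followers best-responds; as these events involve the disjoint matrices $U_1, \dots, U_{n-1}$, they are independent, each of probability $1/m$ (by exchangeability of the i.i.d.\ entries, ties occurring with probability $0$). Therefore $\mathbb{E}[X] = m^{n-1} \cdot m^{-(n-1)} = 1$ for every $m$. I would then apply the Paley--Zygmund inequality $p_m = \Pr[X \ge 1] \ge \mathbb{E}[X]^2 / \mathbb{E}[X^2]$, after checking that $\mathbb{E}[X^2]$ remains bounded (in fact it converges to $2$), yielding $p_m \ge c$ for a constant $c > 0$ and all large $m$; equivalently, one may invoke the known convergence in distribution of the number of pure NE of a random game to a Poisson$(1)$ variable, which gives $p_m \to 1 - e^{-1}$.

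The main obstacle is the second-moment estimate. Bounding $\mathbb{E}[X^2] = \sum_{a,a'} \Pr[a \text{ and } a' \text{ are both pure NE}]$ requires grouping the ordered pairs of profiles by the set of coordinates on which $a$ and $a'$ agree and showing that each group's contribution stays bounded as $m \to \infty$; the correlations that arise when two profiles share some followers' actions are the delicate point. By contrast, the first-moment computation, the reduction to pure leader actions, and the independence-across-leader-actions argument are all routine.
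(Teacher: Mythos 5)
Your proposal is correct, and its skeleton coincides with the paper's: restrict attention to the $m$ pure leader commitments, observe that the induced followers' games are determined by pairwise disjoint slices of the i.i.d.\ payoff tensors and hence are mutually independent, and conclude that the failure probability $(1-p_m)^m$ vanishes once $p_m$ is bounded away from zero. Where you genuinely diverge is in how that bound is obtained: the paper simply cites~\cite{stanford1995note} for the fact that the probability $\mathcal{P}(m)$ that a random game admits a pure NE tends to $1-\frac{1}{e}$ (your fallback ``invoke the Poisson$(1)$ limit'' is exactly the paper's route), whereas your primary route derives a self-contained lower bound by the second-moment method. This buys independence from the literature and makes explicit that the sharp constant $1-\frac{1}{e}$ is irrelevant---any uniform positive lower bound suffices, and the argument works for any atomless payoff distribution. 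Moreover, the step you flag as the main obstacle closes cleanly. Split the ordered pairs $(a,a')$ of distinct followers' profiles by the number of coordinates in which they differ. If they differ in exactly one coordinate, say that of follower $q$, then both profiles being NEs would force two distinct entries of the same line of $U_q$ (the $m$ payoffs of $q$ against the common $a_{-q}=a'_{-q}$) to be simultaneously maximal, an event of probability zero under a continuous distribution. If they differ in at least two coordinates, then $a_{-p}\neq a'_{-p}$ for \emph{every} follower $p$, so the $2(n-1)$ best-response events involve pairwise disjoint sets of payoff entries and are independent, each of probability $\frac{1}{m}$. Hence $\mathbb{E}[X^2]\leq 1+m^{n-1}\bigl(m^{n-1}-1\bigr)m^{-2(n-1)}\leq 2$ for all $m$ (with limit $2$, matching the Poisson intuition), and Paley--Zygmund yields $p_m\geq \frac{1}{\mathbb{E}[X^2]}\geq \frac{1}{2}$, which is all your product argument needs. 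One further small point in your favour: you justify the independence of the $m$ leader-slice events explicitly via disjointness of the slices, a step the paper's proof uses implicitly when it writes the probability of joint failure as the product $\prod_{a_n}\bigl(1-\mathcal{P}_{a_n}(m)\bigr)$.
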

\begin{proof}
In a normal-form game with independent and uniformly distributed payoffs with $n$ players, as shown in~\cite{stanford1995note}, the probability of the existence of at least one pure NE can be expressed as a function of the number of players' actions~$m$, say $\mathcal{P}(m)$, which approaches $1 - \frac{1}{e}$ for $m \rightarrow \infty$. Suppose now that we are given one such $n$-player normal-form game. Then, for every leader's action $a_n \in A_n$, let $\mathcal{P}_{a_n}(m)$ be the probability that the followers' game induced by the leader's action $a_n$ admits at least a pure NE. Since each of the followers' games resulting from the choice of $a_n$ also has independent and uniformly distributed payoffs, all the probabilities are equal, i.e., $\mathcal{P}_{a_n}(m) = \mathcal{P}(m)$ for every $a_n \in A_n$. It follows that the probability that at least one of such followers' games admits a pure NE is: 
$$
1 - \prod_{a_n \in A_n} \left( 1 - \mathcal{P}_{a_n}(m) \right)= 1 - \left( 1 - \mathcal{P}(m) \right)^m.
$$
Since, as $m$ goes to infinity, this probability approaches $1$, the probability of the existence of a leader's strategy $x_n \in \Delta_n$ which induces at least one pure NE in the followers' game also approaches $1$ for $m \rightarrow \infty$. \qed
\end{proof}

The fact that Problem~\eqref{problem:pess}
may not admit a maximum is shown by the following proposition:
\begin{proposition}\label{prop:nonexistence}
In a normal-form game, Problem~\eqref{problem:pess} may not admit a $\max$ even if the followers' game admits a pure NE for any leader's mixed strategy~$x_n$.
\end{proposition}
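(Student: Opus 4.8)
The plan is to prove the statement by exhibiting an explicit counterexample: a small normal-form game in which a pure NE of the followers' game exists for \emph{every} leader's commitment, yet the pessimistic value $\sup_{x_n \in \Delta_n} f(x_n)$ is approached only in the limit and is attained by no $x_n$. First I would take the smallest possible instance, namely $n=2$ (one leader, player~$2$, and one follower, player~$1$), with two actions each. With a single follower the hypothesis on the existence of a pure NE is automatic, since the followers' game degenerates into a single-agent maximisation whose best-response set is always nonempty; this lets me concentrate entirely on engineering the discontinuity that produces the non-attainment.

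Next, I would parametrise the leader's mixed strategy by $p = x_2^1 \in [0,1]$ (so that she plays her second action with probability $1-p$) and pick the follower's payoffs so that her unique best response is one action, say $a_1=2$, for $p < \tfrac12$ and the other, $a_1=1$, for $p > \tfrac12$, with a tie (two pure NEs) exactly at $p=\tfrac12$. Concretely, the choice $U_1^{1,1}=2$, $U_1^{1,2}=1$, $U_1^{2,1}=1$, $U_1^{2,2}=2$ yields follower payoffs $1+p$ and $2-p$ for the two actions, placing the switch at $p=\tfrac12$. I would then set the leader's payoffs so that, on the branch where the follower plays $a_1=1$, the leader's utility is a strictly decreasing affine function of $p$ approaching a value $V$ as $p \downarrow \tfrac12$ (e.g. $U_2^{1,1}=1$, $U_2^{1,2}=3$, giving utility $3-2p \to 2$), while $a_1=2$ yields a strictly smaller constant $W<V$ (e.g. $U_2^{2,1}=U_2^{2,2}=1$, giving utility $1$). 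All entries lie in $[1,U_{max}]$, as required.

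With these choices the pessimistic utility becomes $f(p)=1$ for $p \le \tfrac12$ --- because at $p=\tfrac12$ the tie lets the adversarial follower select $a_1=2$, which is worse for the leader --- and $f(p)=3-2p \in [1,2)$ for $p > \tfrac12$. Hence $\sup_{p} f(p)=2$ is approached as $p \downarrow \tfrac12$ but is attained by no $p$: at $p=\tfrac12$ the pessimistic selection collapses the value to $1$, and for every $p>\tfrac12$ one has $f(p)<2$ strictly. Since a pure NE exists for every leader's strategy, this establishes that Problem~\eqref{problem:pess} admits a supremum but not a maximum.

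The hard part, and the only genuinely delicate point, will be orienting the discontinuity correctly: the leader's utility on the branch where the leader-favourable follower action is played must \emph{increase} toward the tie point, while pessimistic tie-breaking at that very point forces the leader-unfavourable action, producing a downward jump exactly where the supremum would otherwise be achieved. Once the payoffs are chosen to realise this single jump, verifying the piecewise form of $f$ and the non-attainment of its supremum reduces to elementary affine computations.
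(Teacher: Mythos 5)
Your proof is correct: the piecewise computation of $f(p)$ (equal to $1$ for $p \le \tfrac12$ and $3-2p$ for $p > \tfrac12$) is right, the supremum $2$ is approached only as $p \downarrow \tfrac12$, and the tie at $p=\tfrac12$ is broken pessimistically to the value $1$, so no maximum exists. However, you take a genuinely different route from the paper. You instantiate Problem~\eqref{problem:pess} with $n=2$, i.e.\ a \emph{single} follower, where the hypothesis ``the followers' game admits a pure NE for every $x_n$'' holds trivially because a best-response set is always nonempty; the paper instead constructs a three-player game ($n=3$, two followers) in which the leader's mixing weight $\rho$ controls whether the followers' $2\times 2$ game has one pure NE or two, with the additional, leader-worse equilibrium $(a_1^2,a_2^1)$ appearing exactly at $\rho=\tfrac12$, yielding a supremum of $5+\tfrac52$ that is not attained. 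The underlying mechanism is the same in both arguments (a tie-induced downward jump of $f$ at the threshold, with the supremum approached one-sidedly), and your example is more elementary to verify. What the paper's construction buys is twofold: first, it exhibits the phenomenon \emph{within} the multi-follower setting that is the paper's actual subject---note that the single-follower non-attainment you rediscover is explicitly acknowledged in the paper as already known from the literature on the pessimistic single-follower case, and the surrounding development (e.g.\ Definition~\ref{def:ourprob} requires $n\ge 3$) treats $n\ge 3$ as the regime of interest, so a referee reading the proposition in context would likely want at least two followers; second, the same three-player game is reused verbatim in the proof of Proposition~\ref{prop:unbounded} to show unboundedness of the dual variables in the QCQP reformulation, so the paper gets double duty out of it. Your example formally satisfies the literal statement since Problem~\eqref{problem:pess} is defined for general $n$, and if one insists on the multi-follower reading it can be lifted to $n=3$ with a one-line patch: add a dummy second follower with a strictly dominant action whose presence leaves all the affine computations, and hence the jump at $p=\tfrac12$, unchanged.
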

\begin{proof}
Consider a game with $n=3$, $A_1 = \{a_1^1,a_1^2\}$, $A_2 = \{a_2^1, a_2^2\}$, $A_3 = \{a_3^1,a_3^2\}$. The matrices reported in the following are the utility matrices for, respectively, the case where the leader plays action $a_3^1$ with probability 1, action $a_3^2$ with probability 1, or the strategy vector $x_3 = (1-\rho, \rho)$ for some $\rho \in [0,1]$ (the third matrix is the convex combination of the first two with weights $x_3$):

\hspace{-0.6cm}\begin{minipage}{0.30\textwidth}
		{\renewcommand{\arraystretch}{2}
			\begin{tabular}{r|c|c|}
				\multicolumn{1}{r}{}
				&  \multicolumn{1}{c}{$a_2^1$}
				& \multicolumn{1}{c}{$a_2^2$} \\
				\cline{2-3}
				$a_1^1$ & 1,1,0 & 2,2,5 \\
				\cline{2-3}
				$a_1^2$ & $\frac{1}{2}$,$\frac{1}{2}$,1 & 1,1,0 \\
				\cline{2-3}
				\multicolumn{1}{r}{}
				& \multicolumn{2}{c}{$ a_3^1 $}
			\end{tabular}}
\end{minipage}
\begin{minipage}{0.30\textwidth}
		{\renewcommand{\arraystretch}{2}
			\begin{tabular}{r|c|c|}
				\multicolumn{1}{r}{}
				&  \multicolumn{1}{c}{$ a_2^1 $}
				& \multicolumn{1}{c}{$ a_2^2 $} \\
				\cline{2-3}
				$ a_1^1 $ & 0,0,0 & 2,2,10 \\
				\cline{2-3}
				$ a_1^2 $ & $\frac{1}{2}$,$\frac{1}{2}$,1 & 0,0,0 \\
				\cline{2-3}
				\multicolumn{1}{r}{}
				& \multicolumn{2}{c}{$ a_3^2 $}
			\end{tabular}}
\end{minipage}
\begin{minipage}{0.30\textwidth}
		{\renewcommand{\arraystretch}{2}
			\begin{tabular}{r|c|c|}
				\multicolumn{1}{r}{}
				&  \multicolumn{1}{c}{$ a_2^1 $}
				& \multicolumn{1}{c}{$ a_2^2 $} \\
				\cline{2-3}
				$ a_1^1 $ & $1 \sminus \rho$,$1 \sminus \rho$,0 & 2,2,$5 \splus 5\rho$ \\
				\cline{2-3}
				$ a_1^2 $ & $\frac{1}{2}$,$\frac{1}{2}$,1 & $1 \sminus \rho$,$1 \sminus \rho$,0 \\
				\cline{2-3}
				\multicolumn{1}{r}{}
				& \multicolumn{2}{c}{$x_3 = (1-\rho,\rho)$}
			\end{tabular}}
\end{minipage}

In the optimistic case, as it is easy to verify, $(a_1^1,a_2^2,a_3^2)$ is the unique O-LFPNE (as it achieves the largest leader's payoff in $U_3$, a mixed strategy $x_3$ would not yield a better utility).

In the pessimistic case,
by playing $x_3 = (1-\rho,\rho)$, the leader induces the followers' game in the third matrix. For $ \rho < \frac{1}{2} $, $(a_1^1,a_2^2)$ is the unique NE, giving the leader a utility of $5+5\rho $. For $\rho \geq \frac{1}{2} $, there are two NEs, $(a_1^1,a_2^2)$ and $ (a_1^2,a_2^1)$, with a utility of, respectively, $5+5\rho$ and 1. Since, in the pessimistic case, the latter is selected, we conclude that the leader's utility is equal to $5+5 \rho$ for $\rho < \frac{1}{2}$ and to 1 for $\rho \geq \frac{1}{2}$ (see Figure~\ref{fig:nonexistence} for an illustration). Thus, Problem~\eqref{problem:pess} admits a supremum of value $5+\frac{5}{2}$, but not a maximum. \qed
\end{proof}
We remark that the result in Proposition~\ref{prop:nonexistence} is in line with a similar result
shown in~\cite{leaderfollower} for the single-follower case, as well as those which hold for more general pessimistic bilevel problems~\cite{zemkoho2016}.

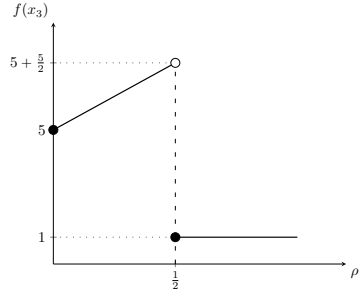
\begin{figure}
\begin{center}
		 \resizebox{.4\textwidth}{!}{%
                         \begin{tikzpicture}
	\begin{axis}[ 
	    axis x line=center,
	    axis y line=center,
	    xtick={0.5},
	    xticklabels={$ \frac{1}{2} $},
	    ytick={1,5,7.5},
	    yticklabels={1,5,$ 5 + \frac{5}{2} $},
	    xlabel={$ \rho $},
	    ylabel={$ f(x_3)$},
	    every axis/.append style={font=\normalsize},
	    xlabel style={font=\normalsize, below right},
	    ylabel style={font=\normalsize, above left},
	    xmin=0,
	    xmax=1.2,
	    ymin=0,
	    ymax=9]
	  \addplot[color=black,thick] coordinates { (0,5) (0.5,7.5) };
	  \addplot[color=black,thick] coordinates { (0.5,1) (1,1) };
	  \addplot[only marks,mark=*,mark size=3pt] coordinates { (0,5) (0.5,1) };
	  \addplot[only marks,mark=*,mark size=3pt,mark options={fill=white}] coordinates { (0.5,7.5) };
	  \addplot[color=black,style=loosely dashed] coordinates { (0.5,0) (0.5,7.5) };
	  \addplot[color=black,style=loosely dotted] coordinates { (0,7.5) (0.5,7.5) };
	  \addplot[color=black,style=loosely dotted] coordinates { (0,1) (0.5,1) };
	\end{axis}
      \end{tikzpicture}
}%
\end{center}
\caption{The leader's utility in the normal-form game in the proof of Proposition~\ref{prop:nonexistence}, showing that Problem~\eqref{problem:pess} may not admit a maximum.}
\label{fig:nonexistence}
\end{figure}

The relevance of computing a pessimistic LFPNE is highlighted by the following proposition:
\begin{proposition}\label{prop:arbitrarily}
In normal-form games, the leader's utility in a P-LFPNE can be arbitrarily worse than that in an O-LFPNE. Moreover, the utility that is obtained after perturbing the leader's strategy in an O-LFPNE can be arbitrarily worse than that one in a P-LFPNE.
\end{proposition}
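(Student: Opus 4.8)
The plan is to prove both statements by exhibiting parametric families of normal-form games in which a single scalar parameter drives the relevant utility gap to infinity. Since payoffs are assumed to lie in $[1,U_{max}]$, here ``arbitrarily worse'' is understood as: for every bound $K>0$ there is an instance (with $U_{max}$ large enough) in which the gap exceeds $K$. Both constructions exploit the mechanism already visible in the proof of Proposition~\ref{prop:nonexistence}, namely that for a fixed leader's commitment the followers' game may possess several pure NEs with widely different leader payoffs, so that optimistic and pessimistic tie-breaking can diverge sharply.

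For the first statement I would build on the coordination structure among the followers used in Proposition~\ref{prop:nonexistence}, rescaling the payoffs by a parameter $M$. Concretely, I would design the game so that, at the leader's utility-maximising commitment $x_n^*$, the induced followers' game admits a \emph{favorable} pure NE giving the leader utility $\approx M$ and an \emph{unfavorable} one giving utility $\approx 1$, and so that wherever the favorable profile is an equilibrium the unfavorable one is an equilibrium as well (while the complementary region of $\Delta_n$ yields only low leader utility). Under optimism the leader extracts $\approx M$, whereas under pessimism the unfavorable profile is always selected, capping the value of Problem~\eqref{problem:pess} at $\approx 1$. The resulting gap $\approx M-1$ then diverges as $M \to \infty$.

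For the second statement I would construct a family in which an O-LFPNE commits to a strategy $x_n^*$ lying exactly on the indifference boundary between the followers' best-response regions. At $x_n^*$ optimistic tie-breaking yields $\approx M$, but perturbing $x_n^*$ toward one side makes the followers \emph{strictly} switch to an action profile giving the leader only $\approx 1$; crucially, this drop is a genuine discontinuity caused by the followers changing their best response, not by continuity of the leader's payoff, so it persists for arbitrarily small perturbations. Since $x_n^*$ lies in the interior of $\Delta_n$, such a perturbation direction is available. At the same time I would reserve a disjoint region of $\Delta_n$ on which the unique pure NE gives the leader a high value $H$ (scaling with $M$, with $M>H$ so that $x_n^*$ remains the optimistic maximiser), so that the P-LFPNE value is $\approx H$. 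The perturbed O-LFPNE utility $\approx 1$ is then worse than the P-LFPNE value $\approx H$ by $\approx H-1 \to \infty$, which proves the claim (recall that ``can be'' only requires the existence of a single bad perturbation direction).

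The routine part of both arguments is verifying, region by region in $x_n$, which followers' action profiles satisfy the pure-NE best-response inequalities, and confirming the claimed optimistic and pessimistic selections; these are finitely many linear checks. The main obstacle is the engineering of the second family: I must make the favorable optimistic outcome \emph{entirely} non-robust---so that it evaporates under any perturbation in at least one direction, forcing a low-utility follower response---while, in a separate part of the strategy space, simultaneously guaranteeing a robust, high-utility pessimistic commitment. Balancing these two requirements within a single game, and keeping the indifference point an actual maximiser of the optimistic objective, is the delicate step; once the payoff matrices are fixed, taking $M \to \infty$ closes both parts.
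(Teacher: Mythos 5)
Your proposal is correct and takes essentially the same route as the paper's proof: a payoff-scaled parametric family in which a low-utility pure NE coexists wherever the high-utility one exists (so pessimistic tie-breaking caps the leader's value), while the optimistic commitment sits on a knife-edge where the favorable NE evaporates under perturbation, and the scaling parameter drives both gaps to infinity. The paper realises both claims at once with a single $2\times 2\times 2$ game parameterised by $\mu$ --- the unfavorable NE $(a_1^2,a_2^1)$ exists for every $\rho$ with leader payoff $1+\rho(\mu-1)$, simultaneously providing the pessimistic cap and the robust P-LFPNE at $\rho=1$ (value $\mu$), while the favorable NE $(a_1^1,a_2^2)$ exists only at $\rho=\frac{1}{2}$ with payoff $2\mu$ --- so your two separately engineered families collapse into one game, and the ``delicate'' construction you flag is already witnessed feasible by that example.
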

\begin{proof}
Consider the following normal-form game, with $n=3$, $A_1=\{a_1^1,a_1^2\}$, $A_2=\{a_2^1,a_2^2\}$, $A_3=\{a_3^1,a_3^2\}$, parameterised by $\mu > 1$:
    
\hspace{-0.6cm}\begin{minipage}{0.30\textwidth}
  {\renewcommand{\arraystretch}{2}
    \begin{tabular}{r|c|c|}
      \multicolumn{1}{r}{}
      &  \multicolumn{1}{c}{$a_2^1$}
      & \multicolumn{1}{c}{$a_2^2$} \\
      \cline{2-3}
      $a_1^1$ & 1,1,0 & $\frac{1}{2}$,$\frac{1}{2}$,0 \\
      \cline{2-3}
      $a_1^2$ & 2,2,1 & 0,0,0 \\
      \cline{2-3}
      \multicolumn{1}{r}{}
      & \multicolumn{2}{c}{$ a_3^1 $}
  \end{tabular}}
\end{minipage}
\begin{minipage}{0.30\textwidth}
  {\renewcommand{\arraystretch}{2}
      \begin{tabular}{r|c|c|}
	\multicolumn{1}{r}{}
	&  \multicolumn{1}{c}{$ a_2^1 $}
	& \multicolumn{1}{c}{$ a_2^2 $} \\
	\cline{2-3}
	$ a_1^1 $ & 0,0,0 & $\frac{1}{2}$,$\frac{1}{2}$,$4 \mu$ \\
	\cline{2-3}
	$ a_1^2 $ & 2,2,$\mu$ & 1,1,0 \\
	\cline{2-3}
	\multicolumn{1}{r}{}
	& \multicolumn{2}{c}{$ a_3^2 $}
			\end{tabular}}
\end{minipage}
\begin{minipage}{0.30\textwidth}
  {\renewcommand{\arraystretch}{2}
    \begin{tabular}{r|c|c|}
      \multicolumn{1}{r}{}
      &  \multicolumn{1}{c}{$ a_2^1 $}
      & \multicolumn{1}{c}{$ a_2^2 $} \\
      \cline{2-3}
      $ a_1^1 $ & $1 \sminus \rho$,$1 \sminus \rho$,0 & $\frac{1}{2}$,$\frac{1}{2}$,$4 \mu \rho$ \\
      \cline{2-3}
      $ a_1^2 $ & 2,2,$1 \splus \rho(\mu \sminus 1)$  & $\rho$,$\rho$,0 \\
      \cline{2-3}
      \multicolumn{1}{r}{}
      & \multicolumn{2}{c}{$x_3 = (1 \sminus \rho,\rho)$}
  \end{tabular}}
\end{minipage}

Let $x_3 = (1-\rho,\rho)$. The followers' game admits the NE $(a_1^2, a_2^1)$ for all values of $\rho$ (with leader's utility $1+\rho(\mu-1)$), as well as a second one, $(a_1^1, a_2^2)$, for $\rho = \frac{1}{2}$ (with leader's utility $2 \mu$). Therefore, the game admits a unique O-LFPNE, achieved at $\rho = \frac{1}{2}$ (utility $2\mu$), and a unique P-LFPNE, achieved at $\rho=1$ (utility $\mu$). See Figure~\ref{fig:perturbation_plot} for an illustration of the leader's utility function.

To show the first part of the claim, it suffices to observe that, by letting $\mu \rightarrow \infty$, the difference in utility between O-LFPNE and P-LFPNE, equal to $\mu$, becomes arbitrarily large.

As to the second part of the claim, note that after perturbing the value that $x_3$ takes in the unique O-LFPNE by any $\epsilon \in [-\frac{1}{2}, \frac{1}{2}]$ with $\epsilon \neq 0$ we obtain a leader's utility of $(1+\mu)/2 + (\mu-1)\epsilon$, whose difference w.r.t. the utility of~$\mu$ in the unique P-LFPNE is again arbitrarily large for $\mu \rightarrow \infty$.
\qed
\end{proof}

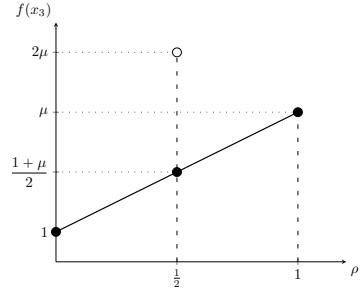
\begin{figure}[h!]
	\centering
		\resizebox{0.4\textwidth}{!}{%
			\begin{tikzpicture}
			\begin{axis}[ 
			axis x line=center,
			axis y line=center,
			xtick={0.5, 1},
			xticklabels={$ \frac{1}{2} $, $1$},
			ytick={1,3,5,7},
			yticklabels={1,$\dfrac{1+\mu}{2}$,$\mu$,$2 \mu$},
			xlabel={$ \rho $},
			ylabel={$f(x_3)$},
			every axis/.append style={font=\normalsize},
			xlabel style={font=\normalsize, below right},
			ylabel style={font=\normalsize, above left},
			xmin=0,
			xmax=1.2,
			ymin=0,
			ymax=8]
			\addplot[color=black,thick] coordinates { (0,1) (1,5) };
			\addplot[only marks,mark=*,mark size=3pt] coordinates { (0,1) (0.5,3) (1,5)};
			\addplot[only marks,mark=*,mark size=3pt,mark options={fill=white}] coordinates { (0.5,7) };
			\addplot[color=black,style=loosely dashed] coordinates { (0.5,0) (0.5,7) };
			\addplot[color=black,style=loosely dashed] coordinates { (1,0) (1,5) };
			\addplot[color=black,style=loosely dotted] coordinates { (0,7) (0.5,7) };
			\addplot[color=black,style=loosely dotted] coordinates { (0,3) (.5,3) };
			\addplot[color=black,style=loosely dotted] coordinates { (0,5) (1,5) };
			\end{axis}
			\end{tikzpicture}
		}%
	        \caption{The leader's utility in the normal-form game in the proof of Proposition~\ref{prop:arbitrarily}, plotted as a function of $\rho$, where the leader's strategy is $x_3=(1 - \rho,\rho)$.} 
	\label{fig:perturbation_plot}
\end{figure}

\section{Computational Complexity}\label{sec:reduction}

Let P-LFPNE-s be the search version of the problem of computing a P-LFPNE. In this section, we study the computational complexity of solving P-LFPNE-s for normal-form games. In particular, we show, in Subsection~\ref{sub_sec:hard}, that P-LFPNE-s is \textsf{NP}-hard for $n \geq 3$ (i.e., with at least two followers), and, in Subsection~\ref{sub_sec:apx}, that P-LFPNE-s is not in Poly-\textsf{APX} for $n \geq 4$ (i.e., for games with at least three followers), unless \textsf{P} = \textsf{NP}. We introduce two reductions, a non approximation-preserving one which is valid for $n \geq 3$ and another one, only valid for $n \geq 4$, but approximation-preserving.

In decision form, the problem of computing a P-LFPNE reads:

\begin{definition}[P-LFPNE-d]\label{def:ourprob}
Given a normal-form game with $n \geq 3$ players and a finite number $K$, is there a P-LFPNE where the leader achieves a utility greater than or equal to $K$?
\end{definition}

We show, in Section~\ref{sub_sec:hard}, that P-LFPNE-d is \textsf{NP}-complete via a polynomial-time reduction of
Independent Set (IND-SET), one of Karp's original 21 \textsf{NP}-complete problems~\cite{karp1972reducibility}, to it. In decision form, IND-SET reads:
\begin{definition}[IND-SET-d]\label{def:indset}
Given an undirected graph $G=(V,E)$ and an integer $J \leq |V|$, does $G$ contain an \emph{independent set} (a subset of vertices $V' \subseteq V: \forall u,v \in V'$, $\{u,v\} \notin E$) of size greater than or equal to $J$?
\end{definition}

We prove, in Subsection~\ref{sub_sec:apx}, the inapproximability of P-LFPNE-s for the case with at least three followers via a polynomial-time reduction of
3-SAT, another of Karp's 21 \textsf{NP}-complete problems~\cite{karp1972reducibility}, to P-LFPNE-d. 3-SAT reads:
\begin{definition}[3-SAT]\label{def:3sat}
  Given a collection $C=\{\phi_1,\ldots,\phi_t\}$ of clauses (disjunctions of literals) on a finite set $V$ of boolean variables with $|\phi_c|=3$ for $1 \leq c \leq t$, is there a truth assignment for $V$ which satisfies all the clauses in $C$?
\end{definition}

\subsection{\textsf{NP}-Completeness}\label{sub_sec:hard}

Before presenting our reduction, we introduce the following class of normal-form games:

\begin{definition}\label{def:gamma}
Given two
rational numbers
$b$ and $c$, with $1 > c > b > 0$, and an integer $r \geq 1$, let $\Gamma_b^c(r)$ be a class of normal-form games with three players ($n=3$), the first two having $r+1$ actions each, with action sets $A_1 = A_2 = A = \{1,...,r,\chi\}$, the third one having $r$ actions, with action set $A_3 = A \setminus \{\chi\}$, and such that, for every third player's action $a_3 \in A \setminus \{\chi\} $, the other players play a game where:
\begin{itemize}
\item the payoffs on the main diagonal (where both players play the same action) satisfy $U_1^{a_3 a_3 a_3} \sequal U_2^{a_3 a_3 a_3} \sequal 1, U_1^{\chi \chi a_3} \sequal c, U_2^{\chi \chi a_3} \sequal b $ and, for any $a_1 \in A \setminus \{a_3,\chi\}$, $U_1^{a_1 a_1 a_3} \sequal U_2^{a_1 a_1 a_3} \sequal 0$;
\item for every $a_1,a_2 \in A \setminus \{\chi\}$ with $a_1 \neq a_2$, $U_1^{a_1 a_2 a_3} \sequal U_2^{a_1 a_2 a_3} = b$;
\item for every $a_2 \in A \setminus \{\chi\}$, $U_1^{\chi a_2 a_3} \sequal c $ and $ U_2^{\chi a_2 a_3} \sequal 0$;
\item for every $a_1 \in A \setminus \{\chi\}$, $U_1^{a_1 \chi a_3} \sequal 1 $ and $ U_2^{a_1 \chi a_3} \sequal 0$.
\end{itemize}
No restrictions are imposed on the third player's payoffs.
\end{definition}

\begin{figure}[!htp]
	\centering
	{\renewcommand{\arraystretch}{2}
		\begin{tabular}{r|c|c|c|c|}
			\multicolumn{1}{r}{}
			&  \multicolumn{1}{c}{$1$}
			& \multicolumn{1}{c}{$2$} 
			& \multicolumn{1}{c}{$3$} 
			& \multicolumn{1}{c}{$\chi$} \\
			\cline{2-5}
			$1$ & $1,1,0$ & $b,b,0$ & $b,b,0$ & $1,0,0$ \\
			\cline{2-5}
			$2$ & $b,b,0$ & $0,0,1$ & $b,b,0$ & $1,0,0$ \\
			\cline{2-5}
			$3$ & $b,b,0$ & $b,b,0$ & $0,0,1$ & $1,0,0$ \\
			\cline{2-5}
			$\chi$ & $c,0,0$ & $c,0,0$ & $c,0,0$ & $c,b,0$ \\
			\cline{2-5}
			\multicolumn{1}{r}{}
			& \multicolumn{4}{c}{$ 1 $}
		\end{tabular}}
		{\renewcommand{\arraystretch}{2}
			\begin{tabular}{r|c|c|c|c|}
				\multicolumn{1}{r}{}
				&  \multicolumn{1}{c}{$1$}
				& \multicolumn{1}{c}{$2$} 
				& \multicolumn{1}{c}{$3$} 
				& \multicolumn{1}{c}{$\chi$} \\
				\cline{2-5}
				$1$ & $0,0,1$ & $b,b,0$ & $b,b,0$ & $1,0,0$ \\
				\cline{2-5}
				$2$ & $b,b,0$ & $1,1,0$ & $b,b,0$ & $1,0,0$ \\
				\cline{2-5}
				$3$ & $b,b,0$ & $b,b,0$ & $0,0,-\frac{1}{c}-1$ & $1,0,0$ \\
				\cline{2-5}
				$\chi$ & $c,0,0$ & $c,0,0$ & $c,0,0$ & $c,b,0$ \\
				\cline{2-5}
				\multicolumn{1}{r}{}
				& \multicolumn{4}{c}{$ 2 $}
			\end{tabular}}
			{\renewcommand{\arraystretch}{2}
				\begin{tabular}{r|c|c|c|c|}
					\multicolumn{1}{r}{}
					&  \multicolumn{1}{c}{$1$}
					& \multicolumn{1}{c}{$2$} 
					& \multicolumn{1}{c}{$3$} 
					& \multicolumn{1}{c}{$\chi$} \\
					\cline{2-5}
					$1$ & $0,0,1$ & $b,b,0$ & $b,b,0$ & $1,0,0$ \\
					\cline{2-5}
					$2$ & $b,b,0$ & $0,0,-\frac{1}{c}-1$ & $b,b,0$ & $1,0,0$ \\
					\cline{2-5}
					$3$ & $b,b,0$ & $b,b,0$ & $1,1,0$ & $1,0,0$ \\
					\cline{2-5}
					$\chi$ & $c,0,0$ & $c,0,0$ & $c,0,0$ & $c,b,0$ \\
					\cline{2-5}
					\multicolumn{1}{r}{}
					& \multicolumn{4}{c}{$ 3 $}
				\end{tabular}}
		\caption{A $\Gamma_b^c(r)$ game with $r=3$. The third player (the leader) selects a matrix, while the first and the second players (the followers) select rows and columns, respectively. The third player's payoffs are defined starting from the graph in Figure~\ref{fig:graph_indset}, as explained in the proof of Theorem~\ref{thm:hard}.}
		\label{fig:game_indset}
\end{figure}


The special feature of $\Gamma_b^c(r)$ games, see Figure~\ref{fig:game_indset} for an illustration of one such game with $r=3$, parametric in $b$ and $c$, is that, no matter which mixed strategy the third player (the leader) commits to, only diagonal outcomes, with the exception of $(\chi,\chi)$, can be pure NEs in the resulting followers' game. Moreover, for every subset of diagonal outcomes, there is a leader's strategy such that this subset precisely corresponds to the set of all pure NEs in the followers' game, as formally stated by the following proposition:
\begin{proposition}\label{prop:equlibria}
A $\Gamma_b^c(r)$ game with $c \leq \frac{1}{m} $ for all $S \subseteq \{(a_1,a_1) : a_1 \in A \setminus \{\chi\}\}$ with $S\neq \emptyset$ admits a leader's strategy $x_3 \in \Delta_3$ such that the outcomes $(a_1,a_1) \in S$ are the only pure NEs in the resulting followers' game.
\end{proposition}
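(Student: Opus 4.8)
The plan is to first rewrite the followers' game induced by a leader commitment $x_3 \in \Delta_3$ in its expected-payoff (bimatrix) form, then characterise its pure NEs as a function of $x_3$, and finally exhibit an explicit $x_3$ realising any prescribed nonempty $S$. First I would compute, for each profile $(a_1,a_2) \in A \times A$, the followers' expected payoffs $\sum_{a_3} x_3^{a_3} U_p^{a_1 a_2 a_3}$. Since the third player's action enters the followers' matrices only through the main-diagonal entries (where $U_1^{a_3 a_3 a_3} = U_2^{a_3 a_3 a_3} = 1$ whereas $U_1^{a_1 a_1 a_3} = U_2^{a_1 a_1 a_3} = 0$ for $a_1 \neq a_3$), while every other entry is constant in $a_3$, the induced followers' game is the bimatrix whose only $x_3$-dependent entries are the diagonal ones: the profile $(a_1,a_1)$ with $a_1 \in A \setminus \{\chi\}$ yields $(x_3^{a_1}, x_3^{a_1})$, the profile $(\chi,\chi)$ yields $(c,b)$, every off-diagonal non-$\chi$ profile yields $(b,b)$, every $(\chi,a_2)$ yields $(c,0)$, and every $(a_1,\chi)$ yields $(1,0)$.

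Then I would rule out all non-diagonal profiles as pure NEs, for \emph{every} $x_3$, by checking deviations one type at a time. At an off-diagonal non-$\chi$ profile follower~$1$ earns $b$ but, switching to $\chi$, earns $c > b$; at $(\chi,\chi)$ follower~$1$ earns $c$ but, switching to any non-$\chi$ action, earns $1 > c$; at $(\chi,a_2)$ and at $(a_1,\chi)$ follower~$2$ earns $0$ but can reach a payoff of at least $b > 0$. This establishes the observation already announced in the text, namely that only the diagonal profiles $(a_1,a_1)$ with $a_1 \in A \setminus \{\chi\}$ can ever be pure NEs.

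Next I would pin down exactly when such a diagonal profile is a pure NE. At $(a_1,a_1)$ both followers earn $x_3^{a_1}$. Follower~$1$'s most profitable deviation is to $\chi$, giving $c$ (which dominates the alternative payoff $b$, as $c > b$), so follower~$1$ has no incentive to deviate iff $x_3^{a_1} \geq c$; follower~$2$'s best deviation gives $\max\{b,0\} = b < c$, so her no-deviation condition is implied by the former. Hence $(a_1,a_1)$ is a pure NE if and only if $x_3^{a_1} \geq c$. Finally, given a nonempty $S$ with index set $T = \{a_1 : (a_1,a_1) \in S\}$, I would set $x_3^{a_1} = \frac{1}{|T|}$ for $a_1 \in T$ and $x_3^{a_1} = 0$ otherwise. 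Since $|T| \leq r \leq m$ and $c \leq \frac{1}{m}$, we obtain $x_3^{a_1} = \frac{1}{|T|} \geq \frac{1}{m} \geq c$ for $a_1 \in T$, so every profile in $S$ is a pure NE, whereas $x_3^{a_1} = 0 < c$ for $a_1 \notin T$ destroys all remaining diagonal profiles; combined with the characterisation above, the set of pure NEs is exactly $S$.

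The main obstacle I anticipate is the case analysis establishing the two structural facts---that no off-diagonal or $\chi$-involving profile is ever an NE, and that diagonal profiles obey the clean threshold condition $x_3^{a_1} \geq c$. The delicate point is to exploit correctly the asymmetry $c > b$ together with the two ``escape'' payoffs (the value $1$ available to follower~$1$ when deviating out of the $\chi$ row, and the value $b > 0$ available to follower~$2$), since it is precisely these that render the $\chi$ row and column unusable at equilibrium. Once these facts are in place, the explicit construction of $x_3$ is immediate.
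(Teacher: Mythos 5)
Your proof is correct and follows essentially the same route as the paper's: you rule out all non-diagonal and $\chi$-involving outcomes via exactly the same deviations (follower~1 to $\chi$ earning $c>b$, follower~1 out of $(\chi,\chi)$ earning $1>c$, follower~2 earning at least $b>0$), derive the threshold characterisation that $(a_1,a_1)$ is a pure NE iff $x_3^{a_1}\geq c$, and realise an arbitrary nonempty $S$ by playing uniformly on its support, which the condition $c\leq\frac{1}{m}$ (in the paper's proof, $c\leq\frac{1}{r}$) makes feasible. If anything, you are slightly more explicit than the paper on two points---verifying that follower~2's no-deviation condition on the diagonal is dominated by follower~1's, and spelling out the uniform-on-$S$ construction, which the paper only exhibits later in the proof of Theorem~\ref{thm:hard}---but these are refinements of the same argument, not a different approach.
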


\begin{proof}
First, observe that the followers' payoffs that are not on the main diagonal are independent of the leader's strategy $x_3$. Thus, outcomes $ (a_1,a_2) $, for any $a_1,a_2 \in A \setminus \{\chi\}$ with $a_1 \neq a_2$, cannot be NEs, as the first follower would deviate by playing action $\chi$ so to obtain a utility $c > b$.
Analogously, any outcome $(\chi,a_2)$, with $a_2 \in A \setminus \{\chi\}$, cannot be NE because the second follower would deviate by playing $\chi$ (since $b > 0$). The same holds for outcomes $(a_1,\chi)$ with $a_1 \in A \setminus \{\chi\}$, since the second follower would be better off playing another action (as $b > 0$).
The last outcome on the diagonal, $(\chi,\chi)$, cannot be an NE either, as the first follower would deviate from it (as she would get $c$ in it, while she can obtain $1 > c$ by deviating).

As a result, the only outcomes which can be pure NEs are those in $\{(a_1,a_1) : a_1 \in A \setminus \{\chi\} \}$. Clearly, when the leader plays a pure strategy $a_3$, the unique pure NE in the followers' game is $(a_3,a_3)$ as, due to providing the followers with their maximum payoff, they would not deviate from it. Outcomes $(a_1,a_1)$ with $a_1 \in A \setminus \{\chi,a_3\}$ are not NEs as, with them, the first follower would get $0 < c$. In general, if the leader plays an arbitrary mixed strategy $x_3 \in \Delta_3$, the resulting followers' game is such that the payoffs in $(a_3,a_3)$, with $a_3 \in A \setminus \{\chi\} $, are $(x_3^{a_3},x_3^{a_3})$. Noticing that $(a_3,a_3)$ is an equilibrium if and only if $x_3^{a_3} \geq c$ (as, otherwise, the first follower would deviate by playing action $\chi$), we conclude that the set of pure NEs in the followers' game is $S = \{(a_3,a_3) : x_3^{a_3} \geq c\}$.
 
In order to guarantee that, for every possible $S \subseteq \{(a_1,a_1) : a_1 \in A\setminus \{\chi\}\}$ with $S\neq \emptyset$, there is a leader's strategy such that $S$ contains all the pure NEs of the followers' game, we must allow the diagonal outcomes to be all (simultaneously) equilibria by properly choosing the value of $c$. This is done by imposing that, when the leader plays $ x_3 = (\frac{1}{r},\frac{1}{r},...,\frac{1}{r}) $, all outcomes in $\{(a_1,a_1) : a_1 \in A \setminus \{\chi\} \}$ are NEs, which is obtained by selecting $c \leq \frac{1}{r}$. \qed
\end{proof}

Notice that, in a $\Gamma_b^c(r)$ game with $c \leq \frac{1}{r}$, the followers' game always admits a pure NE for any leader's commitment $x_3$. 
Graphically, as shown in Figure~\ref{fig:simplex} for $r=3$, the leader's strategy space, $\Delta_3$, is partitioned into $2^{r}-1$ regions, each corresponding to a subset of $\{(a_1,a_1) : a_1 \in A \setminus \{\chi\} \}$ containing those diagonal outcomes which are the only NEs in the followers' game. 
Hence, in a $\Gamma_b^c(r)$ game with $c \leq \frac{1}{r}$, the number of combinations of outcomes which may constitute the set of NEs in the followers' game is exponential in $r$, and, thus, in the size of the game instance.

\begin{figure}[!htp]
	\centering
\scalebox{.6}{\begin{tikzpicture}[line cap=round,line join=round,>=triangle 45,x=1.0cm,y=1.0cm]
\clip(-1,-0.3032019954410076) rectangle (9,7.594624123104314);
\draw [line width=1.6pt] (0.,0.)-- (4.,6.928203230275509);
\draw [line width=1.6pt] (8.,0.)-- (4.,6.928203230275509);
\draw [line width=1.6pt] (8.,0.)-- (0.,0.);
\draw [line width=1.6pt,dash pattern=on 4pt off 4pt] (0.7146758902882556,1.2378549529237797)-- (7.285324109711745,1.2378549529237797);
\draw [line width=1.6pt,dash pattern=on 4pt off 4pt] (1.4302098610758234,0.)-- (4.715104930537911,5.689605157840833);
\draw [line width=1.6pt,dash pattern=on 4pt off 4pt] (6.571877155387122,0.)-- (3.2859385776935603,5.69141256711586);
\draw (0.8853971371183099,0.8309932629798307) node[anchor=north west] {\parbox{2.373792274626238 cm}{{\small\bf \textsf{A}}}};
	\draw(1.09865317635063,0.6298998642592937) circle (0.3cm);
	\draw(2.2107707320004404,2.625169596454556) circle (0.3cm);
	\draw(5.77608877805425,2.641524266390582) circle (0.3cm);
	\draw(3.3719522974583334,0.6135451943232683) circle (0.3cm);
	\draw(4.647616552468412,0.6135451943232683) circle (0.3cm);
	\draw(2.783184179761374,3.622804462552186) circle (0.3cm);
	\draw(5.252739340101397,3.606449792616159) circle (0.3cm);
	\draw(4.009784424963373,5.6998475444275805) circle (0.3cm);
	\draw(6.937270343512141,0.6135451943232683) circle (0.3cm);
	\draw(3.977075085091319,2.90319898536701) circle (0.3cm);
	\draw(3.5027896569465464,2.167238838245807) circle (0.3cm);
	\draw(4.435005843300065,2.167238838245807) circle (0.3cm);
	\draw (1.9975146927681217,2.846262995175092) node[anchor=north west] {\parbox{2.373792274626238 cm}{{\small\bf \textsf{A}}}};
		\draw (3.1586962582260133,0.8346385930438039) node[anchor=north west] {\parbox{2.373792274626238 cm}{{\small\bf \textsf{A}}}};
			\draw (3.289533617714227,2.3883322369663434) node[anchor=north west] {\parbox{2.373792274626238 cm}{{\small\bf \textsf{A}}}};
				\draw (2.5635734705930277,3.8338978612727225) node[anchor=north west] {\parbox{2.373792274626238 cm}{{\small\bf \textsf{B}}}};
					\draw (3.7574643759229726,3.114292384087546) node[anchor=north west] {\parbox{2.373792274626238 cm}{{\small\bf \textsf{B}}}};
						\draw (5.033128630933051,3.8175431913366956) node[anchor=north west] {\parbox{2.373792274626238 cm}{{\small\bf \textsf{B}}}};
							\draw (3.790173715795026,5.89458627321209) node[anchor=north west] {\parbox{2.373792274626238 cm}{{\small\bf \textsf{B}}}};
								\draw (6.707659634343795,0.8182839231077772) node[anchor=north west] {\parbox{2.373792274626238 cm}{{\small\bf \textsf{C}}}};
									\draw (4.418005843300065,0.8182839231077772) node[anchor=north west] {\parbox{2.373792274626238 cm}{{\small\bf \textsf{C}}}};
										\draw (5.546478068885903,2.846262995175092) node[anchor=north west] {\parbox{2.373792274626238 cm}{{\small\bf \textsf{C}}}};
											\draw (4.205395134131719,2.3883322369663434) node[anchor=north west] {\parbox{2.373792274626238 cm}{{\small\bf \textsf{C}}}};
\end{tikzpicture}}
\caption{A $\Gamma_b^c(r)$ game with $r=3$ and $c \leq \frac{1}{r}$. The leader's strategy space $\Delta_3$ is partitioned into $2^{r}-1$ regions, one per subset of $\{(a_1,a_1) : a_1 \in A \setminus \{\chi\} \}$ (the three NEs in the followers' game, $(1,1), (2,2)$, and $(3,3)$, are labelled \textsf{A}, \textsf{B}, \textsf{C}).}
\label{fig:simplex}
\end{figure}
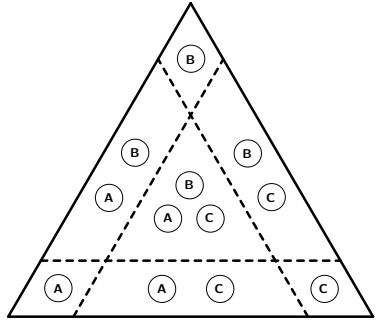

Relying on Proposition~\ref{prop:equlibria}, we can establish the following result:
\begin{theorem}\label{thm:hard}
\emph{P-LFPNE-d} is strongly \textsf{NP}-complete even for $n=3$.
\end{theorem}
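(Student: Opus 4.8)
The plan is to prove the statement in three movements: membership in \textsf{NP}, \textsf{NP}-hardness via a reduction from IND-SET-d, and the observation that the reduction only manipulates numbers of polynomially bounded magnitude, which upgrades the result to \emph{strong} \textsf{NP}-completeness. For membership, I would argue that a P-LFPNE, when it exists, is witnessed by a leader's strategy $x_3 \in \Delta_3$ of polynomial encoding length: given such an $x_3$, one enumerates the $O(m^{n-1})$ followers' action profiles, checks in polynomial time (through a linear system) which diagonal outcomes are pure NEs of the induced followers' game, evaluates the leader's utility at the worst such NE, and compares it to $K$. The delicate point, which I would confront at the outset, is the supremum-versus-maximum phenomenon of Proposition~\ref{prop:nonexistence}: since the decision question asks for an actual P-LFPNE attaining value at least $K$, I would restrict attention to leader's strategies lying on the boundaries of the regions of $\Delta_3$ identified in Proposition~\ref{prop:equlibria} (where the relevant components $x_3^{a_3}$ equal the threshold $c$), showing that a rational, polynomially encodable strategy of this form attains the optimum whenever it is attained.

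The core is the hardness reduction from IND-SET-d (Definition~\ref{def:indset}). Given a graph $G=(V,E)$ and an integer $J$, I would construct the game $\Gamma_b^c(r)$ of Definition~\ref{def:gamma} with $r=|V|$, identifying the non-$\chi$ actions of all three players with the vertices of $G$, and fixing $c \le \frac{1}{r}$ so that Proposition~\ref{prop:equlibria} applies: every nonempty subset $S$ of the diagonal outcomes $\{(v,v):v\in V\}$ is realisable as \emph{exactly} the set of pure NEs of the followers' game, via a leader's strategy placing probability at least $c$ on the corresponding vertices. The leader's payoffs, left unconstrained by Definition~\ref{def:gamma}, are then set from $G$ as in Figure~\ref{fig:graph_indset}, so that the followers coordinating on two \emph{adjacent} vertices is made catastrophic through the penalty $-\frac{1}{c}-1$, while coordinating within an independent set yields a controlled utility. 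With this encoding, the pessimistic value of a leader's strategy inducing NE set $S$ equals the minimum, over $(v,v)\in S$, of the leader's utility at $(v,v)$.

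The key equivalence I would establish is that the attainable pessimistic value of the leader is a strictly increasing function of the size of the largest independent set of $G$ realisable as an NE set, so that, for a suitably chosen threshold $K=K(J)$, a leader's strategy with value at least $K$ exists if and only if $G$ has an independent set of size at least $J$. I would prove this in two steps: (i) any NE set $S$ containing an edge yields a pessimistic value driven negative by the $-\frac{1}{c}-1$ penalty, hence strictly below $K$, so only \emph{independent} NE sets are relevant; and (ii) among independent NE sets, spreading probability uniformly over $S$ equalises and maximises the worst-case utility, producing a value that grows with $|S|$. Direction ($\Leftarrow$) then follows by exhibiting, from an independent set of size $\ge J$, the uniform leader's strategy realising it and attaining value $\ge K$; direction ($\Rightarrow$) follows by showing that any leader's strategy achieving value $\ge K$ must induce an independent NE set of size $\ge J$.

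I expect the main obstacle to be part (ii) together with attainment: one must rule out degenerate leader's strategies that concentrate probability on few vertices, or exploit sub-threshold probabilities on non-adjacent actions, to harvest utility without genuinely enlarging the independent NE set, and one must guarantee that $K$ is attained by a bona fide strategy rather than merely approached as a supremum in the sense of Proposition~\ref{prop:nonexistence}. This is exactly where the threshold $c$ and the penalty magnitude $-\frac{1}{c}-1$ must be jointly calibrated against the leader's off-diagonal payoffs. Finally, since $b$, $c$, and all payoff entries are rationals whose numerators and denominators are polynomial in $r$, the construction has size polynomial in the IND-SET-d instance and uses only polynomially bounded numbers; combined with the two directions above, this yields strong \textsf{NP}-completeness and completes the proof.
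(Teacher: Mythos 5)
Your proposal reproduces the paper's reduction in all its essentials --- the $\Gamma_b^c(r)$ gadget, the penalty $-\frac{1}{c}-1$ on diagonal outcomes corresponding to adjacent vertices, the uniform leader's strategy over an independent set of size $J$ attaining value $K=\frac{J-1}{J}$, and \textsf{NP} membership via a polynomial-size certificate (the paper uses the even simpler certificate $(a_1,a_2,x_3)$; your boundary-strategy argument for the encoding length of $x_3$ is a reasonable way to handle the same concern). However, there is a genuine gap, and it sits exactly where you yourself flag ``the main obstacle'': you fix $c \leq \frac{1}{r}$ and then defer the calibration of $c$ against sub-threshold strategies with the phrase ``must be jointly calibrated,'' without carrying it out. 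With $c \leq \frac{1}{r}$ alone, the ``only if'' direction of your key equivalence is \emph{false}, not merely delicate. The paper exhibits the failure explicitly: if $G$ has an isolated vertex $v_{a_1}$, the leader can play probability $c' = c-\epsilon < c$ on the other $r-1$ actions and the remaining mass (at least $\frac{1}{r} \geq c$) on $a_1$, so that $(a_1,a_1)$ is the \emph{unique} induced NE and the pessimistic value is $c'(r-1) \rightarrow \frac{r-1}{r}$ as $c' \rightarrow \frac{1}{r}$. This can strictly exceed $\frac{|S|-1}{|S|}$ for every independent set $S$ realisable in $G$, so a leader's value of at least $K$ no longer certifies an independent set of size $J$.

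Closing this hole is the quantitatively hardest part of the paper's proof and cannot be waved through. The paper bounds the best sub-threshold strategy in the worst case ($\ell$ isolated vertices, the remaining $r-\ell$ forming a clique): playing the $\ell$ isolated actions uniformly with probability $\alpha = \frac{1-(r-\ell)c'}{\ell}$ and the rest at $c'$ yields leader utility $\frac{(\ell-1)+(r-\ell)c'}{\ell}$, and requiring this to stay below the legitimate value $\frac{\ell}{\ell+1}$ for all $\ell \in \{1,\dots,r-2\}$ forces $c < \frac{1}{(\ell+1)(r-\ell)}$, whose minimum over $\ell$ is $\frac{4}{(r+1)^2}$, attained at $\ell = \frac{r-1}{2}$; the paper therefore fixes $c < \frac{1}{(r+1)^2}$ from the outset, which validates the assumption that in any P-LFPNE the leader plays each action either with probability at least $c$ or not at all, after which your steps (i) and (ii) go through as stated (and, incidentally, your worry about supremum-versus-maximum dissolves: the uniform strategy over the independent set attains $K$ exactly, so no limiting argument is needed in this reduction). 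To repair your proposal you would need to add this calibration lemma --- or an equivalent bound ruling out probability mass in $(0,c)$ --- since without it the stated parameter choice $c \leq \frac{1}{r}$ admits counterexamples.
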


\begin{proof}
For the sake of clarity, we split the proof in some steps.

{\bf Mapping.} Given an instance of IND-SET, i.e., an undirected graph $G=(V,E)$ and a positive integer $J$, we construct $\Gamma(G)$, a special instance of P-LFPNE-d of class $\Gamma_b^c(r)$, as follows. Assuming an arbitrary labeling of the vertices $\{v_1,v_2,...,v_r\}$, let $\Gamma(G)$ be an instance of $\Gamma_b^c(r)$ with $c < \frac{1}{(r+1)^2} < \frac{1}{r}$ and $0 <b < c < 1$, where each action $a_1 \in A \setminus\{\chi\}$ is associated with a vertex $v_{a_1} \in V$. In compliance with Definition~\ref{def:gamma}, in which no constraints are specified for the leader payoffs, we define:
\begin{itemize}
\item for any pair of vertices $v_{a_1},v_{a_2} \in V$: $U_3^{a_1 a_1 a_2} = U_3^{a_2 a_2 a_1} = - \frac{1}{c} - 1$ if $\{v_{a_1},v_{a_2}\} \in E$, and $U_3^{a_1 a_1 a_2} = U_3^{a_2 a_2 a_1} = 1$ otherwise;
\item for every $a_3 \in A \setminus \{\chi\}$: $U_3^{a_3 a_3 a_3} = 0$ and $U_3^{\chi \chi a_3} = 0$;
\item for every $a_3 \in A \setminus \{\chi\} $ and for every $a_1, a_2 \in A $ with $a_1 \neq a_2$: $U_3^{a_1 a_2 a_3} = U_3^{a_2 a_1 a_3} = 0$.
\end{itemize}
As an example, Figure~\ref{fig:graph_indset}
illustrates an instance of IND-SET from which the game depicted in Figure~\ref{fig:game_indset} is obtained by reduction.
Finally, let $ K = \frac{J - 1}{J} $. Note that, as it is clear, this transformation can be carried out in time polynomial in the number of vertices $|V|=r$.

\begin{figure}[!htp]
	\centering
	\begin{tikzpicture}[auto, node distance=2cm, every loop/.style={},
			thick,main node/.style={circle,draw,font=\Large}]
			\node[main node] (1) {$v_1$};
			\node[main node] (2) [below left of=1] {$v_2$};
			\node[main node] (3) [below right of=1] {$v_3$};
			\path
			(3) edge (2);
	\end{tikzpicture}
	\caption{An undirected graph $G=(V,E)$, where $V=\{ v_1,v_2,v_3 \}$ and $E=\{ \{v_2,v_3 \} \}$.}
	\label{fig:graph_indset}
\end{figure}
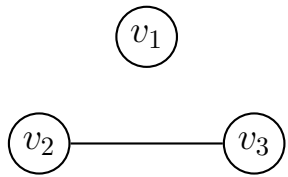

{\bf If.} We show that, if the graph $ G $ contains an independent set of size greater than or equal to $J$, then $ \Gamma(G)$ admits a P-LFPNE with leader's utility greater than or equal to $K$. Let $V^*$ be an independent set with $|V^*| = J$. Consider the case in which outcomes $(a_1,a_1)$, with $v_{a_1} \in V^*$, are the only pure NEs in the followers' game, and suppose that the leader's strategy $x_3$ is $x_3^{a_3} = \frac{1}{|V^*|}$ if $v_{a_3} \in V^*$ and $x_3^{a_3} = 0$ otherwise. Since, by construction, $U_3^{a_1 a_1 a_3} = 1$ for all $a_3 \in A \setminus \{\chi,a_1\}$, the leader's utility at an equilibrium $(a_1,a_1)$ is:
$$
\sum_{a_3 \in A \setminus \{\chi\}} U_3^{a_1 a_1 a_3} x_3^{a_3} = \sum_{a_3 \in A \setminus \{\chi,a_1\}} U_3^{a_1 a_1 a_3} x_3^{a_3} = \sum_{a_3 \in A \setminus \{\chi,a_1\}} x_3^{a_3} = \frac{|V^*|-1}{|V^*|} = K.
$$

{\bf Only if.} We show that, if $\Gamma(G)$ admits a P-LFPNE with leader's utility greater than or equal to $K$, then $G$ contains an independent set of size greater than or equal to $J$. Due to Proposition~\ref{prop:equlibria}, at any P-LFPNE the leader plays a strategy $\bar x_3$ inducing a set of pure NEs in the followers' game corresponding to $S^* = \{(a_3,a_3) : \bar x_3^{a_3} \geq c\}$.
We now show that, in a P-LFPNE, the leader would never play two actions $a_1,a_2 \in A \setminus \{\chi\}$, with $\{v_{a_1},v_{a_2}\} \in E$, with probability greater than or equal to $c$. By contradiction, suppose the leader's equilibrium strategy $\bar x_3$ is such that $\bar x_3^{a_1}, \bar x_3^{a_2} \geq c$. When the followers play the equilibrium $(a_1,a_1)$ (the same holds for $(a_2,a_2)$), the leader's utility is:
$$
\sum_{a_3 \in A \setminus \{\chi\}} U_3^{a_1 a_1 a_3} \bar x_3^{a_3} = \sum_{a_3 \in A \setminus \{\chi,a_1,a_2\}} U_3^{a_1 a_1 a_3} \bar x_3^{a_3} + \bar x_3^{a_2} (- \frac{1}{c} - 1).
$$
In the right-hand side, the first term is $<1$ (as the leader's payoffs are $\leq 1$ and $\sum_{a_3 \in A \setminus \{\chi,a_1,a_2\}} \bar x_3^{a_3} = 1 - \bar x_3^{a_1} - \bar x_3^{a_2} < 1 $, since $\bar x_3^{a_1} , \bar x_3^{a_2} \geq c$). The second term is less than or equal to $c (- \frac{1}{c} - 1) = -1 - c$ (as $\bar x_3^{a_2} \geq c$), which is strictly less than $-1$. It follows that, since $(a_1,a_1)$ (or, equivalently, $(a_2,a_2)$) always provides the leader with a negative utility, she would never play $\bar x_3$ in an equilibrium. This is because, by playing a pure strategy, she would obtain a utility of at least zero (as, when she plays a pure strategy, the followers' game admits a unique pure NE giving her a zero payoff).
As a result, for any action $a_3$ such that $\bar x_3^{a_3} \geq c$, we have $U_3^{a_3 a_3 a_3} = 0$, and $U_3^{a_1 a_1 a_3} = 1$ for every $a_1$ such that $\bar x_3^{a_1} \geq c$ (since $v_{a_1}$ and $v_{a_3}$ are not connected by an edge).

Now, let us make the following assumption.

{\em Assumption:} the leader either plays an action with probability greater than or equal to $c$ or she does not play it at all.

If this is the case, then the leader's utility at an equilibrium $(a_3,a_3) \in S^*$ is $1 - \bar x_3^{a_3}$.
Since, due to the pessimistic assumption, the leader maximises her utility in the worst NE, her best choice is to select an $\bar x_3$ such that all NEs yield the same utility, that is: $\bar x_3^{a_1} = \bar x_3^{a_2}$ for every $a_1,a_2$ with $(a_1,a_1), (a_2,a_2) \in S^*$.
This results in the leader playing all actions $a_3$ such that $(a_3,a_3) \in S^*$ with the same probability $\bar x_3^{a_3} = \frac{1}{|S^*|}$, obtaining a utility of $\frac{|S^*|-1}{|S^*|} = K$.
Therefore, the vertices in the set $\{v_{a_3} : (a_3,a_3) \in S^* \}$ form an independent set of $G$ of size $|S^*|=J$.

We now show that,
if $c < \frac{4}{(r+1)^2}$,
the previous assumption always holds, i.e., in any P-LFPNE, the leader is not better off playing any action with probability less than $c$.
Observe that, without imposing any constraint on $c$, except for $c \leq \frac{1}{r}$, the aforementioned assumption may not hold true in presence of isolated vertices. Indeed, suppose that $v_{a_1}$ is the only isolated vertex in $G$ and $r \geq 3$. Then, as we show next, there is a P-LFPNE in which the leader plays a strategy $x_3$ such that, for every $a_3 \in A \setminus \{a_1,\chi\}$, $x_3^{a_3}=c'$, where $c' = c -\epsilon$ for some $\epsilon > 0$, while $x_3^{a_1}=1 - c' (r-1)$. Since the latter probability is greater than $c$ by definition (as it is always greater than or equal to $\frac{1}{r}$), the unique NE for the followers is $(a_1,a_1)$, providing the leader with a utility of $1 - x_3^{a_1} = c' (r-1)$, which approaches $\frac{r-1}{r}$ for $c' \rightarrow \frac{1}{r}$. 
Assuming $c=\frac{1}{r}$, for $c' \rightarrow \frac{1}{r}$ the strategy $x_3$ is part of a P-LFPNE, since, as previously shown, the leader cannot get more than $\frac{r-2}{r-1}$ without playing actions with probability smaller than $c$.
For instance, consider the game in Figure~\ref{fig:game_indset} that is obtained from the graph $G$ in Figure~\ref{fig:graph_indset}. As one can see, $v_1$ is the only isolated vertex in $G$, and, as a consequence, the strategy $x_3$ such that $x_3^1 = 1 - 2 c'$ and $x_3^2 = x_3^3 = c'$ is part of a P-LFPNE, for $c' \rightarrow \frac{1}{3}$.
 
In general, let us denote by $\ell$ the number of isolated vertices in $G$, and assume that the other $r - \ell \geq 2$ vertices form a complete graph. This corresponds to the worst case
as, for it, the leader cannot get a utility larger than $\frac{\ell}{\ell+1}$ without playing some actions with probability smaller than $c$, but, at the same time, she could get more by uniformly playing the $\ell$ actions associated with the isolated vertices, each with probability $\alpha = \frac{1 - (r - \ell) c'}{\ell}$, while playing with probability $c'$ the other $r - \ell$ actions. If this is the case, the leader's utility is:
$$
(\ell - 1) \alpha + (r - \ell) \ c' = \frac{(\ell - 1) + (r - \ell) c'}{\ell}.
$$
Thus, in order for the assumption to hold true, we require $\frac{(\ell - 1) + (r - \ell) c'}{\ell} < \frac{\ell}{\ell+1}$ for every $\ell \in \{1,...,r-2\}$, which implies that $c$ must satisfy:
$$
(\ell + 1) (\ell -1) + (\ell + 1) (r - \ell) c < \ell^2,
$$
in which we upper bounded $c'$ by $c$. The above condition is satisfied, in turn, if and only if:
$$
c < \frac{1}{(\ell +1) (r - \ell)}.
$$
We deduce that $c$ satisfies the condition for all $\ell$ whenever $c \leq \frac{4}{(r+1)^2}$. The latter is the minimum value taken by $\frac{1}{(\ell +1) (r - \ell)}$, achieved at $\ell = \frac{r-1}{2}$, where $\frac{(\ell +1)-(r-\ell)}{(\ell+1)^2(r-\ell)^2}$, the derivative of $\frac{1}{(\ell +1) (r - \ell)}$, which is a strictly convex function of $\ell$, vanishes. Given that, according to our definition, $c < \frac{1}{(r+1)^2} < \frac{4}{(r+1)^2}$, we obtain that the condition is always satisfied, implying that the leader either plays an action with probability at least~$c$ or she never plays such action. The reduction, thus, is complete.

{\bf \textsf{NP} membership.} 
Since, given a triple $(a_1,a_2,x_3)$, we can verify in polynomial time whether $(a_1,a_2)$ is an NE in the followers' game induced by $x_3$ and whether, when playing $(a_1,a_2,x_3)$, the leader's utility is at least $K$, we deduce that P-LFPNE-d belongs to \textsf{NP}. Thus, the problem is strongly \textsf{NP}-complete due to IND-SET being strongly \textsf{NP}-complete. \qed
\end{proof}

\subsection{Inapproximability}\label{sub_sec:apx}

We
show now that the problem of computing a P-LFNE is not only \textsf{NP}-hard, but it is also difficult to
approximate even in the case of only three followers.
Since the reduction from IND-SET
which we gave in Theorem~\ref{thm:hard} is not approximation-preserving, we propose a new one based on 3-SAT (see Definition~\ref{def:3sat}).

In the following, given a literal $l$
(an occurrence of a variable, possibly negated), we define $v(l)$ as its corresponding variable. Moreover, for a generic clause $$\phi=l_1 \vee l_2 \vee l_3,$$ we denote the ordered set of possible truth assignments to the variables, namely, $x=v(l_1),y=v(l_2)$, and $z=v(l_3)$, by $$L_{\phi}=\{x y z, x y \bar z, x \bar y z, x \bar y \bar z, \bar x y z, \bar x y \bar z, \bar x \bar y z, \bar x \bar y \bar z\},$$ where, in each truth assignment, a variable is set to 1 if positive and to 0 if negative.
Given a generic 3-SAT instance, we build a corresponding normal-form game as detailed in the following definition.

\begin{definition}\label{def:sat_gadget}
  Given a 3-SAT instance where $C=\{\phi_1,\ldots,\phi_t\}$ is a collection of clauses and $V=\{v_1,\ldots,v_r\}$ is a set of Boolean variables, let $\Gamma(C,V)$ be a normal-form game with four players ($n=4$) defined as follows. The fourth player has an action for each variable in $V$ plus an additional one, i.e., $A_4=\{1,\ldots,r\}\cup\{w\}$, where each action $a_1 \in \{1,\dots,r\}$ is associated with variable $v_{a_1}$. The other players share the same set of actions $A$, with $A=A_1=A_2=A_3=\{\varphi_{ca} \mid c \in \{1, \dots, t\}, a \in \{1,\dots,8\}\}\cup\{\chi\}$, where each action $\varphi_{ca}$ is associated with one of the eight possible assignments of truth to the variables appearing in clause $\phi_c$, so that $\varphi_{ca}$ corresponds to the $a$-th assignment in the ordered set $L_{\phi_c}$. For each player $p \in \{1,2,3\}$, we define her utilities as follows:
  \begin{itemize}
  \item for each $a_4 \in A_4\setminus\{w\}$ and for each $a_1 \in A\setminus\{\chi\}$ with $a_1=\varphi_{ca}=l_1  l_2  l_3$, $U_p^{a_1 a_1 a_1 a_4}=1$ if $v(l_p)=v_{a_4}$ and $l_p$ is a positive literal or $v(l_p) \neq v_{a_4}$ and $l_p$ is negative;
  \item for each $a_4 \in A_4\setminus\{w\}$ and for each $a_1 \in A\setminus\{\chi\}$ with $a_1=\varphi_{ca}=l_1  l_2  l_3$, $U_p^{a_1 a_1 a_1 a_4}=0$ if $v(l_p)=v_{a_4}$ and $l_p$ is a negative literal or $v(l_p) \neq v_{a_4}$ and $l_p$ is positive;
  \item for each $a_1 \in A\setminus\{\chi\}$ with $a_1=\varphi_{ca}=l_1  l_2  l_3$, $U_p^{a_1 a_1 a_1 w}=0$ if $l_p$ is a positive literal, while $U_p^{a_1 a_1 a_1 w}=1$ otherwise;
  \item for each $a_4 \in A_4$, for each $a_1,a_2,a_3 \in A\setminus\{\chi\}$ with $\neq(a_1,a_2,a_3)$, $U_p^{a_1 a_2 a_3 a_4}=\frac{1}{r+2}$;
  \item for each $a_4 \in A_4$, $a_3 \in A\setminus\{\chi\}$, and $a_2 \in A\setminus\{\chi\}$ with $a_2 =\varphi_{ca}=l_1 l_2 l_3$, $U_1^{\chi a_2 a_3 a_4}=\frac{1}{r+1}$ if $l_1$ is a positive literal, whereas $U_1^{\chi a_2 a_3 a_4}=\frac{r}{r+1}$ if $l_1$ is negative, while $U_2^{\chi a_2 a_3 a_4}=U_3^{\chi a_2 a_3 a_4}=0$; 
  \item for each $a_4 \in A_4$, $a_3 \in A\setminus\{\chi\}$, and $a_1 \in A\setminus\{\chi\}$ with $a_1 =\varphi_{ca}=l_1 l_2 l_3$, $U_2^{a_1 \chi a_3 a_4}=\frac{1}{r+1}$ if $l_2$ is a positive literal, whereas $U_2^{a_1 \chi a_3 a_4}=\frac{r}{r+1}$ if $l_2$ is negative, while $U_1^{a_1 \chi a_3 a_4}=1$ and $U_3^{a_1 \chi a_3 a_4}=0$;
  \item for each $a_4 \in A_4$, $a_1 \in A\setminus\{\chi\}$, and $a_2 \in A\setminus\{\chi\}$ with $a_2=\varphi_{ca}=l_1 l_2 l_3$, $U_3^{a_1 a_2 \chi a_4}=\frac{1}{r+1}$ if $l_3$ is a positive literal, whereas $U_3^{a_1 a_2 \chi a_4}=\frac{r}{r+1}$ if $l_3$ is negative, while $U_1^{a_1 a_2 \chi a_4}=0$ and $U_2^{a_1 a_2 \chi a_4}=1$;
  \item for each $a_4 \in A_4$, $U_1^{a_1 \chi \chi a_4}=U_3^{a_1 \chi \chi a_4}=1$ and $U_2^{a_1 \chi \chi a_4}=0$, for all $a_1 \in A\setminus\{\chi\}$;
  \item for each $a_4 \in A_4$, $U_1^{\chi a_2 \chi a_4}=1$ and $U_2^{\chi a_2 \chi a_4}=U_3^{\chi a_2 \chi a_4}=0$, for all $a_2 \in A\setminus\{\chi\}$;
  \item for each $a_4 \in A_4$, $U_1^{\chi \chi a_3 a_4}=U_3^{\chi \chi a_3 a_4}=0$ and $U_2^{\chi \chi a_3 a_4}=1$, for all $a_3 \in A$.
  \end{itemize}
  The payoff matrix of the fourth player is so defined:
  \begin{itemize}
  \item for each $a_4 \in A_4$ and for each $a_1 \in A\setminus\{\chi\}$ with $a_1=\varphi_{ca}=l_1  l_2  l_3$, $U_4^{a_1 a_1 a_1 a_4}=\epsilon$ if the truth assignment identified by $\varphi_{ca}$ makes $\phi_c$ false (i.e., whenever, for each $p \in \{1,2,3\}$, the clause $\phi_c$ contains the negation of $l_p$), while $U_4^{a_1 a_1 a_1 a_4}=1$ otherwise, where $\epsilon>0$;
  \item for each $a_4 \in A_4$ and for each $a_1,a_2,a_3 \in A$ with $\neq(a_1,a_2,a_3)$, with the addition of the triple $(\chi,\chi,\chi)$, $U_4^{a_1 a_2 a_3 a_4}=0$.
  \end{itemize}
\end{definition}

Games adhering to Definition~\ref{def:sat_gadget} have some useful properties, which we formally state in the following proposition.

\begin{proposition}\label{prop:sat_prop}
	Given a game $\Gamma(C,V)$ and an action $a_1 \in A \setminus \{\chi\}$, with $a_1=\varphi_{ca}=l_1 l_2 l_3$, the outcome $(a_1,a_1,a_1)$ is an NE of the followers' game whenever the leader commits to a strategy $x_4 \in \Delta_4$ such that:
	\begin{itemize}
		\item $x_4^{a_4} \geq \frac{1}{r+1}$ if $v(l_p)=v_{a_4}$ and $l_p$ is a positive literal, for some $p \in \{1,2,3\}$;
		\item $x_4^{a_4} \leq \frac{1}{r+1}$ if $v(l_p)=v_{a_4}$ and $l_p$ is a negative literal, for some $p \in \{1,2,3\}$;
		\item $x_4^{a_4}$ can be any if $v(l_p) \neq v_{a_4}$ for each $p \in \{1,2,3\}$.
	\end{itemize}
	All the other outcomes of the followers' game cannot be NEs, for any of the leader's commitments.
\end{proposition}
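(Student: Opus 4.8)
The plan is to prove the two assertions separately: first, that the diagonal outcome $(a_1,a_1,a_1)$ is a Nash equilibrium of the followers' game whenever the stated inequalities on $x_4$ hold, and second, that no other action profile can be a Nash equilibrium for any $x_4 \in \Delta_4$. Throughout, I would exploit that every payoff of followers $1,2,3$ on a profile involving $\chi$ is a constant independent of $a_4$, so that taking the expectation over $x_4$ leaves such deviation values unchanged.

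For the first assertion I would fix $a_1 = \varphi_{ca} = l_1 l_2 l_3$ and, for each follower $p \in \{1,2,3\}$, compute the equilibrium-candidate payoff $\sum_{a_4 \in A_4} U_p^{a_1 a_1 a_1 a_4}\, x_4^{a_4}$. Writing $\alpha_p \in A_4 \setminus \{w\}$ for the leader's action associated with the variable $v(l_p)$, the definitions in Definition~\ref{def:sat_gadget} collapse this sum to $x_4^{\alpha_p}$ when $l_p$ is positive (only $a_4 = \alpha_p$ contributes, with coefficient $1$) and to $1 - x_4^{\alpha_p}$ when $l_p$ is negative (every action except $\alpha_p$, including $w$, contributes with coefficient $1$). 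I would then isolate the two relevant unilateral deviations for follower $p$: switching to any $a_p' \in A \setminus \{\chi, a_1\}$ breaks the all-equal profile and yields the constant $\frac{1}{r+2}$, whereas switching to $\chi$ yields, by the off-diagonal rule keyed to the literal $l_p$ of $a_1$, the constant $\frac{1}{r+1}$ if $l_p$ is positive and $\frac{r}{r+1}$ if $l_p$ is negative. Requiring that neither deviation be profitable gives exactly $x_4^{\alpha_p} \geq \frac{1}{r+1}$ in the positive case and $x_4^{\alpha_p} \leq \frac{1}{r+1}$ in the negative case, with the bound against $\frac{1}{r+2}$ dominated since $\frac{1}{r+1} > \frac{1}{r+2}$. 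Collecting these conditions over $p \in \{1,2,3\}$, and observing that leader actions not associated with any clause variable (and $w$) impose no constraint, reproduces precisely the three bullet points in the statement.

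For the second assertion I would argue by a case analysis on how many followers play $\chi$, exhibiting in each case a strictly profitable deviation, for some follower, that is independent of $x_4$. If no follower plays $\chi$ but the three actions are not all equal, each follower receives $\frac{1}{r+2}$, and any of them can deviate to $\chi$ to obtain at least $\frac{1}{r+1} > \frac{1}{r+2}$. If exactly one or exactly two followers play $\chi$, the payoff rules award $0$ to at least one follower, who can then move either into or out of $\chi$ to reach a profile where the applicable rule awards her $1$; for example, from $(\chi,a_2,a_3)$ follower $2$ earns $0$ and obtains $1$ at $(\chi,\chi,a_3)$, from $(a_1,\chi,\chi)$ follower $2$ earns $0$ and obtains $1$ at $(a_1,a_2',\chi)$, and the remaining subcases are entirely analogous. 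Finally, the profile $(\chi,\chi,\chi)$, covered by the last rule stated for all $a_3 \in A$, gives followers $1$ and $3$ a payoff of $0$, and either of them obtains $1$ by deviating to a non-$\chi$ action. In every such profile the beneficial deviation raises a constant value ($0$ or $\frac{1}{r+2}$) to a strictly larger constant, so none of these profiles is an equilibrium for any $x_4$.

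The only real obstacle is bookkeeping: the payoff tables distinguish a large number of $\chi$-configurations across the three followers, and the argument must pair each non-diagonal profile with the correct off-diagonal rule and the correct improving deviation. Once the two expected-payoff simplifications $x_4^{\alpha_p}$ and $1 - x_4^{\alpha_p}$ are established and the $x_4$-independence of all deviation payoffs is noted, both directions reduce to comparing explicit rationals, so no delicate estimate is required.
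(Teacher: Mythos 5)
Your proof is correct and takes essentially the same route as the paper's: the same case analysis showing that every non-diagonal profile admits an $x_4$-independent profitable deviation (including handling $(\chi,\chi,\chi)$ inside the $(\chi,\chi,a_3)$, $a_3 \in A$ case), and the same computation of the diagonal expected payoffs $x_4^{\alpha_p}$ and $1-x_4^{\alpha_p}$ compared against the $\chi$-deviation thresholds $\frac{1}{r+1}$ and $\frac{r}{r+1}$. The only (harmless) differences are cosmetic: you treat the diagonal case first and make explicit that the $\frac{1}{r+2}$ deviation is dominated by the $\chi$ deviation, a point the paper leaves implicit.
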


\begin{proof}
Observe that, in the outcomes not in $\{(a_1,a_1,a_1) : a_1 \in A \setminus \{\chi\} \}$, the followers' payoffs do not depend on the leader's strategy $x_4$. Thus, outcomes $(a_1,a_2,a_3)$, for every $a_1,a_2,a_3 \in A \setminus \{\chi\}$ with $\neq (a_1,a_2,a_3)$, cannot be NEs, as the first follower would deviate by playing action $\chi$,
obtaining a utility at least of $\frac{1}{r+1}$, instead of $\frac{1}{r+2}$. Also, outcomes $(\chi,a_2,a_3)$, for all $a_2, a_3 \in A \setminus \{\chi\}$, are not NEs, since the second follower is better off playing $\chi$ (as she gets $1 > 0$). Analogously, outcomes $(a_1,\chi,a_3)$ cannot be NEs, for all $a_1, a_3 \in A \setminus \{\chi\}$, as the third follower would deviate to $\chi$ (providing her with a utility of $1 > 0$). A similar argument also applies to outcomes $(\chi,\chi,a_3)$, for all $a_3 \in A$, as the first follower has an incentive to deviate by playing any action different from $\chi$. Moreover, outcomes $(a_1,\chi,\chi)$ are not NEs, for all $a_1 \in A \setminus \{\chi\}$, as the second follower would deviate to any other action (providing her with a utility of $1$). The same holds for outcomes $(a_1,a_2,\chi)$, for all $a_1, a_2 \in A \setminus \{\chi\}$, where
the first follower would deviate and play action $\chi$, and for outcomes $(\chi,a_2,\chi)$, for all $a_2 \in \setminus \{\chi\}$, where the second follower
would deviate and play $\chi$.

	Therefore, the only outcomes which can be NEs in the followers' game are those in $\{(a_1,a_1,a_1) : a_1 \in A \setminus \{\chi\} \}$. Suppose the leader commits to an arbitrary mixed strategy $x_4 \in \Delta_4$. The outcome $(a_1,a_1,a_1)$, for $a_1 \in A \setminus \{\chi\}$ with $a_1=\varphi_{ca}=l_1 l_2 l_3$, provides follower $p$, for any $p \in \{1,2,3\}$, with a utility of $u_p$, such that:
	\begin{itemize}
		\item $u_p=x_4^{a_4}$ if $v(l_p)=v_{a_4}$ and $l_p$ is a positive literal;
		\item $u_p=1-x_4^{a_4}$ if $v(l_p)=v_{a_4}$ and $l_p$ is a negative literal;
	\end{itemize}
	Clearly, $(a_1,a_1,a_1)$ is an NE if the following conditions hold:
	\begin{itemize}
		\item $u_p \geq \frac{1}{r+1}$ for each $p \in \{1,2,3\}$ such that $l_p$ is positive, as otherwise follower $p$ would deviate and play $\chi$;
		\item $u_p \geq \frac{r}{r+1}$ for each $p \in \{1,2,3\}$ such that $l_p$ is negative, as otherwise follower $p$ would deviate and play $\chi$;
	\end{itemize}
The claim is proven by these conditions, together with the definition of $u_p$. \qed
\end{proof}

The property stated in Proposition~\ref{prop:sat_prop} has an interesting interpretation if we look at the strategy space of the leader. In particular, given a game $\Gamma(C,V)$, the leader's strategy space $\Delta_4$ is partitioned according to the boundaries $x_4^{a_4} = \frac{1}{r+1}$, for $a_4 \in A_4 \setminus\{w\}$, by which $\Delta_4$ is split into $2^r$ regions, each corresponding to a possible truth assignment to the variables in $V$. Specifically, in the assignment corresponding to some region, variable $v_{a_4}$ takes values TRUE if $x_4^{a_4} \geq \frac{1}{r+1}$, while it takes value FALSE if $x_4^{a_4} \leq \frac{1}{r+1}$. Moreover, an outcome $(a_1,a_1,a_1)$, for $a_1 \in A \setminus \{\chi\}$ and $a_1=\varphi_{ca}$, is an NE of the followers' game only in the regions of the leader's strategy space whose corresponding truth assignment is compatible with the one represented by $\varphi_{ca}$. For instance, if $\varphi_{ca}=\bar v_1 v_2 v_3$, the corresponding outcome is an NE only if $x_4^1 \leq \frac{1}{r+1}$, $x_4^2 \geq \frac{1}{r+1}$ and $x_4^3 \geq \frac{1}{r+1}$
(with no further restrictions on the other probabilities).

In order to better understand how these games are built, let us make a simplified example, using 2-SAT instead of 3-SAT. Given an instance of 2-SAT
(a restriction of 3-SAT in which each clause can only contain two literals), we build $\Gamma(C,V)$ as in Definition~\ref{def:sat_gadget}, using only two followers instead of three. Consider, as an example, the instance of 2-SAT and its corresponding game in Figure~\ref{fig:game_2sat}. As one can easily see, the only outcomes which can be NEs in the followers' game are those where both followers play the same action, with the exception of $(\chi,\chi)$.\footnote{In this simple example, outcomes $(a_1,\chi)$, for all $a_1 \in A \setminus \{\chi\}$, are always NEs in the followers' game. 
They can nevertheless be ignored since, if we considered 3-SAT, they would not be NEs as the third follower would have incentive to deviate by playing action $\chi$.}
For instance, let us consider outcome $(v_1 \bar v_2, v_1 \bar v_2)$. Given the leader's strategy $x_4 \in \Delta_4$, the followers' payoffs in such outcome are $(x_4^1,1-x_4^2)$, and, therefore, both followers have no incentive to deviate from $(v_1 \bar v_2, v_1 \bar v_2)$ (that is, to play action $\chi$) only when $x_4^1 \geq \frac{1}{3}$ and $x_4^2 \leq \frac{1}{3}$, which are the constraints identifying those regions of $\Delta_4$ that correspond to truth assignments compatible with $v_1 \bar v_2$.

\begin{figure}[!htp]
	\centering
	{\renewcommand{\arraystretch}{2}
		\begin{tabular}{r|c|c|c|c|c|}
			\multicolumn{1}{r}{}
			&  \multicolumn{1}{c}{$v_1 v_2$}
			& \multicolumn{1}{c}{$v_1 \bar v_2$} 
			& \multicolumn{1}{c}{$\bar v_1 v_2$} 
			& \multicolumn{1}{c}{$\bar v_1 \bar v_2$} 
			& \multicolumn{1}{c}{$\chi$} \\
			\cline{2-6}
			$v_1 v_2$ & $1,0,1$ & $\frac{1}{4},\frac{1}{4},0$ & $\frac{1}{4},\frac{1}{4},0$ & $\frac{1}{4},\frac{1}{4},0$ & $1,\frac{1}{3},0$ \\
			\cline{2-6}
			$v_1 \bar v_2$ & $\frac{1}{4},\frac{1}{4},0$ & $1,1,1$ & $\frac{1}{4},\frac{1}{4},0$ & $\frac{1}{4},\frac{1}{4},0$ & $1,\frac{2}{3},0$ \\
			\cline{2-6}
			$\bar v_1 v_2$ & $\frac{1}{4},\frac{1}{4},0$ & $\frac{1}{4},\frac{1}{4},0$ & $0,0,1$ & $\frac{1}{4},\frac{1}{4},0$ & $1,\frac{1}{3},0$ \\
			\cline{2-6}
			$\bar v_1 \bar v_2$ & $\frac{1}{4},\frac{1}{4},0$ & $\frac{1}{4},\frac{1}{4},0$ & $\frac{1}{4},\frac{1}{4},0$ & $0,1,\epsilon$ & $1,\frac{2}{3},0$ \\
			\cline{2-6}
			$\chi$ & $\frac{1}{3},0,0$ & $\frac{1}{3},0,0$ & $\frac{2}{3},0,0$ & $\frac{2}{3},0,0$ & $0,1,0$ \\
			\cline{2-6}
			\multicolumn{1}{r}{}
			& \multicolumn{5}{c}{$ 1 $}
		\end{tabular}}
		{\renewcommand{\arraystretch}{2}
			\begin{tabular}{r|c|c|c|c|c|}
				\multicolumn{1}{r}{}
				&  \multicolumn{1}{c}{$v_1 v_2$}
				& \multicolumn{1}{c}{$v_1 \bar v_2$} 
				& \multicolumn{1}{c}{$\bar v_1 v_2$} 
				& \multicolumn{1}{c}{$\bar v_1 \bar v_2$} 
				& \multicolumn{1}{c}{$\chi$} \\
				\cline{2-6}
				$v_1 v_2$ & $0,1,1$ & $\frac{1}{4},\frac{1}{4},0$ & $\frac{1}{4},\frac{1}{4},0$ & $\frac{1}{4},\frac{1}{4},0$ & $1,\frac{1}{3},0$ \\
				\cline{2-6}
				$v_1 \bar v_2$ & $\frac{1}{4},\frac{1}{4},0$ & $0,0,1$ & $\frac{1}{4},\frac{1}{4},0$ & $\frac{1}{4},\frac{1}{4},0$ & $1,\frac{2}{3},0$ \\
				\cline{2-6}
				$\bar v_1 v_2$ & $\frac{1}{4},\frac{1}{4},0$ & $\frac{1}{4},\frac{1}{4},0$ & $1,1,1$ & $\frac{1}{4},\frac{1}{4},0$ & $1,\frac{1}{3},0$ \\
				\cline{2-6}
				$\bar v_1 \bar v_2$ & $\frac{1}{4},\frac{1}{4},0$ & $\frac{1}{4},\frac{1}{4},0$ & $\frac{1}{4},\frac{1}{4},0$ & $1,0,\epsilon$ & $1,\frac{2}{3},0$ \\
				\cline{2-6}
				$\chi$ & $\frac{1}{3},0,0$ & $\frac{1}{3},0,0$ & $\frac{2}{3},0,0$ & $\frac{2}{3},0,0$ & $0,1,0$ \\
				\cline{2-6}
				\multicolumn{1}{r}{}
				& \multicolumn{5}{c}{$ 2 $}
			\end{tabular}}
			{\renewcommand{\arraystretch}{2}
				\begin{tabular}{r|c|c|c|c|c|}
					\multicolumn{1}{r}{}
					&  \multicolumn{1}{c}{$v_1 v_2$}
					& \multicolumn{1}{c}{$v_1 \bar v_2$} 
					& \multicolumn{1}{c}{$\bar v_1 v_2$} 
					& \multicolumn{1}{c}{$\bar v_1 \bar v_2$} 
					& \multicolumn{1}{c}{$\chi$} \\
					\cline{2-6}
					$v_1 v_2$ & $0,0,1$ & $\frac{1}{4},\frac{1}{4},0$ & $\frac{1}{4},\frac{1}{4},0$ & $\frac{1}{4},\frac{1}{4},0$ & $1,\frac{1}{3},0$ \\
					\cline{2-6}
					$v_1 \bar v_2$ & $\frac{1}{4},\frac{1}{4},0$ & $0,1,1$ & $\frac{1}{4},\frac{1}{4},0$ & $\frac{1}{4},\frac{1}{4},0$ & $1,\frac{2}{3},0$ \\
					\cline{2-6}
					$\bar v_1 v_2$ & $\frac{1}{4},\frac{1}{4},0$ & $\frac{1}{4},\frac{1}{4},0$ & $1,0,1$ & $\frac{1}{4},\frac{1}{4},0$ & $1,\frac{1}{3},0$ \\
					\cline{2-6}
					$\bar v_1 \bar v_2$ & $\frac{1}{4},\frac{1}{4},0$ & $\frac{1}{4},\frac{1}{4},0$ & $\frac{1}{4},\frac{1}{4},0$ & $0,1,\epsilon$ & $1,\frac{2}{3},0$ \\
					\cline{2-6}
					$\chi$ & $\frac{1}{3},0,0$ & $\frac{1}{3},0,0$ & $\frac{2}{3},0,0$ & $\frac{2}{3},0,0$ & $0,1,0$ \\
					\cline{2-6}
					\multicolumn{1}{r}{}
					& \multicolumn{5}{c}{$ w $}
				\end{tabular}}
		\caption{A $\Gamma(C,V)$ game corresponding to a 2-SAT instance with $V=\{v_1,v_2\}$ and $C=\{ \{ v_1 v_2 \} \}$. By choosing her action, the third player (the leader) selects one of the three matrices, while the first and the second players (the followers) select a row and a column, respectively.}
		\label{fig:game_2sat}
\end{figure}

We are now ready to state
the result.

\begin{theorem}
  Computing a P-LFPNE is not in Poly-\textsf{APX} even for $n=4$, unless \textsf{P} = \textsf{NP}.
\end{theorem}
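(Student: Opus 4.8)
The plan is to build on the reduction $\Gamma(C,V)$ of Definition~\ref{def:sat_gadget} and to exploit the payoff gap between a satisfiable and an unsatisfiable 3-SAT instance, treating $\epsilon>0$ as a free parameter to be fixed at the very end. First I would turn the qualitative correspondence between the leader's strategy space and truth assignments into a quantitative statement about the leader's pessimistic utility. Using Proposition~\ref{prop:sat_prop} together with the partition of $\Delta_4$ induced by the hyperplanes $x_4^{a_4}=\frac{1}{r+1}$, I would argue that, when the leader commits to a strategy lying in the interior of the region associated with a global truth assignment $\tau$, for each clause $\phi_c$ exactly one diagonal outcome $(\varphi_{ca},\varphi_{ca},\varphi_{ca})$ survives as a pure NE of the followers' game, namely the one whose partial assignment agrees with $\tau$ on the variables of $\phi_c$. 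Consequently, the set of pure NEs in such a region contains exactly one outcome per clause.

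Second, I would read off the pessimistic value in such a region from the leader's payoff matrix $U_4$: the surviving NE for clause $\phi_c$ yields utility $1$ if $\tau$ satisfies $\phi_c$ and $\epsilon$ otherwise. Since, under pessimism, the followers select the NE minimising the leader's utility, the leader obtains exactly $1$ if $\tau$ satisfies all clauses and exactly $\epsilon$ if $\tau$ violates at least one clause. I would then rule out boundary strategies by observing that, on a hyperplane $x_4^{a_4}=\frac{1}{r+1}$, both the positive and the negative partial assignments for the variable $v_{a_4}$ become simultaneously compatible, so that the set of NEs can only grow; a larger NE set can only lower the pessimistic value, and hence the supremum over $\Delta_4$ is attained in the interior of some region (where a strictly interior point always exists, using the slack action $w$ to absorb the residual probability). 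This yields the clean dichotomy: the optimal pessimistic leader utility equals $1$, and is actually attained, when the 3-SAT instance is satisfiable, whereas it equals $\epsilon$ when the instance is unsatisfiable.

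Third, I would convert this gap into inapproximability. Suppose, for contradiction, that computing a P-LFPNE admits a polynomial-time approximation algorithm whose ratio is bounded by a polynomial $p$ in the instance size. Since $\epsilon$ can be encoded with polynomially many bits while being made smaller than $1/p(|\Gamma(C,V)|)$, running such an algorithm on $\Gamma(C,V)$ would return a value at least $1/p(|\Gamma(C,V)|)>\epsilon$ in the satisfiable case and at most $\epsilon$ in the unsatisfiable case, as no feasible value can exceed the optimum. Comparing the returned value against the threshold $\epsilon$ would then decide 3-SAT in polynomial time, forcing $\textsf{P}=\textsf{NP}$, which establishes that the problem is not in Poly-\textsf{APX} unless $\textsf{P}=\textsf{NP}$.

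I expect the main obstacle to be the second step: proving rigorously that no leader strategy---in particular none lying on the region boundaries and none exploiting the slack action $w$---can beat the interior values, so that the optimum is exactly $1$ versus exactly $\epsilon$ with nothing in between. The mildly self-referential choice of $\epsilon$, whose bit-length contributes to $|\Gamma(C,V)|$, which in turn bounds $p$, also needs a short fixed-point-style justification, but this becomes routine once the payoff dichotomy is firmly in place.
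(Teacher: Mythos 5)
Your proposal is correct and follows essentially the same route as the paper: the same gadget $\Gamma(C,V)$, the same dichotomy of leader's pessimistic utility ($1$ on satisfiable instances versus $\epsilon$ on unsatisfiable ones, using exactly the region structure of $\Delta_4$ from Proposition~\ref{prop:sat_prop}), and the same threshold comparison against the hypothesised approximation guarantee to decide 3-SAT. The only notable difference is that the paper sidesteps your ``self-referential'' choice of $\epsilon$ entirely by fixing $\epsilon \in \bigl(0, \tfrac{1}{2^{r}}\bigr)$ up front---a quantity encodable in polynomially many bits that eventually undercuts every polynomial approximation factor uniformly, so no fixed-point justification is needed.
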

\begin{proof}
	Given a generic 3-SAT instance, let us build its corresponding game $\Gamma(C,V)$, according to Definition~\ref{def:sat_gadget}. Clearly, this construction requires polynomial time, because $|A_4|=r+1$ and $|A|=|A_1|=|A_2|=|A_3|=8t+1$, which are polynomials in $r$ and $t$, and, therefore, the number of outcomes in $\Gamma(C,V)$ is polynomial in $r$ and $t$. Furthermore, let us
select $\epsilon \in \big(0,\frac{1}{2^{r}}\big)$
(the polynomiality of the reduction is preserved as $\frac{1}{2^{r}}$ is representable in binary with a polynomial number of bits).
	
By contradiction, let us assume that there exists a polynomial-time approximation algorithm $\mathcal{A}$ capable of constructing an approximate solution to the problem of computing a P-LFPNE with an approximation factor $\frac{1}{2^{r}}$. Observe that, if the 3-SAT instance is a YES instance (i.e., if it is feasible), there exists then a strategy $x_4 \in \Delta_4$ such that all the NEs of the resulting followers' game provide the leader with a utility of $1$, since there is a region corresponding to a truth assignment which makes all the clauses true. On the other hand, if the 3-SAT instance is a NO instance (i.e., if it is not satisfiable), in each region of the leader's strategy space there exits then an NE for the followers' game which provides the leader with a utility of $\epsilon$.
Due to the assumption of pessimism, the followers would, then, always play such equilibrium.
	
It follows that, when applied to $\Gamma(C,V)$, $\mathcal{A}$ would return an approximate solution with value greater than $\frac{1}{2^{r}}$ if and only if the 3-SAT instance is feasible.
Since this would provide us with a solution to 3-SAT in polynomial time, we conclude that P-LFPNE-s is not in Poly-\textsf{APX} unless \textsf{P} = \textsf{NP}. \qed
\end{proof}

\section{Single-Level Reformulation and Restriction}\label{sec:reform}

We propose, in this section, a single-level reformulation of the problem admitting a supremum but, in general, not a maximum, and a corresponding restriction which always admits optimal (restricted) solutions.

For notational simplicity, we consider, here, the case with $n=3$ players. The generalisation to $n \geq 3$ is, although notationally more involved, straightforward. With only two followers, Problem~\eqref{problem:pess}, i.e., the bilevel programming formulation we gave in Subsection~\ref{subsec:problem}, reads:

\everymath{\displaystyle}
\begin{equation} \label{problem:pessn3}
  \begin{array}{llllr}
  \sup_{x_3} \min_{x_1,x_2} & \multicolumn{3}{l}{\sum_{a_1 \in A_1} \sum_{a_2 \in A_2} \sum_{a_3 \in A_3} U_3^{a_1a_2a_3} x_1^{a_1} \, x_2^{a_2} \, x_3^{a_3}}\\
    \text{s.t.}           &  x_3 \in & \multicolumn{2}{l}{\Delta_3}\\
                          &  x_1 \in & \argmax_{x_1} & \sum_{a_1 \in A_1} \sum_{a_2 \in A_2} \sum_{a_3 \in A_3} U_1^{a_1a_2a_3} x_1^{a_1} \, x_2^{a_2} \, x_3^{a_3} \\
                          &          & \text{s.t.} & x_1 \in \Delta_1 \cap \{0,1\}^{m}\\
                          &  x_2 \in & \argmax_{x_2} & \sum_{a_1 \in A_1} \sum_{a_2 \in A_2} \sum_{a_3 \in A_3} U_2^{a_1a_2a_3} x_1^{a_1} \, x_2^{a_2} \, x_3^{a_3} \\
                          &          & \text{s.t.} & x_2 \in \Delta_2 \cap \{0,1\}^{m}.
  \end{array}
\end{equation} 
\everymath{\textstyle}

\vspace{-0.3cm}
\subsection{Single-Level Reformulation}

In order to cast Problem~\eqref{problem:pessn3} into a single-level problem, we introduce, first, a reformulation of the followers' problem:
\begin{lemma}
The following MILP, parametric in $x_3$, is an exact reformulation of the followers' problem of, given a leader's strategy $x_3$, finding a pure NE which minimises the leader's utility:
\begin{subequations}\label{prob:primal}
\begin{align}
\label{prob:primal1}  \min_{y}
  & \sum_{a_1 \in A_1} \sum_{a_2 \in A_2} y^{a_1a_2} \sum_{a_3 \in A_3} U_3^{a_1 a_2 a_3} x_3^{a_3} \\
\label{prob:primal2}  \text{\em s.t.}
  & \sum_{a_1 \in A_1} \sum_{a_2 \in A_2} y^{a_1a_2} =1\\
\label{prob:primal3} 
  & y^{a_1a_2} \sum_{a_3 \in A_3} (U_1^{a_1a_2a_3} - U_1^{a_1'a_2a_3}) x_3^{a_3}  \geq 0 & \forall a_1\in A_1, a_2 \in A_2, a_1' \in A_1\\
\label{prob:primal4}
  & y^{a_1a_2} \sum_{a_3 \in A_3} (U_2^{a_1a_2a_3}  -  U_2^{a_1a_2'a_3}) x_3^{a_3} \geq 0 & \forall a_1\in A_1, a_2 \in A_2, a_2' \in A_2\\
\label{prob:primal5}
  & y^{a_1a_2} \in \mathbb{Z}_+ & \forall a_1 \in A_1, a_2 \in A_2.
\end{align}
\end{subequations}
\end{lemma}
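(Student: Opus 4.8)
The plan is to establish a bijection between feasible solutions of the MILP~\eqref{prob:primal} and pure Nash equilibria of the followers' game induced by the fixed strategy $x_3$, in such a way that corresponding objects carry the same objective value. Since~\eqref{prob:primal} minimises that value, its optimal solutions will then coincide exactly with pure NEs minimising the leader's utility, which is precisely the pessimistic inner problem we must reformulate.

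First I would exploit the discrete structure of the feasible region. Because the variables $y^{a_1 a_2}$ are constrained to be nonnegative integers by~\eqref{prob:primal5} and to sum to one by~\eqref{prob:primal2}, exactly one of them equals $1$ while all the others vanish. Hence every feasible $y$ is the indicator vector of a single followers' action profile $(\bar a_1, \bar a_2) \in A_1 \times A_2$, and the objective~\eqref{prob:primal1} collapses to $\sum_{a_3 \in A_3} U_3^{\bar a_1 \bar a_2 a_3} x_3^{a_3}$, i.e., exactly the leader's expected utility when the followers play $(\bar a_1, \bar a_2)$ against the commitment $x_3$.

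Next I would analyse the role of the product-form constraints~\eqref{prob:primal3} and~\eqref{prob:primal4}. Since $x_3$ is a fixed parameter, each coefficient $\sum_{a_3 \in A_3}(U_1^{a_1 a_2 a_3} - U_1^{a_1' a_2 a_3}) x_3^{a_3}$, and its analogue for follower $2$, is a constant, so these constraints are indeed linear in $y$. For every profile with $y^{a_1 a_2} = 0$ they reduce to $0 \geq 0$ and are therefore vacuous; for the unique profile $(\bar a_1, \bar a_2)$ with $y^{\bar a_1 \bar a_2} = 1$, constraint~\eqref{prob:primal3} forces $\sum_{a_3 \in A_3}(U_1^{\bar a_1 \bar a_2 a_3} - U_1^{a_1' \bar a_2 a_3}) x_3^{a_3} \geq 0$ for every $a_1' \in A_1$, which is exactly the statement that $\bar a_1$ is a best response of follower $1$ to $\bar a_2$ and $x_3$; symmetrically,~\eqref{prob:primal4} enforces that $\bar a_2$ is a best response of follower $2$ to $\bar a_1$ and $x_3$. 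Together these are precisely the conditions defining $(\bar a_1, \bar a_2)$ as a pure NE of the followers' game induced by $x_3$, giving one direction of the correspondence. The converse is immediate: any pure NE $(\bar a_1, \bar a_2)$ yields a feasible solution by setting $y^{\bar a_1 \bar a_2} = 1$ and all other components to $0$, with objective value equal to the leader's utility at that equilibrium. In particular,~\eqref{prob:primal} is infeasible exactly when the induced followers' game admits no pure NE, consistent with the $-\infty$ convention adopted earlier.

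I expect the only delicate point to be arguing cleanly that the product-form constraints capture the best-response conditions in both directions without introducing spurious restrictions. The key observation is that multiplying each deviation gain by the selector $y^{a_1 a_2}$ switches the best-response inequality \emph{on} precisely for the selected profile and \emph{off}, rendering it trivially satisfied, for all the others; this is what makes the ``if-then'' logic of pessimistic equilibrium selection exactly representable by linear constraints once $x_3$ is held fixed. Given this correspondence, the minimisation in~\eqref{prob:primal1} ranges exactly over the leader's utilities at all pure NEs, so its optimum equals the value of the pessimistic inner problem, completing the proof.
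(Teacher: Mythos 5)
Your proposal is correct and follows essentially the same route as the paper's own proof: constraints~\eqref{prob:primal2} and~\eqref{prob:primal5} force $y$ to be the indicator of a single followers' profile, the product constraints~\eqref{prob:primal3}--\eqref{prob:primal4} are vacuous ($0 \geq 0$) for unselected profiles and reduce to exactly the best-response inequalities for the selected one, and minimising the objective then selects the pure NE worst for the leader. Your explicit statement of the converse direction and of the infeasibility case (no pure NE) only makes explicit what the paper leaves implicit; there is no substantive difference.
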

\begin{proof}
Note that, in Problem~\eqref{problem:pessn3}, a solution to the followers' problem satisfies $x_1^{a_1}=x_2^{a_2}=1$ for some $(a_1,a_2) \in A_1 \times A_2$ and $x_1^{a_1'}=x_2^{a_2'}=0$ for all $(a_1',a_2') \neq (a_1,a_2)$. Problem~\eqref{prob:primal} encodes this in terms of the variable $y^{a_1a_2}$ by imposing $y^{a_1a_2} = 1$ if an only if $(a_1,a_2)$ is a pessimistic NE. Let us look at this in detail.

Due to Constraints~\eqref{prob:primal2} and~\eqref{prob:primal5}, $y^{a_1a_2}$ is equal to 1 for one and only pair $(a_1,a_2)$. 

Due to Constraints~\eqref{prob:primal3} and~\eqref{prob:primal4}, for all $(a_1,a_2)$ such that $y^{a_1a_2}=1$, there can be no action $a_1' \in A_1$ (respectively, $a_2' \in A_2$) by which follower~1 (respectively, follower~2) could obtain a better payoff, assuming that the other follower would play action~$a_2$ (respectively, action $a_1$). This guarantees that $(a_1,a_2)$ be an NE.
Also note that Constraints~\eqref{prob:primal3} and~\eqref{prob:primal4} boil down to the tautology $0 \geq 0$ for any $(a_1,a_2) \in A_1 \times A_2$ with $y^{a_1a_2} = 0$.

By minimising the objective function (corresponding to the leader's utility), a pessimistic pure NE is found.\qed
\end{proof}

To arrive at a single-level reformulation of Problem~\eqref{problem:pessn3}, we rely on linear programming duality to restate Problem~\eqref{prob:primal} in terms of optimality conditions which do not employ the {\em min} operator. First, we show the following:
\begin{lemma}\label{lemma:tum}
The linear programming relaxation of Problem~\eqref{prob:primal} is integer.
\end{lemma}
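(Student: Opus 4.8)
The plan is to exploit the fact that, once $x_3$ is fixed, each of the best-response constraints~\eqref{prob:primal3} and~\eqref{prob:primal4} is a constraint on a \emph{single} variable $y^{a_1a_2}$. Indeed, the bracketed sums $\sum_{a_3 \in A_3}(U_1^{a_1a_2a_3} - U_1^{a_1'a_2a_3}) x_3^{a_3}$ and $\sum_{a_3 \in A_3}(U_2^{a_1a_2a_3} - U_2^{a_1a_2'a_3}) x_3^{a_3}$ are merely scalars, so each such constraint reads $\gamma\, y^{a_1a_2} \geq 0$ for some $\gamma \in \mathbb{R}$ depending on $x_3$ and on the indices. The first step is therefore to rewrite every constraint of type~\eqref{prob:primal3}--\eqref{prob:primal4} in this univariate form.

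The second step is a sign analysis. In the LP relaxation, the integrality~\eqref{prob:primal5} is replaced by $y^{a_1a_2} \geq 0$. Fixing a variable $y^{a_1a_2}$ and collecting all the constraints~\eqref{prob:primal3}--\eqref{prob:primal4} that involve it: if the associated scalar $\gamma$ is strictly negative in at least one of them, then $\gamma\, y^{a_1a_2} \geq 0$ together with $y^{a_1a_2} \geq 0$ forces $y^{a_1a_2} = 0$; if every such scalar is nonnegative, these constraints are implied by nonnegativity and hence redundant (the case $\gamma = 0$ yields the tautology $0 \geq 0$). Let $E \subseteq A_1 \times A_2$ be the set of pairs $(a_1,a_2)$ for which all associated scalars are nonnegative---equivalently, the pure NEs of the followers' game induced by $x_3$.

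The third step is to identify the resulting feasible region. After eliminating the variables forced to zero, the relaxation of Problem~\eqref{prob:primal} reduces, using~\eqref{prob:primal2}, to
\[
\Big\{ y \geq 0 : \sum_{(a_1,a_2) \in A_1 \times A_2} y^{a_1a_2} = 1, \ y^{a_1a_2} = 0 \ \ \forall (a_1,a_2) \notin E \Big\},
\]
which is exactly the face of the standard simplex in $\mathbb{R}^{m^2}$ spanned by the unit vectors indexed by the pairs in $E$. Since the extreme points of a simplex (or of any of its faces) are precisely these unit vectors, every vertex of the feasible region is integral, and therefore the LP relaxation is integer. If $E = \emptyset$, i.e., the induced followers' game admits no pure NE, the relaxation is infeasible and the claim holds vacuously.

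I do not expect a genuine obstacle here: the whole argument hinges on the single observation that, for fixed $x_3$, constraints~\eqref{prob:primal3}--\eqref{prob:primal4} are univariate and thus act only as sign restrictions on individual variables. The only points requiring a little care are the two degenerate cases handled above, namely $\gamma = 0$ and $E = \emptyset$. An alternative route via total unimodularity of the constraint matrix would be heavier and, since the coefficients are arbitrary rationals rather than entries in $\{0, \pm 1\}$, not directly applicable without first normalising each constraint; the simplex-face argument is cleaner and avoids this difficulty entirely.
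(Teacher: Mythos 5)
Your proposal is correct and takes essentially the same approach as the paper: the identical sign analysis of the scalar coefficients in Constraints~\eqref{prob:primal3}--\eqref{prob:primal4} (zero gives a tautology, positive is subsumed by nonnegativity, negative forces $y^{a_1a_2}=0$) reduces the relaxation to the single constraint $\sum_{a_1,a_2} y^{a_1a_2}=1$ over the surviving variables. The only cosmetic difference is the final step---the paper invokes total unimodularity of the resulting single all-one constraint (not of the original coefficient matrix, so your objection to the TU route does not apply to what the paper actually does), while you observe directly that the feasible region is a face of the standard simplex whose vertices are unit vectors---and these are interchangeable ways of closing the same trivial remaining case.
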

\begin{proof}
Let us focus on Constraints~\eqref{prob:primal3} and analyze, for all $(a_1,a_2) \in A_1 \times A_2$ and $a_1' \in A_1$, the coefficient $\sum_{a_3 \in A_3} (U_1^{a_1a_2a_3} - U_1^{a_1'a_2a_3}) x_3^{a_3}$ which multiplies $y^{a_1a_2}$. The coefficient is equal to the {\em regret} player 1 would suffer from not playing action $a_1'$. If equal to 0, we have the tautology $0 \geq 0$. If $> 0$, we obtain, after dividing by $\sum_{a_3 \in A_3} (U_1^{a_1a_2a_3} - U_1^{a_1'a_2a_3}) x_3^{a_3}$ both sides of the constraint, $y^{a_1a_2} \geq 0$, which is subsumed by the nonnegativity of $y^{a_1a_2}$. If $< 0$, we obtain, after diving both sides of the constraint again by $\sum_{a_3 \in A_3} (U_1^{a_1a_2a_3} - U_1^{a_1'a_2a_3}) x_3^{a_3}$, $y^{a_1a_2} \leq 0$, which implies $y^{a_1a_2}=0$. A similar reasoning applies to Constraints~\eqref{prob:primal4}.

Let us now define $O$ as the set of pairs $(a_1,a_2)$ such that there is as least an action $a_1'$ or $a_2'$ for which one of the followers suffers from a strictly negative regret. We have $O := \{(a_1,a_2) \in A_1 \times A_2: \exists a_1' \in A_1 \text{ with } \sum_{a_3 \in A_3} (U_1^{a_1a_2a_3} - U_1^{a_1'a_2a_3}) x_3^{a_3} < 0 \vee \exists a_2' \in A_2 \text{ with } \sum_{a_3 \in A_3} (U_1^{a_1a_2a_3} - U_1^{a_1a_2'a_3}) x_3^{a_3} < 0\}$.

Relying on $O$, Problem~\eqref{prob:primal} can be rewritten as:
\begin{align*}
\min_y
  & \sum_{a_1 \in A_1} \sum_{a_2 \in A_2} y^{a_1a_2} \sum_{a_3 \in A_3} U_3^{a_1a_2a_3} x_3^{a_3}\\
\text{s.t.}
  & \sum_{a_1 \in A_1} \sum_{a_2 \in A_2} y^{a_1a_2} = 1\\
  & y^{a_1a_2} = 0 & \forall (a_1,a_2) \in O\\
  & y^{a_1a_2} \in \mathbb{Z}_+ & \forall a_1 \in A_1, a_2 \in A_2.
\end{align*}
%
Since, after discarding all variables $y^{a_1a_2}$ with $(a_1,a_2) \in O$, we obtain a problem with a single all-one constraint (whose constraint matrix is, therefore, totally unimodular), the integrality constraints can be dropped.\qed
\end{proof}

As a consequence of Lemma~\ref{lemma:tum}, the following can, finally, be established:
\begin{theorem}
The following single-level Quadratically Constrained Quadratic Program (QCQP) is an exact reformulation of Problem~\eqref{problem:pessn3}:

\begin{subequations}\label{prob:reform}
\small
\begin{align}
\label{prob:reform1}
\sup_{\substack{x_3,y\\\beta_1,\beta_2}} \quad
  & \sum_{a_1 \in A_1} \sum_{a_2 \in A_2} y^{a_1a_2}\sum_{a_3 \in A_3} U_3^{a_1a_2a_3} x_3^{a_3} \\
\label{prob:reform2} \text{\em s.t.} \quad
  & \sum_{a_1 \in A_1} \sum_{a_2 \in A_2} y^{a_1a_2} = 1\\
\label{prob:reform3}
  & y^{a_1a_2} \sum_{a_3 \in A_3} (U_1^{a_1a_2a_3} - U_1^{a_1'a_2a_3}) x_3^{a_3} \geq 0 & \forall a_1\in A_1, a_2 \in A_2, a_1' \in A_1\\
\label{prob:reform4}
  & y^{a_1a_2} \sum_{a_3 \in A_3} (U_2^{a_1a_2a_3}  -  U_2^{a_1a_2'a_3}) x_3^{a_3} \geq 0 & \forall a_1\in A_1, a_2 \in A_2, a_2' \in A_2\\
\nonumber
  & \sum_{a_1 \in A_1} \sum_{a_2 \in A_2} y^{a_1a_2} \sum_{a_3 \in A_3} U_3^{a_1a_2a_3} x_3^{a_3} \leq \sum_{a_3 \in A_3} U_3^{a_1a_2a_3} x_3^{a_3} + \hspace{-5cm}\\
\nonumber  & - \sum_{a_1' \in A_1} \beta_1^{a_1a_2a_1'}  \sum_{a_3 \in A_3} (U_1^{a_1a_2a_3} - U_1^{a_1'a_2a_3}) x_3^{a_3} + \hspace{-5cm}\\
\label{prob:reform5}
  & - \sum_{a_2' \in A_2} \beta_2^{a_1a_2a_2'} \sum_{a_3 \in A_3} (U_2^{a_1a_2a_3} - U_2^{a_1a_2'a_3}) x_3^{a_3} \hspace{-.5cm} & \forall a_1 \in A_1, a_2 \in A_2\\
\label{prob:reform5bis}
  & \sum_{a_3 \in A_3} x_3 = 1\\
\label{prob:reform6}
  & \beta_1^{a_1a_2a_1'} \geq 0 & \forall a_1 \in A_1, a_2 \in A_2, a_1' \in A_1\\
\label{prob:reform7}
  & \beta_2^{a_1a_2a_2'} \geq 0 & \forall a_1 \in A_1, a_2 \in A_2, a_2' \in A_2\\
\label{prob:reform8}
  & y^{a_1a_2} \geq 0 & \forall a_1 \in A_1, a_2 \in A_2\\
\label{prob:reform9}
  & x_3^{a_3} \geq 0 & \forall a_3 \in A_3.
\end{align}
\end{subequations}
\end{theorem}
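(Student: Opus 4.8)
The plan is to derive the single-level formulation~\eqref{prob:reform} by replacing the inner followers' problem in~\eqref{problem:pessn3} with the optimality conditions of a linear program, exploiting the integrality property of Lemma~\ref{lemma:tum}. Fix the leader's strategy $x_3$. By Lemma~\ref{lemma:tum}, the followers' problem~\eqref{prob:primal} may be solved as the linear program obtained by dropping the integrality requirement~\eqref{prob:primal5} in favour of $y^{a_1a_2}\geq 0$; for fixed $x_3$ this is a genuine LP, since the quantity $\sum_{a_3\in A_3} U_3^{a_1a_2a_3} x_3^{a_3}$ in the objective and the bracketed coefficients of~\eqref{prob:primal3}--\eqref{prob:primal4} are constants. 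First I would form its dual, associating a free variable $\lambda$ with the equality~\eqref{prob:primal2} and nonnegative variables $\beta_1^{a_1a_2a_1'}$ and $\beta_2^{a_1a_2a_2'}$ with~\eqref{prob:primal3} and~\eqref{prob:primal4}. As each $y^{a_1a_2}$ is sign-constrained and the primal is a minimisation, the dual is $\max \lambda$ subject to $\beta_1,\beta_2\geq 0$ and, for every pair $(a_1,a_2)$, the single constraint $\lambda \leq \sum_{a_3} U_3^{a_1a_2a_3}x_3^{a_3} - \sum_{a_1'}\beta_1^{a_1a_2a_1'}\sum_{a_3}(U_1^{a_1a_2a_3}-U_1^{a_1'a_2a_3})x_3^{a_3} - \sum_{a_2'}\beta_2^{a_1a_2a_2'}\sum_{a_3}(U_2^{a_1a_2a_3}-U_2^{a_1a_2'a_3})x_3^{a_3}$.

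The crux is to observe that the right-hand side of this dual constraint coincides exactly with the right-hand side of~\eqref{prob:reform5}, while the left-hand side of~\eqref{prob:reform5} is precisely the primal objective~\eqref{prob:reform1}. Hence~\eqref{prob:reform5} is the dual feasibility condition in which the dual objective $\lambda$ has been set equal to the primal objective value $\sum_{a_1,a_2} y^{a_1a_2}\sum_{a_3}U_3^{a_1a_2a_3}x_3^{a_3}$. I would then argue, for each fixed $x_3$ for which the followers' game admits at least one pure NE (so that~\eqref{prob:primal} is feasible and, being the minimisation of a linear objective over a nonempty face of the simplex, attains an optimum $f(x_3)$), as follows. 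Given any $(y,\beta_1,\beta_2)$ satisfying~\eqref{prob:reform2}--\eqref{prob:reform9}, the vector $y$ is primal feasible and $(\lambda,\beta_1,\beta_2)$ with $\lambda := \sum_{a_1,a_2}y^{a_1a_2}\sum_{a_3}U_3^{a_1a_2a_3}x_3^{a_3}$ is dual feasible. Weak duality gives $\lambda \leq f(x_3)$, whereas primal feasibility of $y$ gives $\sum_{a_1,a_2}y^{a_1a_2}\sum_{a_3}U_3^{a_1a_2a_3}x_3^{a_3} \geq f(x_3)$; since these two quantities are equal by the definition of $\lambda$, both inequalities collapse to $\sum_{a_1,a_2}y^{a_1a_2}\sum_{a_3}U_3^{a_1a_2a_3}x_3^{a_3} = f(x_3)$. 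Conversely, taking $y$ to be a minimiser of~\eqref{prob:primal} and $(\beta_1,\beta_2)$ an optimal dual solution produces a point feasible for~\eqref{prob:reform2}--\eqref{prob:reform9} with objective value $f(x_3)$.

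Combining these two directions, for every admissible $x_3$ the set of objective values attainable by feasible $(y,\beta_1,\beta_2)$ is the singleton $\{f(x_3)\}$, and the feasible region of~\eqref{prob:reform} projects onto exactly those $x_3\in\Delta_3$ that induce a pure NE (the convention $f(x_3)=-\infty$ otherwise matching the primal infeasibility). Therefore maximising~\eqref{prob:reform1} over all feasible $(x_3,y,\beta_1,\beta_2)$ equals $\sup_{x_3\in\Delta_3} f(x_3)$, which is the value of~\eqref{problem:pessn3}; the use of $\sup$ rather than $\max$ is retained, consistently with Proposition~\ref{prop:nonexistence}. The generalisation to $n\geq 3$ is notationally heavier but structurally identical.

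The step I expect to be the main obstacle, and the conceptual heart of the argument, is the encoding of the pessimistic inner \emph{minimisation} through constraint~\eqref{prob:reform5}. Naively maximising~\eqref{prob:reform1} over the primal-feasible NEs alone would select an \emph{optimistic} NE; it is precisely the identification of the dual objective $\lambda$ with the primal objective in~\eqref{prob:reform5} that caps the leader's utility at the worst-case value $f(x_3)$ and thereby enforces pessimism. The delicate points are verifying that this cap is tight and that strong duality applies for every $x_3$ admitting an NE, and checking that the bilinear products $y^{a_1a_2}x_3^{a_3}$ and $\beta\, x_3$ do not disrupt the duality argument, which is carried out pointwise in $x_3$ where these products are constants; the remaining manipulations are routine.
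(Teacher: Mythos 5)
Your proposal is correct and takes essentially the same route as the paper's proof: invoke Lemma~\ref{lemma:tum} to replace the followers' problem by its LP relaxation, dualise it with multipliers $\beta_1,\beta_2$ plus a scalar dual variable for the simplex equality, and obtain Constraints~\eqref{prob:reform5} by identifying that scalar with the primal objective, finally taking the supremum over $x_3$. The only cosmetic difference is procedural: the paper imposes primal feasibility, dual feasibility and equality of the two objectives and then eliminates the scalar dual variable $\alpha$ via Fourier--Motzkin, whereas you substitute $\lambda$ equal to the primal objective directly into dual feasibility and verify tightness through a weak-duality collapse---the same maneuver in substance.
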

\begin{proof}
First, by relying on Lemma~\ref{lemma:tum}, we introduce the linear programming dual of the linear programming relaxation of Problem~\eqref{prob:primal}. Letting $\alpha$, $\beta_1^{a_1a_2a_1'}$, and $\beta_2^{a_1a_2a_2'}$ be the dual variables of, respectively, Constraints~\eqref{prob:primal2},~\eqref{prob:primal3}, and~\eqref{prob:primal4}, the dual reads:
\begin{subequations}
\begin{align*}
\max_{\alpha,\beta_1,\beta_2} \;
  & \alpha \\
\text{s.t.} \;
  & \alpha +  \sum_{a_1' \in A_1} \beta_1^{a_1a_2a_1'} \sum_{a_3 \in A_3} (U_1^{a_1a_2a_3} - U_1^{a_1'a_2a_3}) x_3^{a_3}  + \hspace{-2cm}\\
  & \displaystyle+ \sum_{a_2' \in A_2} \beta_2^{a_1a_2a_2'} \sum_{a_3 \in A_3} (U_2^{a_1a_2a_3} - U_2^{a_1a_2'a_3}) x_3^{a_3} \hspace{-2cm}\\
  & \leq \sum_{a_3 \in A_3} U_3^{a_1a_2a_3} x_3^{a_3} & \; \forall a_1 \in A_1, a_2 \in A_2\\
  & \alpha \geq 0\\
  & \beta_1^{a_1a_2a_1'} \geq 0 & \forall a_1 \in A_1, a_2 \in A_2, a_1' \in A_1\\
  & \beta_2^{a_1a_2a_2'} \geq 0 & \forall a_1 \in A_1, a_2 \in A_2, a_2' \in A_2.
\end{align*}
\end{subequations}
A set of optimality conditions for Problem~\eqref{prob:primal} can then be derived by simultaneously imposing primal and dual feasibility for the sets of primal and dual variables (by imposing the respective constraints) and equating the objective functions of the two problems.

The dual variable $\alpha$ can be projected out from the resulting formulation via Fourier-Motzkin elimination, leading to Constraints~\eqref{prob:reform5}.

The result in the claim is obtained after introducing the leader's utility as the objective function of the problem and then casting the problem as a maximisation problem (in which a supremum is sought). \qed
\end{proof}

\subsection{Unboundedness}


Let us provide an interpretation of Problem~\eqref{prob:reform} from a purely game-theoretical perspective. As the left-hand side of each instance of Constraints~\eqref{prob:reform5} is equal to the leader's utility function (which is maximised), Constraints~\eqref{prob:reform5} account for the {\em maximin} aspect of the problem, imposing that the leader's utility be nonlarger than the utility she could obtain in any of the NEs arising in the followers' game. Observe that, for each $(a_1,a_2) \in A_1 \times A_2$, if $\exists a_1' \in A_1:  \sum_{a_3 \in A_3} (U_1^{a_1a_2a_3} - U_1^{a_1'a_2a_3}) x_3^{a_3} < 0$ (and, thus, $(a_1,a_2)$ is not an NE), the corresponding constraint can be trivially satisfied by letting $\beta_1^{a_1a_2a_1'} = \infty$. Similarly, for each $(a_1,a_2) \in A_1 \times A_2$, if $\exists a_2' \in A_2:  \sum_{a_3 \in A_3} (U_2^{a_1a_2a_3} - U_2^{a_1a_2'a_3}) x_3^{a_3} < 0$  (and, thus, $(a_1,a_2)$ is not an NE), the corresponding constraint can be trivially satisfied by letting $\beta_2^{a_1a_2a_2'} = \infty$. Differently, if the two aforementioned coefficients are $\geq 0$ for all $a_1' \in A_1$ and for all $a_2' \in A_2$, then, w.l.o.g., $\beta_1^{a_1a_2a_1'}=\beta_2^{a_1a_2a_2'} = 0$, as this corresponds to imposing the smallest restriction on the leader's utility. In this case, in particular, the instance of Constraint~\eqref{prob:reform5} amounts to $\sum_{a_1 \in A_1} \sum_{a_2 \in A_2} y^{a_1a_2} \sum_{a_3 \in A_3} U_3^{a_1a_2a_3} x_3^{a_3} \leq \sum_{a_3 \in A_3} U_3^{a_1a_2a_3} x_3^{a_3}$, thus imposing that the leader's utility be nonlarger than the value she should obtain at the NE corresponding to pair $(a_1,a_2)$.\footnote{Let us note that the primal part of Problem~\eqref{prob:reform} is necessary to obtain a bounded objective function value. Indeed, a purely dual formulation where a dummy variable $\eta$ is maximised, subject to being upper bounded by all the right-hand sides in Constraints~\eqref{prob:reform5}, but containing no primal part, would only be correct in case the followers' game admitted a pure NE {\em for every} $x_3 \in \Delta_3$. Indeed, if this were not the case, any $x_3$ for which the followers' game admitted no pure NE would lead to positive coefficients in each of the right-hand sides of Constraints~\eqref{prob:reform5}. Thus, by letting $\beta_1^{a_1a_2a_1'}=\beta_2^{a_1a_2a_2'} = \infty$ for all $a_1 \in A_1, a_2 \in A_2, a_1' \in A_1, a_2' \in A_2$, $\eta$ would be unbounded. This formulation would, though, be correct for a variant of the problem tackled here where the leader looks for, primarily, a strategy $x_3$ such that the followers' game admits no NE and, only if this is not possible, for a strategy maximising her utility in the worst-case NE.} 


In line with the results in Proposition~\ref{prop:nonexistence} of Subsection~\ref{subsec:preliminary}, the following holds:
\begin{proposition}\label{prop:unbounded}
In the general case, Problem~\eqref{prob:reform} admits a supremum but not a maximum. In particular, there is, in general, no finite upper bound on the dual variables $\beta_1^{a_1a_2a_1'}$ and $\beta_2^{a_1a_2a_2'}$ whose introduction would not restrict the set of optimal solutions of the problem.
\end{proposition}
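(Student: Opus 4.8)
The plan is to establish the two parts of Proposition~\ref{prop:unbounded} by reusing the three-player instance constructed in the proof of Proposition~\ref{prop:nonexistence}, for which we already know that the leader's pessimistic utility attains a supremum of $5+\frac{5}{2}$ that is not achieved by any $x_3 \in \Delta_3$. Since Problem~\eqref{prob:reform} is an exact reformulation of Problem~\eqref{problem:pessn3} (by the preceding theorem), and Problem~\eqref{problem:pessn3} is exactly the $n=3$ instance of Problem~\eqref{problem:pess}, any optimal value of~\eqref{prob:reform} equals $\sup_{x_3 \in \Delta_3} f(x_3)$. Thus the nonexistence of a maximum transfers verbatim: were \eqref{prob:reform} to admit a feasible point attaining the supremum, the corresponding $x_3$ (read off from the $x_3$ variables) together with the $y^{a_1 a_2}$ selecting the worst NE would certify a leader's strategy achieving $f(x_3)=5+\frac{5}{2}$, contradicting Proposition~\ref{prop:nonexistence}. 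First I would state this correspondence precisely and conclude the first sentence of the claim.

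For the second, more delicate sentence, the plan is to examine what happens in~\eqref{prob:reform} as $\rho \to \frac{1}{2}^-$ along the critical boundary. The key mechanism, already flagged in the discussion preceding the proposition, is that a pair $(a_1,a_2)$ fails to be an NE exactly when some regret coefficient $\sum_{a_3} (U_1^{a_1 a_2 a_3} - U_1^{a_1' a_2 a_3}) x_3^{a_3}$ is strictly negative, and the matching instance of Constraint~\eqref{prob:reform5} can then only be satisfied by driving the associated $\beta$ to $+\infty$ (i.e.\ making its product with a strictly negative coefficient arbitrarily large). In the witness game, the low-utility equilibrium $(a_1^2,a_2^1)$ becomes an NE precisely at $\rho = \frac{1}{2}$; for $\rho < \frac{1}{2}$ it is not an NE, so its governing constraint must be relaxed through an unbounded $\beta$, while the leader's utility $5+5\rho$ keeps climbing toward the supremum. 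I would argue that any \emph{finite} a priori bound $\beta_1,\beta_2 \le M$ would force, for $\rho$ sufficiently close to $\frac{1}{2}$, the constraint tied to $(a_1^2,a_2^1)$ to be active with its negative coefficient still nonzero, thereby capping the objective strictly below $5+\frac{5}{2}$ and cutting off a tail of feasible solutions whose values approach the supremum. Hence no finite upper bound on the $\beta$ variables can be imposed without shrinking the set of solutions attaining values arbitrarily close to the supremum.

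The main obstacle I anticipate is making the informal ``$\beta = \infty$'' reasoning rigorous: the $\beta$ variables do not literally take value $+\infty$ at optimality, so the argument must be phrased in terms of a sequence. Concretely, I would exhibit a sequence $\{(x_3^{(k)}, y^{(k)}, \beta_1^{(k)}, \beta_2^{(k)})\}$ of feasible solutions with $x_3^{(k)} = (1-\rho_k, \rho_k)$, $\rho_k \uparrow \frac{1}{2}$, whose objective values converge to $5+\frac{5}{2}$, and show that along this sequence the relevant $\beta$ component must diverge: solving the active instance of Constraint~\eqref{prob:reform5} for that $\beta$ gives a quantity of order $1/|\,\text{regret}\,|$, and the regret coefficient tends to $0$ as $\rho_k \to \frac{1}{2}$. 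This divergence, combined with the fact that the objective values of these solutions are unattainable in the limit, is exactly what rules out any finite, non-restrictive bound. I would then note that the same construction generalises to $n \ge 3$ followers, so that the conclusion holds for Problem~\eqref{prob:reform} in full generality, completing the proof.
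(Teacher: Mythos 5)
Your proposal is correct and takes essentially the same route as the paper: the paper's proof uses the very same witness game from Proposition~\ref{prop:nonexistence}, the same sequence $x_3=(1-\rho,\rho)$ with $\rho = \frac{1}{2}-\epsilon$, $\epsilon \downarrow 0$, and the same observation that the instance of Constraints~\eqref{prob:reform5} tied to the outcome $(a_1^2,a_2^1)$ — whose regret coefficient vanishes as $\rho \to \frac{1}{2}$ — forces $\beta_1^{211}+\beta_2^{212} = \frac{7.5-5\epsilon}{\epsilon} \to \infty$ while the objective climbs to $7.5$, unattainable at $\epsilon = 0$ where the value drops to $1$. The only (harmless) difference is that for the first sentence you transfer the nonexistence of a maximum abstractly through the exactness of the reformulation, whereas the paper re-derives it by explicit computation inside Problem~\eqref{prob:reform} — a computation your plan needs anyway for the unboundedness part, so the two arguments effectively coincide (your closing remark about generalising to more followers is superfluous, since Problem~\eqref{prob:reform} is stated for $n=3$ and the witness game is already an instance of it).
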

\begin{proof}
Since the leader's objective function is bounded, Problem~\eqref{prob:reform} clearly admits a supremum.
To show that, in the general case, the problem does not admit a maximum over the reals, we show that there is at least a game instance which admits a sequence of feasible solutions whose map under the objective function converges to the supremum, while the series itself converges to a point where the objective function value is strictly smaller than the supremum itself. Consider once again the game introduced in the proof of Proposition~\ref{prop:nonexistence}. For this game, letting $x_3 = (1-\rho, \rho)$ for some $\rho \in [0,1]$, Constraints~\eqref{prob:reform5} read for, in order, $(a_1,a_2)=(1,1)$, $(a_1,a_2)=(1,2)$, $(a_1,a_2)=(2,1)$, and $(a_1,a_2)=(2,2)$, as follows:
\begin{align*}
y^{11}(0) +y^{12} (5+5\rho) + y^{21} (1) + y^{22} (0) & \leq 0 - \beta_1^{111}(0.5-\rho) - \beta_2^{112} (-1-\rho)\\
y^{11}(0) +y^{12} (5+5\rho) + y^{21} (1) + y^{22} (0) & \leq 5+5\rho - \beta_1^{122}(1+\rho) - \beta_2^{121} (1+\rho)\\
y^{11}(0) +y^{12} (5+5\rho) + y^{21} (1) + y^{22} (0) & \leq 1 - \beta_1^{211}(-0.5+\rho) - \beta_2^{212} (-0.5+\rho)\\
y^{11}(0) +y^{12} (5+5\rho) + y^{21} (1) + y^{22} (0) & \leq 0 - \beta_1^{221}(-1-\rho) - \beta_2^{221} (0.5-\rho).
\end{align*}

As discussed in the proof of Proposition~\ref{prop:nonexistence}, this game admits two NEs when $\rho=0.5$, namely, $(1,2)$ and $(2,1)$. Letting $y^{11} = y^{22} = 0$ and $\rho = 0.5$, Constraints~\eqref{prob:reform5} become:
\begin{align*}
7.5 y^{12} + y^{21} & \leq 0 + 1.5 \beta_2^{112}\\
7.5 y^{12} + y^{21} & \leq 7.5 - 1.5 \beta_1^{122} - 1.5\beta_2^{121}\\
7.5 y^{12} + y^{21} & \leq 1\\
7.5 y^{12} + y^{21} & \leq 0 + 1.5 \beta_1^{221}.
\end{align*}
As the left-hand side of each inequality corresponds to the objective function value (to be maximised), we look for nonnegative values for variables $\beta_2^{112},\beta_1^{112},\beta_2^{121},\beta_1^{221}$ for which the right-hand sides are as large as possible. W.l.o.g., we have $\beta_1^{122} = \beta_2^{121} = 0$ for the variables in the second constraint, as the coefficients multiplying them are negative.
As to the other two variables contained in the first and fourth constraints, it suffices to let $\beta_2^{112} = \beta_2^{221}  = 5$,
as, this way, the two right-hand sides become equal to $7.5$, thus only implying a trivially valid upper bound on the objective function value (as, due to $y^{12} + y^{21} = 1$, $7.5 y^{12} + y^{21} \leq 7.5$ holds). Overall, the four constraints impose $7.5 y^{12} + y^{21} \leq \min\{7.5, 1\} = 1$, which leads to a unique optimal solution of value 1 with $y^{21} = 1$.

Let now $\rho=0.5-\epsilon$, for some $\epsilon \in \left(0,0.5\right]$. Recall that, as shown in the proof of Proposition~\ref{prop:nonexistence}, $(1,2)$ is the unique pure NE in the followers' game for $\epsilon < 0.5$, while, for $\epsilon = 0.5$, the game admits two NEs: $(1,2)$ and $(2,1)$. Letting $y^{11} = y^{22} = 0$ and $\rho=0.5-\epsilon$, Constraints~\eqref{prob:reform5} read:
\begin{align*}
(7.5 - 5\epsilon) y^{12} + y^{21} & \leq 0 -\epsilon \beta_1^{112} + \left(1.5-\epsilon\right) \beta_2^{111}\\
(7.5 - 5\epsilon) y^{12} + y^{21} & \leq 7.5 -5 \epsilon - \left(1.5-\epsilon\right)\beta_1^{122} - \left(1.5-\epsilon\right)\beta_2^{121}\\
(7.5 - 5\epsilon) y^{12} + y^{21} & \leq 1 + \epsilon \beta_1^{211} + \epsilon \beta_2^{212}\\
(7.5 - 5\epsilon) y^{12} + y^{21} & \leq 0 + \left(1.5-\epsilon\right) \beta_1^{221} - \epsilon\beta_2^{221}.
\end{align*}
As $\epsilon \leq 0.5$, the coefficients of $\beta_1^{112}, \beta_1^{122}, \beta_2^{121}$, and $\beta_2^{221}$ are all negative and, thus, we can let $\beta_1^{112} = \beta_1^{122} = \beta_2^{121} = \beta_2^{221} = 0$, w.l.o.g.. As to variables $\beta_2^{111}$ and $\beta_1^{221}$, it suffices to impose $\beta_2^{111} = \beta_1^{221} = \frac{7.5 - 5\epsilon}{1.5-\epsilon}$, which corresponds to bounding the objective function by $7.5 - 5\epsilon$---this is a trivially valid bound as, since $y^{12}$ and $y^{21}$ are binary, $(7.5 - 5\epsilon) y^{12} + y^{21} \leq 7.5 - \epsilon$. As to variables $\beta_1^{211}$ and $\beta_2^{212}$, it suffices to choose any pair of values satisfying $\beta_1^{211} + \beta_2^{212} = \frac{7.5 - 5\epsilon}{\epsilon}$, thanks to which the corresponding constraint becomes $(7.5 - 5\epsilon) y^{12} + y^{21} \leq 7.5 - 5\epsilon$. With these choices, the four constraints boil down to $(7.5 - 5\epsilon) y^{12} + y^{21} \leq 7.5 - 5 \epsilon$, thus leading to an objective function value of $7.5 - 5 \epsilon$.

As it is clear, $7.5 - 5 \epsilon \rightarrow 7.5$ when $\epsilon \rightarrow 0$, which is in contrast to what we previously derived, namely, that the leader's utility is equal to 1 for $\epsilon = 0$. This shows that 7.5 is a supremum, but not a maximum.

The unboundedness of the dual variables $\beta_1^{a_1a_2a_1'}$ and $\beta_2^{a_1a_2a_2'}$ follows by noting that, for $\epsilon \rightarrow 0$, $\beta_1^{211} + \beta_2^{212} = \frac{7.5 - 5\epsilon}{\epsilon} \rightarrow \infty$. \qed
\end{proof}

\subsection{A Restricted Single-Level (MILP) Formulation}


We consider, here, the option of introducing an upper bound of $M$ on both $\beta_1^{a_1a_2a_1'}$ and $\beta_2^{a_1a_2a_2'}$, for all $a_1 \in A_1, a_2 \in A_2, a_1' \in A_1, a_2' \in A_2$. Due to the continuity of the objective function, this suffices to obtain a formulation which, although being a restriction of the original one, always admits a maximum (over the reals) as a consequence of Weierstrass' theorem. Quite conveniently, this restricted reformulation can be cast as an MILP, as we now show.
\begin{theorem}
The following MILP formulation is an exact reformulation of Problem~\eqref{prob:reform} for the case where $\beta_1^{a_1a_2a_1'} \leq M$ and $\beta_2^{a_1a_2a_2'} \leq M$ hold for all $a_1 \in A_1, a_2 \in A_2, a_1' \in A_1, a_2' \in A_2$, and a restricted one when these bounds are not valid:

\begin{subequations}\label{prob:reformMILP}
\small
\begin{align}
\label{prob:reformMILP1}  \max_{\substack{y, x_3,z\\q_1, q_2, p_1,p_2}} \; & \sum_{a_1 \in A_1} \sum_{a_2 \in A_2} \sum_{a_3 \in A_3} U_3^{a_1a_2a_3} z^{a_1a_2a_3} \\
\label{prob:reformMILP2}  \text{s.t.} \; & \sum_{a_1 \in A_1} \sum_{a_2 \in A_2} y^{a_1a_2} = 1\\
\label{prob:reformMILP3}  & \sum_{a_3 \in A_3} (U_1^{a_1a_2a_3} - U_1^{a_1'a_2a_3}) z^{a_1a_2a_3} \geq 0 \hspace{-2cm}& \forall a_1\in A_1, a_2 \in A_2, a_1' \in A_1\\
\label{prob:reformMILP4}  & \sum_{a_3 \in A_3} (U_2^{a_1a_2a_3}  -  U_2^{a_1a_2'a_3}) z^{a_1a_2a_3} \geq 0 \hspace{-2cm}& \forall a_1\in A_1, a_2 \in A_2, a_2' \in A_2\\
\nonumber  & \sum_{a_1 \in A_1} \sum_{a_2 \in A_2} \sum_{a_3 \in A_3} U_3^{a_1a_2a_3} z^{a_1a_2a_3} \leq \sum_{a_3 \in A_3} U_3^{a_1a_2a_3} x_3^{a_3} + \hspace{-20cm}\\
\nonumber & - M \sum_{a_1' \in A_1} \sum_{a_3 \in A_3} (U_1^{a_1a_2a_3} - U_1^{a_1'a_2a_3}) q_1^{a_1a_2a_1'a_3}  +  \hspace{-20cm}\\
\label{prob:reformMILP5} & - M \sum_{a_2' \in A_2}  \sum_{a_3 \in A_3} (U_2^{a_1a_2a_3} - U_2^{a_1a_2'a_3}) q_2^{a_1a_2a_2'a_3}  \hspace{-2.5cm} & \forall a_1 \in A_1, a_2 \in A_2\\
\label{prob:reformMILP5.1} & \sum_{a_3 \in A_3} x_3^{a_3} = 1\\
\label{prob:reformMILPmcz1} & z^{a_1a_2a_3} \geq x_3^{a_3} + y^{a_1a_2} - 1 & \forall a_1\in A_1, a_2 \in A_2, a_3 \in A_3\\
\label{prob:reformMILPmcz2} & z^{a_1a_2a_3} \leq x_3^{a_3}  & \forall a_1\in A_1, a_2 \in A_2, a_3 \in A_3\\
\label{prob:reformMILPmcz3} & z^{a_1a_2a_3} \leq y^{a_1a_2} & \forall a_1\in A_1, a_2 \in A_2, a_3 \in A_3\\
\label{prob:reformMILPmcq1}& q_1^{a_1a_2a_1'a_3} \geq x_3^{a_3} + p_1^{a_1a_2a_1'} - 1 & \forall a_1\in A_1, a_2 \in A_2, a_1' \in A_1, a_3 \in A_3\\
\label{prob:reformMILPmcq2}& q_1^{a_1a_2a_1'a_3} \leq x_3^{a_3}  & \forall a_1\in A_1, a_2 \in A_2, a_1' \in A_1, a_3 \in A_3\\
\label{prob:reformMILPmcq3}& q_1^{a_1a_2a_1'a_3} \leq p_1^{a_1a_2a_1'}  & \forall a_1\in A_1, a_2 \in A_2, a_1' \in A_1, a_3 \in A_3\\
\label{prob:reformMILPmcq4}& q_2^{a_1a_2a_2'a_3} \geq x_3^{a_3} + p_2^{a_1a_2a_2'} - 1  & \forall a_1\in A_1, a_2 \in A_2, a_2' \in A_2, a_3 \in A_3\\
\label{prob:reformMILPmcq5}& q_2^{a_1a_2a_2'a_3} \leq x_3^{a_3}  & \forall a_1\in A_1, a_2 \in A_2, a_2' \in A_2, a_3 \in A_3\\
\label{prob:reformMILPmcq6}& q_2^{a_1a_2a_2'a_3} \leq p_2^{a_1a_2a_2'} & \forall a_1\in A_1, a_2 \in A_2, a_2' \in A_2, a_3 \in A_3\\
\label{prob:reformMILP6.-1} & x_3^{a_3} \geq 0 & \forall a_3 \in A_3\\
\label{prob:reformMILP8}  & y^{a_1a_2} \in \{0,1\} & \forall a_1 \in A_1, a_2 \in A_2\\
\label{prob:reformMILP6}  & p_1^{a_1a_2a_1'} \in \{0,1\} & \forall a_1 \in A_1, a_2 \in A_2, a_1' \in A_1\\
\label{prob:reformMILP7}  & p_2^{a_1a_2a_2'} \in \{0,1\} & \forall a_1 \in A_1, a_2 \in A_2, a_2' \in A_2\\
\label{prob:reformMILP9} & z^{a_1a_2a_3} \geq 0 & \forall a_1\in A_1, a_2 \in A_2, a_3 \in A_3\\
\label{prob:reformMILP10}& q_1^{a_1a_2a_1'a_3} \geq 0 & \forall a_1\in A_1, a_2 \in A_2, a_1' \in A_1, a_3 \in A_3\\
\label{prob:reformMILP11}& q_2^{a_1a_2a_2'a_3} \geq 0 & \forall a_1\in A_1, a_2 \in A_2, a_2' \in A_2, a_3 \in A_3.
\end{align}
\end{subequations}
\end{theorem}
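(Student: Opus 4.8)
The plan is to obtain the MILP from Problem~\eqref{prob:reform} in two moves. First, after adjoining the bounds $\beta_1^{a_1a_2a_1'},\beta_2^{a_1a_2a_2'}\leq M$, I would argue that the dual multipliers may be taken, without loss of generality, at the endpoints of $[0,M]$, so that each equals either $0$ or $M$; this extreme-value choice is encoded by the substitutions $\beta_1^{a_1a_2a_1'}=M\,p_1^{a_1a_2a_1'}$ and $\beta_2^{a_1a_2a_2'}=M\,p_2^{a_1a_2a_2'}$ with $p_1,p_2\in\{0,1\}$, exactly as imposed by Constraints~\eqref{prob:reformMILP6} and~\eqref{prob:reformMILP7}. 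Second, I would linearise the three families of bilinear products that then appear—$y^{a_1a_2}x_3^{a_3}$ and the two products of a binary multiplier by $x_3^{a_3}$—by the standard McCormick inequalities for the product of a binary and a continuous variable.

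For the first move, the crucial observation is that each $\beta_1^{a_1a_2a_1'}$ (resp.\ $\beta_2^{a_1a_2a_2'}$) appears in no term of the objective and in exactly one instance of Constraints~\eqref{prob:reform5}, namely the one indexed by its own pair $(a_1,a_2)$. Hence, for fixed $x_3$ and $y$, the right-hand side of that constraint is a separable affine function of the multipliers, the coefficient of $\beta_1^{a_1a_2a_1'}$ being $-\sum_{a_3\in A_3}(U_1^{a_1a_2a_3}-U_1^{a_1'a_2a_3})x_3^{a_3}$ (and analogously for $\beta_2$). Since relaxing a constraint can only help the maximisation, the optimal value of each multiplier maximises its right-hand side over $[0,M]$ and is therefore attained at a vertex: $\beta=M$ when the associated regret $\sum_{a_3}(U_1^{a_1a_2a_3}-U_1^{a_1'a_2a_3})x_3^{a_3}$ is nonpositive, and $\beta=0$ otherwise (either value being admissible when the regret vanishes). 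This reproduces, in the bounded regime, the $\infty$-versus-$0$ dichotomy already noted before Proposition~\ref{prop:unbounded}, and shows that restricting $\beta$ to $\{0,M\}$ preserves both feasibility and the optimal value whenever $\beta\leq M$ holds; when it does not—i.e., when the supremum requires $\beta\to\infty$, as in the instance of Proposition~\ref{prop:unbounded}—the construction yields only a restriction, as the statement asserts.

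For the second move, I would exploit that $y^{a_1a_2}$ is binary (Constraint~\eqref{prob:reformMILP8}, legitimate by the integrality established in Lemma~\ref{lemma:tum}), that $p_1,p_2$ are binary, and that $x_3^{a_3}\in[0,1]$ by Constraints~\eqref{prob:reformMILP5.1} and~\eqref{prob:reformMILP6.-1}. For a binary $u$ and a continuous $v\in[0,1]$, the product $w=uv$ is captured exactly by $w\geq v+u-1$, $w\leq v$, $w\leq u$, $w\geq0$, since $u=0$ forces $w=0$ and $u=1$ forces $w=v$. Applying this with $w=z^{a_1a_2a_3}=y^{a_1a_2}x_3^{a_3}$ yields Constraints~\eqref{prob:reformMILPmcz1}--\eqref{prob:reformMILPmcz3}; with $w=q_1^{a_1a_2a_1'a_3}=p_1^{a_1a_2a_1'}x_3^{a_3}$ it yields~\eqref{prob:reformMILPmcq1}--\eqref{prob:reformMILPmcq3}; and with $w=q_2^{a_1a_2a_2'a_3}=p_2^{a_1a_2a_2'}x_3^{a_3}$ it yields~\eqref{prob:reformMILPmcq4}--\eqref{prob:reformMILPmcq6}. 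Substituting $z$ for $y\,x_3$ in the objective~\eqref{prob:reform1} and in Constraints~\eqref{prob:reform3}--\eqref{prob:reform4}, and substituting $z$, $q_1$, $q_2$ (together with $\beta=Mp$) in Constraint~\eqref{prob:reform5}, turns these into~\eqref{prob:reformMILP1},~\eqref{prob:reformMILP3}--\eqref{prob:reformMILP4}, and~\eqref{prob:reformMILP5}, completing the reformulation.

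I expect the extreme-value reduction for $\beta$ to be the delicate part: one must verify that each multiplier is coupled to a single constraint and is absent from the objective, so that maximising the separable, affine right-hand sides one constraint at a time is both legitimate and simultaneously optimal across all of them. The McCormick step is then routine, the only checks being the two-case exactness argument above and the bookkeeping of which product replaces which bilinear term; care is also needed to state precisely that the equivalence holds only under $\beta\leq M$ and degrades to a genuine restriction otherwise.
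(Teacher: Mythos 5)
Your proposal is correct and follows essentially the same route as the paper's proof: restrict each dual multiplier to the endpoints $\{0,M\}$, substitute $\beta = M p$ with $p$ binary, and linearise the resulting bilinear products $y^{a_1a_2}x_3^{a_3}$, $p_1^{a_1a_2a_1'}x_3^{a_3}$, and $p_2^{a_1a_2a_2'}x_3^{a_3}$ via McCormick envelopes, which are exact when one factor is binary. The only difference is that you spell out the endpoint reduction (each $\beta$ appears in no objective term and in exactly one instance of Constraints~\eqref{prob:reform5}, whose right-hand side is affine in $\beta$, so maximising it over $[0,M]$ is attained at a vertex), whereas the paper asserts this restriction as clear---your added detail is a welcome justification of the same step, not a different argument.
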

\begin{proof}
After introducing the variable $z^{a_1a_2a_3}$, each bilinear product $y^{a_1a_2} x_3^{a_3}$ in Problem~\eqref{prob:reform} can be linearised by substituting $z^{a_1a_2a_3}$ for it and introducing the McCormick Envelope Constraints~\eqref{prob:reformMILPmcz1}--\eqref{prob:reformMILPmcz3}, which are sufficient to guarantee $z^{a_1a_2a_3} = y^{a_1a_2} x_3^{a_3}$ if $y^{a_1a_2}$ takes binary values~\cite{mccormick1976computability}.
%
Assuming $\beta_1^{a_1a_2a_1'} \in [0,M]$ for each $a_1 \in A_1, a_2 \in A_2, a_1' \in A_1$, we can clearly restrict ourselves to $\beta_1^{a_1a_2a_1'} \in \{0,M\}$. We can, therefore, introduce the variable $p_1^{a_1a_2a_1'} \in \{0,1\}$, substituting $M p_1^{a_1a_2a_1'}$ for each occurrence of $\beta_1^{a_1a_2a_1'}$. This way, for each $a_1 \in A_1, a_2 \in A_2, a_1' \in A_1$, the term $\beta_1^{a_1a_2a_1'}  \sum_{a_3 \in A_3} (U_1^{a_1a_2a_3} - U_1^{a_1'a_2a_3}) x_3^{a_3}$ becomes $ M\sum_{a_3 \in A_3} (U_1^{a_1a_2a_3} - U_1^{a_1'a_2a_3}) p_1^{a_1a_2a_1'} x_3^{a_3}$. We can, then, introduce the variable $q_1^{a_1a_2a_1'a_3}$ and impose $q_1^{a_1a_2a_1'a_3} = p_1^{a_1a_2a_1'} x_3^{a_3}$ via the McCormick Envelope Constraints~\eqref{prob:reformMILPmcq1}--\eqref{prob:reformMILPmcq3}. This way, the term $ M\sum_{a_3 \in A_3} (U_1^{a_1a_2a_3} - U_1^{a_1'a_2a_3}) p_1^{a_1a_2a_1'} x_3^{a_3}$ becomes the completely linear term $M \sum_{a_1' \in A_1} \sum_{a_3 \in A_3} (U_1^{a_1a_2a_3} - U_1^{a_1'a_2a_3}) q_1^{a_1a_2a_1'a_3}$. Similar arguments hold for $\beta_2^{a_1a_2a_2'}$, leading to the introduction of Constraints~\eqref{prob:reformMILPmcq4}--\eqref{prob:reformMILPmcq6}. \qed
\end{proof}

The impact of bounding $\beta_1^{a_1a_2a_1'}$ and $\beta_2^{a_1a_2a_2'}$ by $M$ is explained as follows. Assume that those upper bounds are introduced into Problem~\eqref{prob:reform}. If $M$ is not large enough for the chosen $x_3$ (remember that, as shown in Proposition~\ref{prop:unbounded}, one may need $M \rightarrow \infty$ for $x_3$ approaching a discontinuity point of the leader's utility function), Constraints~\eqref{prob:reform5} may remain active for some $(\hat a_1,\hat a_2)$ which is not an NE for the chosen $x_3$. Let $(a_1,a_2)$ be the worst-case NE the followers would play and assume that the right-hand side of Constraint~\eqref{prob:reform5} for $(\hat a_1,\hat a_2)$ is strictly smaller than the utility the leader would obtain if the followers played the NE $(a_1,a_2)$, namely, $\sum_{a_3 \in A_3} U_3^{\hat a_1 \hat a_2 a_3} x_3^{a_3} - \sum_{a_1' \in A_1} \beta_1^{\hat a_1 \hat a_2 a_1'}  \sum_{a_3 \in A_3} (U_1^{\hat a_1 \hat a_2 a_3} - U_1^{a_1' \hat a_2 a_3}) x_3^{a_3}  - \sum_{a_2' \in A_2} \beta_2^{\hat a_1 \hat a_2 a_2'} \sum_{a_3 \in A_3} (U_2^{\hat a_1 \hat a_2 a_3} - U_2^{\hat a_1 a_2'a_3}) x_3^{a_3} < \sum_{a_3 \in A_3} U_3^{a_1a_2a_3} x_3^{a_3}$. Since, by letting $y^{a_1a_2} = 1$, the constraint would be violated (as, with that value of $y$, the left-hand side of the constraint would be $\sum_{a_3 \in A_3} U_3^{a_1a_2a_3} x_3^{a_3}$, which we assumed to be strictly larger than the right-hand side), this forces the choice of a different $x_3$ for which the upper bound of $M$ on $\beta_1^{a_1a_2a_1'}$ and $\beta_2^{a_1a_2a_2'}$ is sufficiently large not to cause the same issue with the worst-case NE corresponding to that $x_3$, thus restricting the set of strategies the leader could play.

In spite of this, by solving Problem~\eqref{prob:reform}, we are always guaranteed to find optimal (restricted) solutions to it (if $M$ is large enough for the restricted problem to admit feasible solutions). Such solutions correspond to feasible strategies of the leader, guaranteeing her a lower bound on her utility at a P-LFPNE.

\section{Exact Algorithm}\label{sec:exact_algorithm}

We propose, in this section, an exact exponential-time algorithm for the computation of a P-LFPNE, i.e., of $\sup_{x_n \in \Delta_n} f(x_n)$, which does not suffer from the shortcomings of the formulations we introduced in the previous section.
In particular, if there is no $x_n \in \Delta_n$ where the leader's utility $f(x_n)$ achieves $\sup_{x_n \in \Delta_n} f(x_n)$ (as $f(x_n)$ does not admit a maximum), our algorithm also returns, together with the supremum, a strategy $\hat x_n$ which provides the leader with a utility equal to an $\alpha$-approximation (in the additive sense) of the supremum, namely, a strategy $x_n$ satisfying $\sup_{x_n \in \Delta_n} f(x_n) - f(\hat x_n) \leq \alpha$, for any $\alpha>0$ chosen {\em a priori}. We first introduce, in Subsection~\ref{sub_sec:enum_alg}, a version of the algorithm based on explicit enumeration, which we then embed, in Subsection~\ref{sub_sec:bb_algorithm}, into a branch-and-bound scheme.

In the remainder of the section, we denote the closure of a set $X \subseteq \Delta_n$ relative to $\aff(\Delta_n)$ by $\overline{X}$, its boundary relative to $\aff(\Delta_n)$ by $\bd(X)$, and its complement relative to $\Delta_n$ by $X^c$. Note that, here, $\aff(\Delta_n)$ denotes the \emph{affine hull} of $\Delta_n$, i.e., the hyperplane in $\mathbb{R}^{m}$ containing $\Delta_n$.

\subsection{Enumerative Algorithm}\label{sub_sec:enum_alg}

\subsubsection{Computing $\sup_{x_n \in \Delta_n} f(x_n)$}


The key ingredient of our algorithm is what we call {\em outcome configurations}. Letting $A_F = \bigtimes_{p \in F} A_p$, we say that, for a given $x_n \in \Delta_n$, a pair $(S^+, S^-)$ with $S^+ \subseteq A_F$ and $S^- = A_F \setminus S^+$ is an outcome configuration if, in the followers' game induced by $x_n$, all the followers' action profiles $a_{-n} \in S^+$ constitute an NE and all the action profiles $a_{-n} \in S^-$ do not.

For every $a_{-n} \in A_F$, we define $X(a_{-n})$ as the set of all leader's strategies $x_n \in \Delta_n$ for which $a_{-n}$ is an NE in the followers' game induced by~$x_n$. Formally, $X(a_{-n})$ corresponds to the following (closed) polytope:
$$
X(a_{-n}) := \left\{ \begin{array}{lr}x_n \in \Delta_n: & \displaystyle \sum_{a_n \in A_n} U_p^{a_{-n}, a_n} x_n^{a_n} \geq \sum_{a_n \in A_n} U_p^{a_{-n}', a_n} x_n^{a_n} \; \forall p \in F, a_p' \in A_p \setminus \{a_p\}\\
& \text{with } a_{-n}'=(a_1,\ldots,a_{p-1},a_p',a_{p+1},\ldots,a_{n-1})  \end{array}\right\}.
$$

For each $a_{-n} \in A_F$, we also introduce the set $X^c(a_{-n})$ of all $x_n \in \Delta_n$ for which $a_{-n}$ is {\em not} an NE. For that purpose, we first define, for each $p \in F$, the following set:
$$ 
\displaystyle D_p(a_{-n},a_p') := \left\{\begin{array}{lr} x_n \in \Delta_n: & \displaystyle \sum_{a_n \in A_n} U_p^{a_{-n}, a_n} x_n^{a_n} < \sum_{a_n \in A_n} U_p^{a_{-n}', a_n} x_n^{a_n}\\
& \text{with } a_{-n}'=(a_1,\ldots,a_{p-1},a_p',a_{p+1},\ldots,a_{n-1})  \end{array}\right\}.
$$
$D_p(a_{-n},a_p')$, which is a not open nor closed polytope (as it has a missing facet, the one corresponding to its strict inequality), is the set of all values of~$x_n$ for which player $p$ would achieve a better utility by deviating from $a_{-n}$ and playing a different action $a_p' \in A_p$. Since, in principle, any player could deviate from $a_{-n}$ by playing any action not in $a_{-n}$, $X^c(a_{-n})$ is the following {\em disjunctive set}:
$$
\displaystyle X^c(a_{-n}) := \bigcup_{p \in F} \left( \bigcup_{a_p' \in A_p \setminus \{a_p\}} D_p(a_{-n},a_p') \right).
$$

Notice that, since any point in $\bd(X^c(a_{-n}))$ which is not in $\bd(\Delta_n)$ would satisfy, for some $a_p'$, the (strict, originally) inequality of $D_p(a_{-n},a_p')$ as an equation, such point is not in $X^c(a_{-n})$ and, hence, $\bd(X^c(a_{-n})) \cap X^c(a_{-n}) \subseteq \bd(\Delta_n)$. The closure $\overline{X^c(a_{-n})}$ of $X^c(a_{-n})$ is obtained by turning the strict constraint in the definition of each $D_p(a_{-n},a_p')$ into a nonstrict one. An illustration of $X(a_{-n})$ and $X^c(a_{-n})$, together with the closure $\overline{X^c(a_{-n})}$ of the latter, is reported in Figure~\ref{fig:simplices}.


\begin{figure}[h!]
\begin{center}
\begin{tabular}{ccc}
\hspace{-0.3cm}
\includegraphics[scale=0.3]{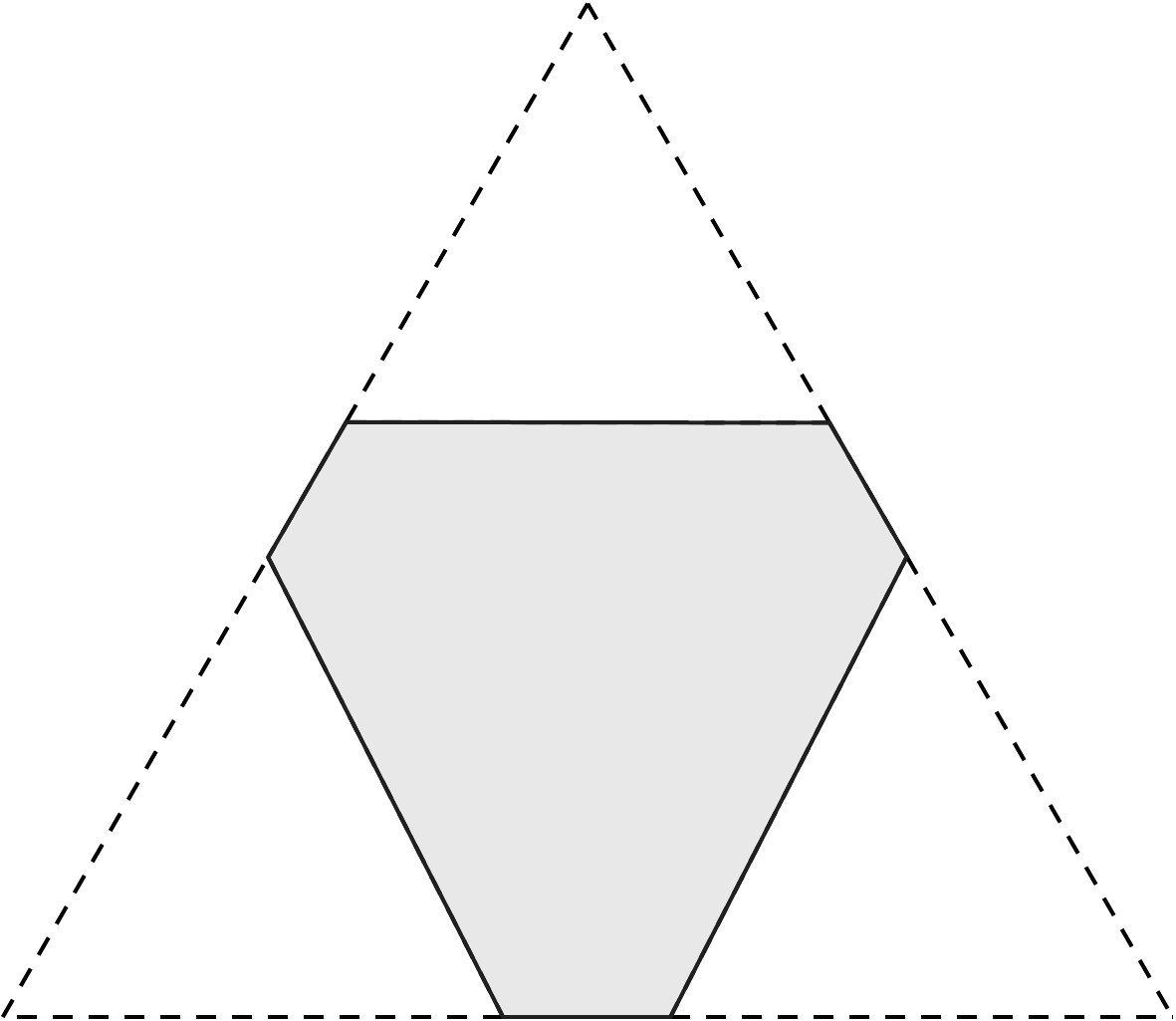} &
\includegraphics[scale=0.35]{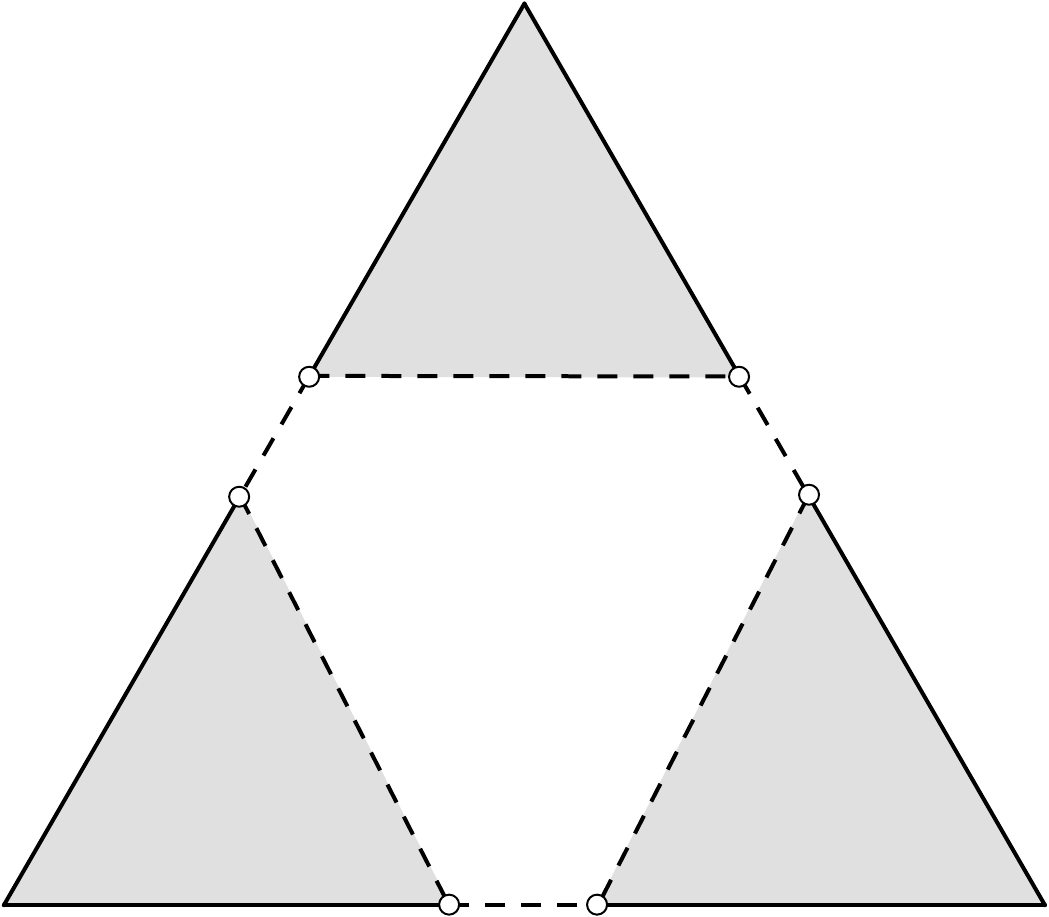} &
\includegraphics[scale=0.35]{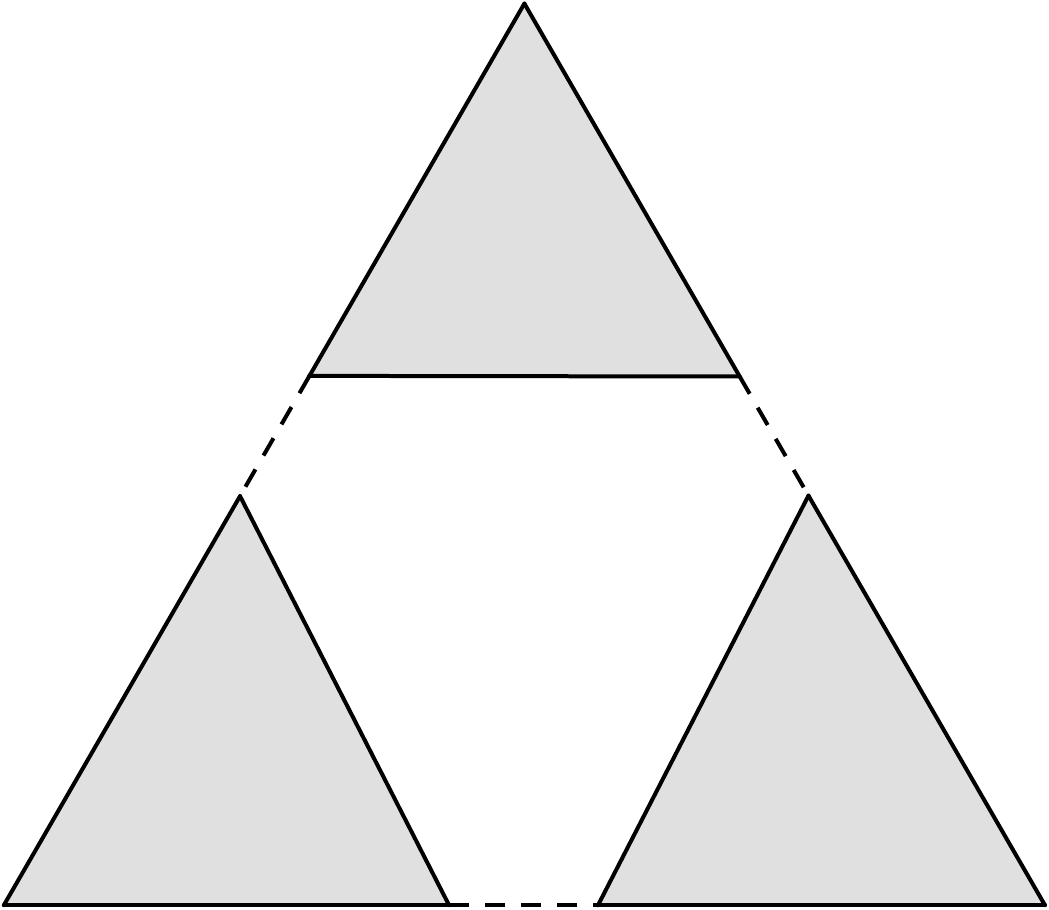}\\
$X(a_{-n})$ & $X^c(a_{-n})$ & $\overline{X^c(a_{-n})}$
\end{tabular}
\end{center}
\caption{An illustration of $X(a_{-n})$, $X^c(a_{-n})$, and $\overline{X^c(a_{-n})}$ for the case with $m=3$. The three sets are depicted as subsets (highlighted in gray and continuous lines) of the leader's strategy space $\Delta_n$. Dashed lines and circles indicate parts of $\Delta_n$ which are not contained in the sets.}
\label{fig:simplices}
\end{figure}

For every outcome configuration $(S^+,S^-)$, we introduce the following sets:
$$X(S^+) := \bigcap_{a_{-n}\in S^+} X(a_{-n})$$
and
$$X(S^-) := \bigcap_{a_{-n}\in S^-} X^c(a_{-n}).$$
While the former is a closed polytope, the latter is the union of not open nor closed polytopes and, thus, it is not open nor closed itself. Similarly to $X^c(a_{-n})$, $X(S^-)$ satisfies $\bd(X(S^-)) \cap X(S^-) \subseteq \bd(\Delta_n)$. The closure $\overline{X(S^-)}$ of $X(S^-)$ is obtained by taking the closure of each $X^c(a_{-n})$. Hence, $\overline{X(S^-)} = \bigcap_{a_{-n}\in S^-} \overline{X^c(a_{-n})}$.


By leveraging these definitions, we can now focus on the set of all leader's strategies which {\em realize} the outcome configuration $(S^+,S^-)$, namely:
$$X(S^+) \cap X(S^-).$$
As for $X(S^-)$, $X(S^+) \cap X(S^-)$ is not an open nor a closed set. Due to $X(S^+)$ being closed, the only points of $\bd(X(S^+) \cap X(S^-))$ which are not in $X(S^+) \cap X(S^-)$ itself are the very points in $\bd(X(S^-))$ which are not in $X(S^-)$. As a consequence,  $\overline{X(S^+) \cap X(S^-)} =  X(S^+) \cap \overline{X(S^-)}$.

Let us define the set $P := \{(S^+,S^-) : S^+ \in 2^{A_F} \wedge S^- = 2^{A_F} \setminus S^+\}$, which contains all the outcome configurations of the game. The following theorem highlights the structure of $f(x_n)$, suggesting an iterative way of expressing the problem of computing $\sup_{x_n \in \Delta_n} f(x_n)$. We will rely on it when designing our algorithm.
%
\begin{theorem}\label{thm:sup}
Let $\displaystyle \psi(x_n;S^+) := \min_{a_{-n} \in S^+} \sum_{a_n \in A_n} U^{a_{-n}, a_n} x_n^{a_n}$. The following holds:
\begin{equation*}
  \sup_{x_n \in \Delta_n} f(x_n) = \max_{\substack{(S^+,S^-) \in P: \\X(S^+)\cap X(S^-) \neq \emptyset}} \; \max_{x_n \in X(S^+) \cap \overline{X(S^-)}} \psi(x_n;S^+).
\end{equation*}
\end{theorem}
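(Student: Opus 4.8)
The plan is to exploit the fact that the collection of realization sets $\{X(S^+)\cap X(S^-) : (S^+,S^-)\in P\}$ partitions $\Delta_n$: every leader's strategy $x_n$ induces a definite set of pure NEs in the followers' game, hence realizes exactly one outcome configuration $(S^+,S^-)$, where $S^+$ is precisely the set of NE profiles at $x_n$ and $S^-=A_F\setminus S^+$. First I would observe that, for any $x_n$ lying in a realization set with $S^+\neq\emptyset$, the pessimistic utility coincides with $\psi$: by definition $f(x_n)$ is the minimum of the leader's expected utility $\sum_{a_n\in A_n}U_n^{a_{-n},a_n}x_n^{a_n}$ taken over the NE profiles $a_{-n}$ at $x_n$, and these are exactly the profiles in $S^+$, so $f(x_n)=\psi(x_n;S^+)$.

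Next, I would decompose the global supremum along this partition. Since $P$ is finite,
\begin{equation*}
\sup_{x_n\in\Delta_n} f(x_n) = \max_{(S^+,S^-)\in P} \; \sup_{x_n\in X(S^+)\cap X(S^-)} f(x_n),
\end{equation*}
where empty realization sets contribute $-\infty$ and can be discarded, leaving the constraint $X(S^+)\cap X(S^-)\neq\emptyset$. Configurations with $S^+=\emptyset$ correspond to strategies $x_n$ inducing no pure NE, for which $f(x_n)=-\infty$ and for which $\psi$'s defining minimum is over the empty set; these must be understood not to contribute to the outer maximum (and if no configuration with $S^+\neq\emptyset$ is realizable, both sides equal $-\infty$). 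For the surviving configurations I substitute $f=\psi(\cdot;S^+)$ from the first step, so that each inner problem becomes $\sup_{x_n\in X(S^+)\cap X(S^-)}\psi(x_n;S^+)$.

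The crux is then to turn each inner supremum over the non-closed set $X(S^+)\cap X(S^-)$ into a maximum over its closure. Here I would use that $\psi(\,\cdot\,;S^+)$, being the pointwise minimum of the finitely many linear maps $x_n\mapsto\sum_{a_n\in A_n}U_n^{a_{-n},a_n}x_n^{a_n}$, is continuous (indeed concave) on all of $\Delta_n$. A continuous function has the same supremum over a set and over its closure: for any point of the closure there is an approximating sequence inside the set along which $\psi$ converges to the value at the limit, so passing to the closure neither loses nor gains value. Invoking the identity $\overline{X(S^+)\cap X(S^-)}=X(S^+)\cap\overline{X(S^-)}$ already established in the excerpt, and noting that this closure is a closed subset of the compact simplex $\Delta_n$, hence compact, Weierstrass' theorem guarantees attainment. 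Thus $\sup_{x_n\in X(S^+)\cap X(S^-)}\psi(x_n;S^+)=\max_{x_n\in X(S^+)\cap\overline{X(S^-)}}\psi(x_n;S^+)$, and reassembling the pieces yields the claimed equality.

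The main obstacle is exactly this passage to the closure. The realization sets are neither open nor closed: their ``missing facets'' stem from the strict inequalities defining the $D_p(a_{-n},a_p')$, so the inner supremum need not be attained within the realization set itself, which is precisely the mechanism behind the non-existence of a global maximum in Proposition~\ref{prop:nonexistence}. The delicate point is that, although the global utility $f$ jumps across the boundaries separating distinct configurations, the restriction of the objective to a \emph{single} configuration is the continuous function $\psi(\cdot;S^+)$; hence within each realization set there is no jump, and replacing $X(S^-)$ by $\overline{X(S^-)}$ adds only boundary points whose $\psi$-value cannot exceed the supremum over the interior part. Verifying this absence of any spurious increase is what legitimizes the closure operation and converts each supremum into a genuine maximum.
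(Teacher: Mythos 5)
Your proposal is correct and follows essentially the same route as the paper's own proof: partition $\Delta_n$ (restricted to strategies inducing at least one pure NE) into the realization sets $X(S^+)\cap X(S^-)$, observe that $f=\psi(\cdot\,;S^+)$ on each cell, decompose the supremum over the finite partition, and use the continuity of $\psi$ together with the identity $\overline{X(S^+)\cap X(S^-)}=X(S^+)\cap\overline{X(S^-)}$ and compactness to convert each inner supremum into a maximum over the closure. Your explicit handling of the $S^+=\emptyset$ configurations is a small point the paper treats implicitly by restricting to $\Delta_n'=\{x_n:f(x_n)>-\infty\}$, but it does not change the argument.
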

\begin{proof}
Let $\Delta'_n$ be the set of leader's strategies $x_n$ for which there exists a pure NE in the followers' game induced by $x_n$, namely, $\Delta'_n := \{x_n \in \Delta_n: f(x_n) > -\infty \}$. Since, by definition, $f(x_n) = -\infty$ for any $x_n \notin \Delta'_n$ and the supremum of $f(x_n)$ is finite due to the finiteness of the payoffs (and assuming the followers' game admits at least a pure NE for some $x_n \in \Delta_n$), we can, w.l.o.g., focus on $\Delta_n'$ and solve $\sup_{x_n \in \Delta_n'} f(x_n)$. In particular, the collection of the sets $X(S^+) \cap X(S^-) \neq \emptyset$ which are obtained for all $(S^+,S^-) \in P$ forms a partition of $\Delta_n'$. Due to the fact that, at any $x_n \in X(S^+) \cap X(S^-)$, the only pure NEs induced by $x_n$ in the followers' game are those in $S^+$, $f(x_n) = \psi(x_n;S^+)$. Since, as it is clear, the supremum of a function defined over a set is equal to the largest of the suprema of that function over the subsets of such set, we have:
\begin{equation*}
	\sup_{x_n \in \Delta_n} f(x_n) = \max_{\substack{(S^+,S^-) \in P: \\X(S^+)\cap X(S^-) \neq \emptyset}} \sup_{x_n \in X(S^+) \cap X(S^-)} \psi(x_n;S^+).
\end{equation*}

What remains to show is that, for all $X(S^+)\cap X(S^-) \neq \emptyset$, the following relationship holds:
$$
\sup_{x_n \in X(S^+)\cap X(S^-)} \psi(x_n;S^+) = \max_{x_n \in X(S^+) \cap \overline{ X(S^-)}} \psi(x_n;S^+).
$$
Since $\psi(x_n;S^+)$ is a continuous function (it is the point-wise minimum of finitely many continuous functions), its supremum over $X(S^+) \cap X(S^-)$ equals its maximum over the closure $\overline{X(S^+) \cap X(S^-)}$ of that set. Hence, the relationship follows due to $\overline{ X(S^+) \cap X(S^-)} = X(S^+) \cap \overline{ X(S^-)}$. \qed
\end{proof}

In particular, Theorem~\ref{thm:sup} shows that $f(x_n)$ is a piecewise function with a piece for each set $X(S^+) \cap X(S^-)$, each of which corresponding to the (continuous over its domain) piecewise-affine function $\psi(x_n;S^+)$. It follows that the only discontinuities of $f(x_n)$, due to which $f(x_n)$ may admit a supremum but not a maximum, are those where, in $\Delta_n$, $x_n$ transitions from a set $X(S^+) \cap X(S^-)$ to another one.


We show how to translate the formula in Theorem~\ref{thm:sup} into an algorithm by proving the following theorem:
\begin{theorem}\label{thm:alg_sup}
There exists a finite, exponential-time algorithm which computes $\sup_{x_n \in \Delta_n} f(x_n)$ and, whenever $\sup_{x_n \in \Delta_n} f(x_n) = \max_{x_n \in \Delta_n} f(x_n)$, also returns a strategy $x_n^*$ with $f(x_n^*) = \max_{x_n \in \Delta_n} f(x_n)$.
\end{theorem}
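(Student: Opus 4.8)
The plan is to turn the characterization of Theorem~\ref{thm:sup} into a doubly-nested, finite enumeration in which every innermost subproblem is a linear program. First I would enumerate the outer index: the set $P$ of outcome configurations $(S^+,S^-)$ has cardinality $2^{|A_F|}=2^{m^{n-1}}$, so it can be listed in finite (exponential) time. For each configuration I must accomplish two tasks: decide whether the \emph{actual} cell $X(S^+)\cap X(S^-)$ is nonempty, as demanded by the formula of Theorem~\ref{thm:sup}, and evaluate $\max_{x_n\in X(S^+)\cap\overline{X(S^-)}}\psi(x_n;S^+)$. Taking the largest of these inner values over all configurations with $X(S^+)\cap X(S^-)\neq\emptyset$ then yields $\sup_{x_n\in\Delta_n}f(x_n)$ by Theorem~\ref{thm:sup}.

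The difficulty in both tasks is that $X(S^-)$ is neither open nor closed and, worse, nonconvex: it is the intersection over $a_{-n}\in S^-$ of the sets $\overline{X^c(a_{-n})}$, each of which is the \emph{union} $\bigcup_{p\in F}\bigcup_{a_p'\in A_p\setminus\{a_p\}}\overline{D_p(a_{-n},a_p')}$ of the closures of the deviation polytopes $D_p(a_{-n},a_p')$. Distributing the intersection over the unions, $\overline{X(S^-)}$ becomes a finite union of closed polytopes indexed by \emph{selectors} $\sigma$, where $\sigma$ picks, for every $a_{-n}\in S^-$, one deviating pair $(p,a_p')$; there are at most $(|F|(m-1))^{|S^-|}$ such selectors, again a finite number. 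For a fixed $\sigma$ the region $Q_\sigma:=X(S^+)\cap\bigcap_{a_{-n}\in S^-}\overline{D_{\sigma(a_{-n})}(a_{-n})}$ is a polytope with explicit linear inequalities, and since $X(S^+)\cap\overline{X(S^-)}=\bigcup_\sigma Q_\sigma$, the inner maximum is the largest over $\sigma$ of the maxima over $Q_\sigma$. Because $\psi(x_n;S^+)$ is a pointwise minimum of finitely many affine functions, hence concave and piecewise-affine, each $\max_{x_n\in Q_\sigma}\psi(x_n;S^+)$ is the LP maximizing an auxiliary variable $t$ subject to $t\le\sum_{a_n\in A_n}U^{a_{-n},a_n}x_n^{a_n}$ for all $a_{-n}\in S^+$ and $x_n\in Q_\sigma$.

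For the nonemptiness test I would, for each selector, maximize over $Q_\sigma$ the common slack $\delta\ge 0$ imposed on the strict inequalities of the selected $D_{\sigma(a_{-n})}(a_{-n})$; the objective is bounded since $x_n\in\Delta_n$, and a strictly positive optimum certifies a point of the actual cell $X(S^+)\cap X(S^-)$, so that $X(S^+)\cap X(S^-)\neq\emptyset$ holds exactly when some selector passes this test. The whole procedure is finite and runs in exponential time, being a product of exponentially many linear programs, each of polynomial size; its output is the value $V:=\sup_{x_n\in\Delta_n}f(x_n)$.

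It remains to recover a maximizer when the supremum is attained. Since $f$ equals $\psi(\cdot;S^+)$ on the actual cell of $(S^+,S^-)$ but drops below it on the non-closed boundary part $\bd(X(S^-))\setminus X(S^-)$ (where the followers' NE set is strictly larger than $S^+$), one has $\sup f=\max f$ iff some actual cell contains a point with $f$-value exactly $V$. I would therefore, for every configuration whose inner value equals $V$ and every selector $\sigma$, solve one further strict-feasibility LP: impose $x_n\in X(S^+)$, the constraints $\sum_{a_n\in A_n}U^{a_{-n},a_n}x_n^{a_n}\ge V$ for all $a_{-n}\in S^+$ (which, as $V$ is the global value, force $\psi(x_n;S^+)=V$), and a strictly positive common slack on the deviations selected by $\sigma$. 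If any such LP is feasible with positive slack, its solution is a point $x_n^*\in X(S^+)\cap X(S^-)$ with $f(x_n^*)=V$, which is returned; if all are infeasible, the supremum is genuinely not a maximum and only $V$ is returned. I expect this max-versus-supremum step, together with the careful replacement of the nonconvex, non-closed $X(S^-)$ by selector-indexed unions of closed polytopes while preserving the nonemptiness hypothesis of Theorem~\ref{thm:sup}, to be the main obstacle; the concave maximization of $\psi$ over each $Q_\sigma$ is, by contrast, a routine LP.
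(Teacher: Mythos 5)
Your proposal is correct, and it reaches the result by a genuinely different computational route than the paper, even though the skeleton (enumerate the outcome configurations of Theorem~\ref{thm:sup}, certify nonemptiness of the actual cell $X(S^+)\cap X(S^-)$ via a strictly positive slack on the strict inequalities, and maximise the concave piecewise-affine $\psi(\cdot\,;S^+)$ over the closed set $X(S^+)\cap\overline{X(S^-)}$) is the same. The paper keeps the disjunctive set $X(S^-;\epsilon)$ intact and encodes each disjunction with big-$M$ binary variables, so that per configuration it solves one emptiness-check MILP (maximising $\epsilon$) and then a single \emph{lexicographic} MILP $\max[\eta;\epsilon]$, solved as two MILPs in sequence; the sign of the optimal $\epsilon$ in the lex-solution simultaneously decides, within that configuration, whether the value is attained in the actual cell rather than only on its boundary. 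You instead distribute the intersection over the unions and enumerate selectors $\sigma$, which replaces every MILP by exponentially many polynomial-size LPs over the closed polytopes $Q_\sigma$, and you decide attainment by a separate \emph{a posteriori} pass: over every configuration whose inner value equals $V$ and every selector, a slack-maximisation LP with the value constraints $\sum_{a_n\in A_n}U_n^{a_{-n},a_n}x_n^{a_n}\ge V$ for $a_{-n}\in S^+$. Both routes are sound: your soundness/completeness argument for the selector test (any point of the actual cell induces a selector with positive slack, and conversely) is exactly what the big-$M$ binaries buy the paper, and your attainment argument closes correctly because every configuration's inner value is at most $V$, so the constraints force $\psi(x_n^*;S^+)=V$ at a point where $f=\psi$. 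What each approach buys: yours avoids integer programming entirely, making correctness elementary (pure LP duality-free reasoning) at the cost of an extra enumeration of up to $((n-1)(m-1))^{|S^-|}$ selectors per configuration --- still single-exponential in the instance size, so the theorem's complexity claim survives; the paper's MILP encoding delegates that combinatorial explosion to the solver's branching (which its later branch-and-bound refinement exploits) and obtains the attainment certificate for free from the lexicographic second objective. One small point in your favour: because you re-examine \emph{all} configurations achieving $V$, your attainment test is robust to ties between configurations, whereas the paper's Algorithm~\ref{alg:ex_enum} inspects only the single stored best configuration.
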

\begin{proof}
The algorithm relies on the expression given in Theorem~\ref{thm:sup}. All pairs $(S^+,S^-) \in P$ can be constructed by enumeration in time exponential in
the size of the instance.\footnote{Recall that the size of a game instance is $\Omega (m^n)$.} 
In particular, the set $P$ contains $2^{m^{n-1}}$ outcome configurations, each corresponding to a bi-partition of the outcomes of the followers' game into $S^+$ and $S^-$ (there are $m^{n-1}$ such outcomes, due to having $m$ actions and $n-1$ followers).

Let us define, for every $p \in F$, the following sets, parametric in $\epsilon \geq 0$:
$$ 
\displaystyle D_p(a_{-n},a_p';\epsilon) := \left\{\begin{array}{lr} x_n \in \Delta_n: & \displaystyle \sum_{a_n \in A_n} U_p^{a_{-n}, a_n} x_n^{a_n} + \epsilon \leq \sum_{a_n \in A_n} U_p^{a_{-n}', a_n} x_n^{a_n}\\
& \text{with } a_{-n}'=(a_1,\ldots,a_{p-1},a_p',a_{p+1},\ldots,a_{n-1})  \end{array}\right\},
$$
$$
\displaystyle X^c(a_{-n};\epsilon) := \bigcup_{p \in F} \left( \bigcup_{a_p' \in A_p \setminus \{a_p\}} D_p(a_{-n},a_p';\epsilon) \right),
$$
$$X(S^-;\epsilon) := \bigcap_{a_{-n}\in S^-} X^c(a_{-n};\epsilon).$$

Notice that, when $\epsilon = 0$, we have $D_p(a_{-n},a_p';0) = \overline{D_p(a_{-n},a_p')}$  and $X(S^-;0) = \overline{X(S^-)}$. Hence, we can verify whether $X(S^+) \cap X(S^-) \neq \emptyset$ by verifying whether there exists some $\epsilon >0$ such that $X(S^+) \cap X(S^-;\epsilon) \neq \emptyset$.
This can be done by solving the following problem and checking the strict positivity of $\epsilon$ in its solution:
\begin{equation}\label{prob:eps}
\begin{array}{ll}
\displaystyle
\max_{\epsilon,x_n} \quad & \epsilon\\
\text{s.t.}\quad & x_n \in X(S^+) \cap X(S^-;\epsilon)\\
                 & \epsilon \geq 0\\
                 & x_n \in \Delta_n.
\end{array}
\end{equation}
Problem~\eqref{prob:eps} can be cast as an MILP. To see this, observe that each $X^c(a_{-n};\epsilon)$ can be expressed as an MILP 
with a binary variable for each term of the disjunction which composes it, namely:
\begin{subequations}\label{cons:milp}
  \begin{align}
\nonumber
\nonumber    & \sum_{a_n \in A_n} U_p^{a_{-n}, a_n} x_n^{a_n} + \epsilon \leq \sum_{a_n \in A_n} U_p^{a_{-n}', a_n} x_n^{a_n} + M_p^{a_{-n},a_p'} z_p^{a_{-n},a_p'} \hspace{-2cm}\\
\label{cons:milp:1}
    & \hspace{.5cm}\forall p \in F, a_p' \in A_p \setminus \{a_p\}, \text{with } a_{-n}' = (a_1,\ldots,a_{p-1},a_p',a_{p+1},\ldots,a_{n-1}) \hspace{-4cm}\\
    & \sum_{p \in F} \sum_{a_p' \in A_p \setminus \{a_p\}} (1-z_p^{a_{-n}, a_p'}) = 1\\
    & z_p^{a_{-n}, a_p'} \in \{0,1\} & \forall p \in F, a_p' \in A_p \setminus \{a_p\}\\
    & x_n \in \Delta_n\\
    & \epsilon \geq 0.
  \end{align}
\end{subequations}
In Constraints~\eqref{cons:milp}, the constant $M_p^{a_{-n}, a_p'}$, which satisfies $M_p^{a_{-n}, a_p'} = \max_{a_n \in A_n} \{U_p^{a_{-n}, a_n} - U_p^{a_{-n}', a_n}\}$, is key to deactivate any instance of Constraints~\eqref{cons:milp:1} when the corresponding $z_p^{a_{-n},a_p'}$ is equal to 1.
%
The set $X(S^-;\epsilon)$ is obtained by simultaneously imposing Constraints~\eqref{cons:milp} for all $a_{-n} \in S^-$.

After verifying $X(S^+) \cap \overline{X(S^-)} \neq \emptyset$ by solving Problem~\eqref{prob:eps}, the value of $\max_{x_n \in X(S^+) \cap \overline{X(S^-)}} \psi(x_n;S^+)$ can be computed in, at most, exponential time by solving the following MILP:
\begin{equation}\label{prob:ex_lex_milp}
\everymath{\displaystyle}
\begin{array}{lll}
	\max_{\eta,x_n} & \eta\\
        \text{s.t.}         & \eta \leq \displaystyle \sum_{a_n \in A_n} U_n^{a_{-n}, a_n} x_n^{a_n} & \forall a_{-n} \in S^+\\
	                    & x_n \in X(S^+) \cap X(S^-;0)\\
                            & \eta \in \mathbb{R}\\
                            & x_n \in \Delta_n,
\end{array}
\end{equation}
where the first constraint accounts for the maxmin aspect of the problem.
The largest value of $\eta$ found over all sets $X(S^+) \cap X(S^-)$, for all $(S^+,S^-) \in P$, corresponds to $\sup_{x_n \in \Delta_n} f(x_n)$.

To verify whether $f(x_n)$ admits $\max_{x_n \in \Delta_n} f(x_n)$ and, if it does, compute it, we solve, in the algorithm, the following problem (rather than the aforementioned $\max_{x_n \in X(S^+) \cap \overline{X(S^-)}} \psi(x_n;S^+)$):
\begin{equation}\label{prob:lexolo}
\lexmax_{\epsilon \geq 0, x_n \in X(S^+) \cap X(S^-; \epsilon)} [\psi(x_n;S^+);\epsilon].
\end{equation}
This problem calls for a pair $(x_n,\epsilon)$ with $x_n \in X(S^+) \cap X(S^-; \epsilon)$ such that, among all pairs which maximise $\psi(x_n;S^+)$, $\epsilon$ is as large as possible. This way, in any solution $(x_n, \epsilon)$ with $\epsilon > 0$ we have $x_n \in X(S^+) \cap X(S^-)$ (rather than $x_n \in X(S^+) \cap \overline{X(S^-)}$). Since, there, $\psi(x_n;S^+) = f(x_n)$, we conclude that $f(x_n)$ admits a maximum (equal to the value of the supremum)
if $\epsilon > 0$, whereas it only admits a supremum if $\epsilon = 0$.

Problem~\eqref{prob:lexolo} can be solved in, at most, exponential time by solving the following lex-MILP:
\begin{equation}\label{prob:lex_milp}
\everymath{\displaystyle}
\begin{array}{lll}
	\max_{\eta,x_n,\epsilon} & \left[ \eta \ ; \epsilon \right]\\
        \text{s.t.}         & \eta \leq \displaystyle \sum_{a_n \in A_n} U_n^{a_{-n}, a_n} x_n^{a_n} & \forall a_{-n} \in S^+\\
	                    & x_n \in X(S^+) \cap X(S^-;\epsilon)\\
                            & \eta \in \mathbb{R}\\
                            & \epsilon \geq 0 \\
                            & x_n \in \Delta_n,
\end{array}
\end{equation}
where $\eta$ is maximised first, and $\epsilon$ second. In practice, it suffices to solve two MILPs in sequence: one in which the first objective function is maximised, and then another one in which the second objective function is maximised after imposing the first objective function to be equal to its optimal value. \qed


\end{proof}

\subsubsection{Finding an $\alpha$-Approximate Strategy}

For those cases where $f(x_n)$ does not admit a maximum, we look for a strategy $\hat x_n$ such that, for a given $\alpha > 0$, $\sup_{x_n \in \Delta_n} f(x_n) - f(\hat x_n) \leq \alpha$, i.e., for an (additively) $\alpha$-approximate strategy $\hat x_n$. Its existence is guaranteed by the following lemma.
\begin{lemma}\label{lemma}
Consider the sets $X \subseteq \mathbb{R}^n$, for some $n \in \mathbb{N}$, and $Y \subseteq \mathbb{R}$, and a function $f : X \rightarrow Y$ with $s := \sup_{x \in X} f(x)$, satisfying $s < \infty$. Then, for any $\alpha \in (0,s]$, there exists an $x \in X: s - f(x) \leq \alpha$.
\end{lemma}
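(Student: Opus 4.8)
The plan is to reduce the statement to the very definition of the supremum as a least upper bound, so that essentially no real work is needed beyond unwinding notation. I would write $f(X) := \{ f(x) : x \in X \} \subseteq \mathbb{R}$ for the image of $f$, so that by definition $s = \sup f(X)$.

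First I would dispose of the degenerate case $X = \emptyset$. In that case $f(X) = \emptyset$ and $s = \sup \emptyset = -\infty$, so the interval $(0,s]$ is empty and the claim holds vacuously. Hence I may assume $X \neq \emptyset$, which guarantees that $f(X) \neq \emptyset$ and, together with the hypothesis $s < \infty$, that $s \in \mathbb{R}$.

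Next, I would fix an arbitrary $\alpha \in (0,s]$ and consider the real number $s - \alpha$. Since $\alpha > 0$, we have $s - \alpha < s$. Because $s$ is the \emph{least} upper bound of $f(X)$, every real number strictly smaller than $s$ fails to be an upper bound of $f(X)$; in particular $s - \alpha$ is not an upper bound. By the definition of an upper bound, there must then exist an element of $f(X)$ exceeding it, i.e.\ an $x \in X$ with $f(x) > s - \alpha$. Rearranging gives $s - f(x) < \alpha$, which in particular yields $s - f(x) \leq \alpha$, as required.

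I expect no genuine obstacle: the result is an immediate consequence of the characterisation of the supremum as the least upper bound, and the only points requiring a little care are the vacuous empty-domain case and the remark that the hypothesis $\alpha \leq s$ is not actually used in establishing the inequality (it merely guarantees $s - \alpha \geq 0$, which is convenient in the intended application where $f$ is the leader's utility). If one preferred to avoid the least-upper-bound phrasing, an equivalent route would be a proof by contradiction: were $f(x) \leq s - \alpha$ to hold for every $x \in X$, then $s - \alpha$ would be an upper bound of $f(X)$ lying strictly below $s$, contradicting the minimality of the supremum.
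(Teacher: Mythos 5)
Your proof is correct and is essentially the paper's argument: the paper proves the claim by contradiction (if $s - f(x) > \alpha$ for all $x \in X$, then $s - \alpha$ would be an upper bound below $s$), which is exactly the contrapositive reformulation you yourself note at the end of your proposal. Your direct least-upper-bound phrasing, the handling of the empty case, and the observation that $\alpha \leq s$ is never used are minor refinements, not a different route.
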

\begin{proof}
By negating the conclusion, we deduce the existence of some $\alpha \in (0,s]$ such that, for every $x \in X$, $s - f(x) > \alpha$. But, then, for all $x \in X$, we have $f(x) < s - \alpha$, which implies $s = \sup_{x \in X} f(x) \leq s - \alpha < s$: a contradiction. \qed
\end{proof}

After running the algorithm we outlined in the proof of Theorem~\ref{thm:sup} to compute the value of the supremum, an $\alpha$-approximate strategy $\hat x_n$ can be computed, {\em a posteriori}, thanks to the following result:
\begin{theorem}\label{thm:apx}
Assume that $f(x_n)$ does not admit a maximum over $\Delta_n$ and that, according to the formula in Theorem~\ref{thm:sup}, $s:= \sup_{x_n \in \Delta_n} f(x_n)$ is attained at some outcome configuration $(S^+,S^-)$. Then, an $\alpha$-approximate strategy~$\hat x_n$ can be computed for any $\alpha > 0$ in, at most, exponential time by solving the following MILP:
\begin{equation}\label{prob:approx_sup}
\everymath{\displaystyle}
\begin{array}{llr}
\max_{\substack{\epsilon,x_n}} \quad & \epsilon\\
\text{{\em s.t.}} \quad & \displaystyle \sum_{a_n \in A_n} U_n^{a_{-n}, a_n} x_n^{a_n}  \geq s - \alpha & \quad \forall a_{-n} \in S^+\\
& x_n \in X(S^+) \cap X(S^-;\epsilon)\\
& \epsilon \geq 0\\
& x_n \in \Delta_n.
\end{array}
\end{equation}
\end{theorem}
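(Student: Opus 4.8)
The plan is to verify three properties of Problem~\eqref{prob:approx_sup}: that it is feasible with a strictly positive value of $\epsilon$, that any feasible solution with $\epsilon > 0$ yields an $\alpha$-approximate strategy, and that it is an MILP solvable in at most exponential time. I would first dispatch the soundness direction, which is the easy one. If $(x_n, \epsilon)$ is feasible with $\epsilon > 0$, then $x_n \in X(S^+) \cap X(S^-;\epsilon)$, and since a strictly positive slack $\epsilon$ forces every deviation inequality defining $X(S^-)$ to hold strictly, we have $x_n \in X(S^+) \cap X(S^-)$. Thus $x_n$ genuinely realizes the outcome configuration $(S^+, S^-)$, so the only pure NEs it induces lie in $S^+$ and $f(x_n) = \psi(x_n; S^+)$. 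The first family of constraints gives $\psi(x_n; S^+) = \min_{a_{-n} \in S^+} \sum_{a_n \in A_n} U_n^{a_{-n}, a_n} x_n^{a_n} \geq s - \alpha$, whence $\hat x_n := x_n$ satisfies $s - f(\hat x_n) \leq \alpha$.

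The crux is establishing feasibility with $\epsilon > 0$, and this is where I expect the main obstacle to lie, as it requires translating the closure-based description underlying Theorem~\ref{thm:sup} into the $\epsilon$-slack description used by the MILP. The bridge I would prove is the identity $X(S^-) = \bigcup_{\epsilon > 0} X(S^-;\epsilon)$. The inclusion $\supseteq$ is immediate. For $\subseteq$, I would take any $x_n \in X(S^-)$; for each $a_{-n} \in S^-$ at least one deviation inequality holds strictly, so its slack is positive, and setting $\epsilon$ equal to the minimum over $a_{-n} \in S^-$ of the \emph{largest} such slack yields a strictly positive $\epsilon$ with $x_n \in X(S^-;\epsilon)$. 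Care is needed here precisely because each $X^c(a_{-n})$ is a disjunction, so the margin must be measured by the best available deviation for each $a_{-n}$ separately before taking the minimum.

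With this identity in hand, feasibility follows by invoking Lemma~\ref{lemma}. By hypothesis $s$ is attained at $(S^+, S^-)$, so $X(S^+) \cap X(S^-) \neq \emptyset$ and, by the proof of Theorem~\ref{thm:sup}, $s = \sup_{x_n \in X(S^+) \cap X(S^-)} \psi(x_n; S^+)$. Applying Lemma~\ref{lemma} to the domain $X(S^+) \cap X(S^-)$ and the continuous function $\psi(\cdot; S^+)$ produces, for any $\alpha \in (0, s]$ (the case $\alpha > s$ being trivial, as payoffs are positive), a point $x_n \in X(S^+) \cap X(S^-)$ with $s - \psi(x_n; S^+) \leq \alpha$. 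By the identity above this point lies in $X(S^+) \cap X(S^-;\epsilon)$ for some $\epsilon > 0$ and satisfies $\sum_{a_n \in A_n} U_n^{a_{-n}, a_n} x_n^{a_n} \geq s - \alpha$ for every $a_{-n} \in S^+$. Hence Problem~\eqref{prob:approx_sup} is feasible with $\epsilon > 0$, and since it maximises $\epsilon$, its optimum has $\epsilon^* > 0$ and returns an $\alpha$-approximate strategy.

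Finally, I would observe that Problem~\eqref{prob:approx_sup} is an MILP: the constraints $x_n \in X(S^+)$ are linear, while $x_n \in X(S^-;\epsilon)$ is encoded, for each $a_{-n} \in S^-$, by the big-$M$ disjunctive Constraints~\eqref{cons:milp} with their binary selector variables, and both the first family of constraints and the objective are linear. Solving this MILP takes at most exponential time, which matches the claim.
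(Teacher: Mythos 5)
Your proof is correct and takes essentially the same route as the paper's: soundness because any feasible pair with $\epsilon>0$ lies in $X(S^+)\cap X(S^-)$, so the configuration is exactly realized, $f(\hat x_n)=\psi(\hat x_n;S^+)\geq s-\alpha$, and feasibility via Lemma~\ref{lemma} applied to $\psi(\cdot\,;S^+)$ on $X(S^+)\cap X(S^-)$, where the supremum equals $s$ by the argument in Theorem~\ref{thm:sup}. The only difference is that you spell out the bridge identity $X(S^-)=\bigcup_{\epsilon>0}X(S^-;\epsilon)$ (measuring, for each $a_{-n}\in S^-$, the slack of the \emph{best} violated deviation before taking the minimum) and the trivial case $\alpha>s$, both of which the paper leaves implicit---a tightening of the same argument rather than a different approach.
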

\begin{proof}
Let $x_n^* \in X(S^+ \cap S^-)$ be the strategy where the supremum is attained according to the formula in Theorem~\ref{thm:sup}, namely, where $\psi(x_n^*,S^+) = \max_{\substack{x_n \in X(S^+)\cap \overline{X(S^-)}}} \psi(x_n;S^+) = s$. 
%
%
Problem~\eqref{prob:approx_sup} calls for a solution $x_n$ of value at least $s - \alpha$ (thus, for an $\alpha$-approximate strategy) belonging to $X(S^+) \cap X(S^-;\epsilon)$ with $\epsilon$ as large as possible, whose existence is guaranteed by Lemma~\ref{lemma}. Let $(\hat x_n, \hat \epsilon)$ be an optimal solution to Problem~\eqref{prob:approx_sup}. If $\hat \epsilon > 0$, $\hat x_n \in X(S^+) \cap X(S^-)$ (rather than $\hat x_n \in X(S^+) \cap \overline{X(S^-)}$). Thus, $f(x_n)$ is continuous at $x_n = \hat x_n$, implying $\psi(x_n;S^+) = f(x_n)$. Therefore, by playing $\hat x_n$, the leader achieves a utility of, at least, $s-\alpha$. \qed
\end{proof}

\subsubsection{Outline of the Explicit Enumeration Algorithm}

The complete enumerative algorithm is detailed in Algorithm~\ref{alg:ex_enum}. In the pseudocode, \textsf{CheckEmptyness}$(S^+,S^-)$ is a subroutine which looks for a value of $\epsilon \geq 0$ which is optimal for Problem~\eqref{prob:eps}, while \textsf{Solve-lex-MILP}$(S^+,S^-)$ is another subroutine which solves Problem~\eqref{prob:lex_milp}. Due to the lexicographic nature of the algorithm, $f(x_n)$ admits a maximum if and only the algorithm returns a solution with $best.\epsilon^* > 0$, whereas, if $best.\epsilon^*=0$, $x_n^*$ is just a strategy where $\sup_{x_n \in \Delta_n} f(x_n)$ is attained (in the sense of Theorem~\ref{thm:sup}). In the latter case, 
%
an $\alpha$-approximate strategy is found by invoking the procedure \textsf{Solve-MILP-approx}$(best.S^+,best.S^-,best\_value)$, which solves Problem~\eqref{prob:approx_sup}
on the outcome configuration $(best.S^+,best.S^-)$ on which the supremum has been found.




\begin{algorithm}[!htp]
	\caption{Explicit Enumeration}
	\label{alg:ex_enum}
	\begin{algorithmic}[1]
		\Function{Explicit Enumeration}{}
		\State $best \gets nil$
		\State $best\_val \gets -\infty$
		\ForAll{$S^+ \in A_F$}
		\State $ S^- \gets A_F \setminus S^+$
		\State $(\epsilon,\cdot) \gets \textsf{CheckEmptyness}(S^+,S^-)$ \Comment{Solve MILP Problem~\eqref{prob:eps}}
		\If{$\epsilon > 0$}
		\State $(\eta,\epsilon^*,x_n^*) \gets \textsf{Solve-lex-MILP}(S^+,S^-)$ \Comment{Solve lex-MILP Problem~\eqref{prob:lex_milp}}
		\If{$\eta > best\_val$}
		\State $best \gets (S^+,S^-,x_n^*,\epsilon^*)$
		\State $best\_val \gets \eta$
		\EndIf
		\EndIf
		\EndFor
                \If{$best.\epsilon^* = 0$}
                  \State $\hat x_n \gets best.x_n$
                \Else
                  \State $\hat x_n \gets \textsf{Solve-MILP-approx}(best.S^+,best.S^-,best\_val)$ \hspace{-2cm}\Comment{Solve MILP Problem~\eqref{prob:approx_sup}}
                \EndIf
		\State \Return $best\_val$, $best.x_n^*$, $\hat x_n$
		\EndFunction
	\end{algorithmic}
\end{algorithm}

\subsection{Branch-and-Bound Algorithm}\label{sub_sec:bb_algorithm}

As it is clear, computing $\sup_{x_n \in \Delta_n} f(x_n)$ with the enumerative algorithm can be impractical for any game of interesting size, as it requires the explicit enumeration of all the outcome configurations of a game---many of which will, incidentally, yield empty regions $X(S^+)\cap X(S^-)$. A more efficient algorithm, albeit one still running in exponential time in the worst-case, can be designed by relying on a branch-and-bound scheme.

\subsubsection{Computing $\sup_{x_n \in \Delta_n} f(x_n)$}

Rather than defining $S^- = A_F \setminus S^+ $, assume now $S^- \subseteq A_F \setminus S^+$. In this case, we call the corresponding pair $(S^+,S^-)$ a {\em relaxed outcome configuration}.

Starting from any followers' action profile $a_{-n} \in A_F$ with $X(a_{-n}) \neq \emptyset$, the algorithm constructs and explores, through a sequence of branching operations, two search trees, whose nodes correspond to relaxed outcome configurations. One tree accounts for the case where $a_{-n}$ is an NE and contains the relaxed outcome configuration $(S^+,S^-) = (\{ a_{-n} \},\emptyset)$ as root node. The other tree accounts for the case where $a_{-n}$ is not an NE, featuring as root node the relaxed outcome configuration $(S^+,S^-) = (\emptyset,\{ a_{-n} \})$.

If $S^- \subset A_F \setminus S^+$ (which can often be the case when relaxed outcome configurations are adopted), solving $\max_{x_n \in X(S^+) \cap \overline{X(S^-)}} \psi(x_n;S^+)$ might not give a strategy $x_n$ for which the only pure NEs in the followers' game it induces are those in $S^+$, even if $x_n \in X(S^+) \cap X(S^-)$ (rather than $x_n \in X(S^+) \cap \overline{X(S^-)}$). This is because, due to $S^+ \cup S^- \subset A_F $, there might be another action profile, say $a_{-n}' \in A_F \setminus (S^+ \cup S^-)$, providing the leader with a utility strictly smaller than that corresponding to all the action profiles in~$S^+$. Since, if this is the case, the followers would respond to $x_n$ by playing $a_{-n}'$ rather than any of the action profiles in $S^+$, $\max_{x_n \in X(S^+) \cap \overline{X(S^-)}} \psi(x_n;S^+)$ could be, in general, strictly larger than $\sup_{x_n \in \Delta_n} f(x_n)$, thus not being a valid candidate for the computation of the latter.


In order to detect whether one such $a_{-n}'$ exists, it suffices to carry out a {\em feasibility check} (on $x_n$) by looking for, in the followers' game, a pure NE different from those in $S^-$ (which may become NEs on $\bd(X(S^+) \cap X(S^-)$) which minimises the leader's utility---this can be done by inspection in $O(m^{n-1})$.
If the feasibility check returns some $a_{-n}' \notin S^+$, the branch-and-bound tree is expanded by performing a \emph{branching} operation. Two nodes are introduced: a {\em left node} with $(S_L^+,S_L^-)$ where $S_L^+ = S^+ \cup \{a_{-n}'\}$ and $S_L^- = S^-$ (which accounts for the case where $a_{-n}'$ is a pure NE), and a {\em right node} with $(S_R^+,S_R^-)$ where $S_R^+ = S^+$ and $S_R^- = S^- \cup \{a_{-n}'\}$ (which accounts for the case where $a_{-n}'$ is not a pure NE).
If, differently, $a_{-n}' \in S^+$, then $\psi(x_n;S^+)$ represents a valid candidate for the computation of $\sup_{x_n \in \Delta_n} f(x_n)$ and, thus, no further branching is needed (and $(S^+,S^-)$ is a leaf node).
%


The bounding aspect of the algorithm is a consequence of the following proposition:
\begin{proposition}\label{prop:bound}
Solving $\max_{x_n \in X(S^+) \cap \overline{X(S^-)}} \psi(x_n;S^+)$ for some relaxed outcome configuration $(S^+,S^-)$ gives an \emph{upper bound} on the leader's utility under the assumption that all followers' action profiles in $S^+$ constitute an NE and those in $S^-$ do not.
\end{proposition}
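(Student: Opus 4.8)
The plan is to reduce the proposition to a single pointwise inequality between the leader's pessimistic utility $f$ and the candidate objective $\psi(\cdot;S^+)$, and then take suprema. First I would make explicit which leader's strategies are compatible with the stated assumption: by the very definitions of $X(S^+)=\bigcap_{a_{-n}\in S^+}X(a_{-n})$ and $X(S^-)=\bigcap_{a_{-n}\in S^-}X^c(a_{-n})$, a strategy $x_n$ makes every profile in $S^+$ an NE and every profile in $S^-$ a non-NE precisely when $x_n\in X(S^+)\cap X(S^-)$. Hence the quantity to be bounded is $\sup_{x_n\in X(S^+)\cap X(S^-)}f(x_n)$.

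The heart of the argument is the pointwise bound. Fix any such $x_n$ and let $N(x_n)\subseteq A_F$ denote the set of \emph{all} followers' action profiles that are pure NEs in the game induced by $x_n$. The assumption gives $S^+\subseteq N(x_n)$, but — and this is the whole reason we only obtain an upper bound — since $(S^+,S^-)$ is merely a \emph{relaxed} outcome configuration, $N(x_n)$ may also contain profiles from $A_F\setminus(S^+\cup S^-)$. Because $f(x_n)$ is by definition the leader's utility in the worst NE, i.e. $f(x_n)=\min_{a_{-n}\in N(x_n)}\sum_{a_n\in A_n}U_n^{a_{-n},a_n}x_n^{a_n}$, and a minimum over the larger index set $N(x_n)\supseteq S^+$ cannot exceed the minimum over $S^+$, we conclude $f(x_n)\le\psi(x_n;S^+)$ for every $x_n\in X(S^+)\cap X(S^-)$. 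Intuitively, disregarding the possibly-present extra NEs can only inflate the estimate of the leader's worst-case payoff.

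Finally I would pass to the supremum. Monotonicity of $\sup$ applied to the pointwise bound yields $\sup_{x_n\in X(S^+)\cap X(S^-)}f(x_n)\le\sup_{x_n\in X(S^+)\cap X(S^-)}\psi(x_n;S^+)$, and it remains to rewrite the right-hand side over the closed region in the statement. Here I would reuse the two facts established just before Theorem~\ref{thm:sup}: that $\psi(\cdot;S^+)$ is continuous (a pointwise minimum of finitely many affine functions) and that $\overline{X(S^+)\cap X(S^-)}=X(S^+)\cap\overline{X(S^-)}$. Continuity lets the supremum over $X(S^+)\cap X(S^-)$ equal the maximum over its closure, and the closure identity casts that closure in the required form $X(S^+)\cap\overline{X(S^-)}$, giving the claimed upper bound. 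The step I expect to be the most delicate is precisely this last passage to the closure: one must check that enlarging the feasible set does not undermine the claim. This is harmless because we only assert an upper bound, and since $X(S^+)\cap X(S^-)\subseteq X(S^+)\cap\overline{X(S^-)}$, moving to the closed (hence compact) superset can only increase the optimum, which preserves the inequality.
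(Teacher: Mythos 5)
Your proposal is correct and takes essentially the same route as the paper: the paper's proof is exactly the observation that, since $(S^+,S^-)$ is only a relaxed outcome configuration, $\psi(\cdot\,;S^+)$ takes its minimum over a subset of the true NE set, so ignoring the possible extra equilibria can only overestimate the leader's worst-case utility. Your write-up merely makes explicit the pointwise inequality $f(x_n)\le\psi(x_n;S^+)$, the monotonicity of the supremum, and the (harmless, inclusion-based) passage to the compact set $X(S^+)\cap\overline{X(S^-)}$, all of which the paper leaves implicit.
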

\begin{proof}
Due to $(S^+,S^-)$ being a relaxed outcome configuration, there could be outcomes not in $S^+$ which are NEs for some $x_n \in X(S^+) \cap \overline{X(S^-)}$. Due to $\psi(x_n;S^+)$ being defined as $\min_{a_{-n} \in S^+} \sum_{a_n \in A_n} U^{a_{-n}, a_n} x_n^{a_n}$, ignoring any such NE at any $x_n \in X(S^+) \cap \overline{X(S^-)}$ can only result in the $\min$ operator running on fewer outcomes $a_{-n}$, thus overestimating
$\psi(x_n;S^+)$ and, ultimately, $f(x_n)$. The claim, thus, follows. \qed
\end{proof}
As a consequence of Proposition~\ref{prop:bound}, optimal values obtained when computing the value of $\max_{x_n \in X(S^+) \cap \overline{X(S^-)}} \psi(x_n;S^+)$ throughout the search tree can be used as bounds as in a standard branch-and-bound method.

Since $\max_{x_n \in X(S^+) \cap \overline{X(S^-)}} \psi(x_n;S^+)$ is not well-defined for nodes where $S^+ = \emptyset$, for them we solve, rather than an instance of Problem~\eqref{prob:lex_milp},
a restriction of the optimistic problem (see Section~\ref{sec:problem}) with constraints imposing that all followers' action profiles in $S^-$ are not NEs. We employ the following formulation, which we introduce directly for the lexicographic case:

\begin{subequations}\label{prob:opt-with-S-}
\small
\begin{align}
\label{prob:opt-with-S-1}  \max_{y, x_n, \epsilon}
  & \; \left[\sum_{a \in A}  U_n^{a_{-n},a_n} y^{a_{-n}} x_n^{a_n}; \epsilon\right] \\
\label{prob:opt-with-S-2}  \text{s.t.}
  & \sum_{a_{-n} \in A_F} y^{a_{-n}} =1\\
\nonumber  & y^{a_{-n}} \sum_{a_n \in A_n} (U_p^{a_{-n},a_n} - U_p^{a_{-n}',a_n}) x_n^{a_n}  \geq 0 & \forall p \in F, a_{-n}\in A_F, a_p' \in A_p \setminus \{a_p\}\\
\label{prob:opt-with-S-3} 
  & & \hspace{-10cm}\text{with } a_{-n}' = a_1, \dots, a_{p-1}, a_p', a_{p+1}, \dots, a_{n-1} \hspace{-0cm}\\
\label{prob:opt-with-S-5}
  & y^{a_{-n}} \in \{0,1\} & \forall a_{-n} \in A_F\\
\label{prob: opt-with-S-6}
  & x_n \in \Delta_n\\
\label{prob: opt-with-S-7}
  & x_n \in X(S^-;\epsilon).
\end{align}
\end{subequations}
The problem can be turned into a lex-MILP by linearizing each bilinear product $y^{a_{-n}} x_n^{a_n}$ by means of McCormick's envelope and by restating Constraint~\eqref{prob: opt-with-S-7} as done in the MILP Constraints~\eqref{cons:milp}.

\subsubsection{Finding an $\alpha$-Approximate Strategy}\label{subsub:alphabnb}

Notice that, in the context of the branch-and-bound algorithm, an $\alpha$-approximate strategy $\hat x_n$ cannot be found by just relying on the {\em a posteriori} procedure outlined in Theorem~\ref{thm:apx}.
This is because, when $(S^+,S^-)$ is a relaxed outcome configuration, there might be an action profile $a_{-n}' \in A_F \setminus (S^+ \cup S^-)$ (i.e., one not accounted for in the relaxed outcome configuration) which not only is a NE in the followers' game induced by $\hat x_n$, but which also provides the leader with a utility strictly smaller than $\psi(\hat x_n; S^+)$. If this is the case, the strategy~$\hat x_n$ found with
the procedure of Theorem~\ref{thm:apx} may return a utility arbitrarily smaller than the supremum $s$ and, in particular, smaller than $s - \alpha$.

To cope with this shortcoming and establish whether such an $a_{-n}'$ exists, we first compute $\hat x_n$ according to the {\em a posteriori} procedure of Theorem~\ref{thm:apx} and, then, perform a feasibility check.
If we obtain an action profile $a_{-n}' \in S^+$, $\hat x_n$ is then an $\alpha$-approximate strategy and the algorithm halts.
If, differently, we obtain some $a_{-n}' \notin S^+$ for which the leader obtains a utility strictly smaller than $\psi(\hat x_n; S^+)$, we carry out a branching operation, creating a left and a right child node in which $a_{-n}'$ is added to, respectively, $S^+$ or $S^-$. This procedure is then applied on both nodes, recursively, until a strategy $\hat x_n$ for which the feasibility check returns
an action profile in $S^+$ is found. Such a strategy is, by construction, $\alpha$-approximate.

Observe that, due to the correctness of the algorithm for the computation of the supremum, there cannot be at $x_n^*$ an NE $a_{-n}'$ worse than the worst-case one in $S^+$. If a new outcome $a_{-n}'$ becomes the worst-case NE at $\hat x_n$, due to the fact that it is not a worst-case NE at $x_n^*$, there must be a strategy $\tilde x_n$ which is a convex combination of $x_n^*$ and $\hat x_n$ where either $a_{-n}'$ is not an NE or, if it is, it yields a leader's utility not worse than that obtained with the worst-case NE in $S^+$. An $\alpha$-approximate strategy is thus guaranteed to be found on the segment joining $\tilde x_n$ and $x_n^*$ by applying Lemma~\ref{lemma} with $X$ equal to that segment. Thus, the algorithm is guaranteed to converge.

\subsubsection{Outline of the Branch-and-Bound Algorithm}

The complete outline of the branch-and-bound algorithm is detailed in Algorithm~\ref{alg:bnb}. $\mathcal{F}$ is the frontier of the two search trees, containing all nodes which have yet to be explored. \textsf{Initialize}$()$ is a subprocedure which creates the root nodes of the two search trees, while \textsf{pick}$()$ extracts from $\mathcal{F}$ the next node to be explored. \textsf{FeasibilityCheck}$(x_n,S^-)$ performs the feasibility check operation for the leader's strategy $x_n$, looking for the worst-case pure NE in the game induced by $x_n$ and ignoring any outcome in $S^-$. \textsf{CreateNode}$(S^+,S^-)$ (detailed in Algorithm~\ref{alg:add_node}) adds a new node to $\mathcal{F}$, also computing its upper bound and the corresponding values of $x_n$ and $\epsilon$. More specifically, \textsf{CreateNode}$(S^+,S^-)$ performs the same operations of a generic step of the enumerative procedure in Algorithm~\ref{alg:ex_enum} for a given $S^+$ and $S^-$, with the only difference that, here, we invoke the subprocedure \textsf{Solve-lex-MILP-Opt}$(S^+,S^-)$ whenever $S^+=\emptyset$, by which Problem~\eqref{prob:opt-with-S-} is solved, while we invoke \textsf{Solve-lex-MILP}$(S^+,S^-)$, which solves Problem~\eqref{prob:lex_milp}, if $S^+ \neq \emptyset$. In the last part of the algorithm, $\textsf{Solve-MILP-approx}(best.S^+,best.S^-,best\_val)$ attempts to compute an $\alpha$-approximate strategy as done in Algorithm~\ref{alg:ex_enum}. In case the feasibility check fails for it, we resort to calling the procedure $\textsf{Branch-and-Bound-approx}(best.S^+,best.S^-,best.x_n^*)$ which runs a second branch-and-bound method, as described in Subsection~\ref{subsub:alphabnb}, until an $\alpha$-approximate solution is found.

\begin{algorithm}[!tp]
	\caption{Branch-and-Bound}
	\label{alg:bnb}
	\begin{algorithmic}[1]
		\Function{Branch-and-Bound}{}
		\State $best \gets nil, \quad lb \gets -\infty, \quad ub \gets \infty$
		\State $\mathcal{F} \gets \textsf{Initialize()}$
		\While{$\mathcal{F} \neq \emptyset$}
		\State $ node \gets \mathcal{F}.\textsf{pick}() $
		\If{$ node.ub > lb $}
		\State $ a_{-n} \gets \textsf{FeasibilityCheck}(node.x_n^*,node.S^-) $
		\If{$ a_{-n} \in node.S^+ $}
		\State $ best \gets (node.S^+, node.S^-, node.x_n^*, node.\epsilon^*) $
		\State $ lb \gets node.ub $
		\Else 
		\State $ S^+_L = node.S^+ \cup \{ a_{-n} \} $
		\State $ \mathcal{F} \gets \mathcal{F} + \textsf{CreateNode}(S^+_L, node.S^-) $
		\State $ S^-_R = node.S^- \cup \{ a_{-n} \} $
		\State $ \mathcal{F} \gets \mathcal{F} + \textsf{CreateNode}(node.S^+, S^-_R) $
		\EndIf
		\State $ \displaystyle ub \gets \max_{node \in \mathcal{F}} \left\{node.ub\right\} $
		\EndIf
		\EndWhile
                \If{$best.\epsilon^* = 0$}
                  \State $\hat x_n \gets best.x_n^*$
                \Else
                  \State $\hat x_n \gets \textsf{Solve-MILP-approx}(best.S^+,best.S^-,best\_val)$ \hspace{-2cm}\Comment{Solve MILP Problem~\eqref{prob:approx_sup}}
                  \State $a_{-n}' \gets \textsf{FeasibilityCheck}(\hat x_n,best.S^-)$
                  \If {$a_{-n}' \notin best.S^+$}
                    \State $\hat x_n \gets \textsf{Branch-and-Bound-approx}(best.S^+,best.S^-,best.x_n^*)$
                  \EndIf
                \EndIf
		\State \Return $ub$, $best.x_n^*, \hat x_n$
		\EndFunction
	\end{algorithmic}
\end{algorithm}

\begin{algorithm}[!htp]
	\caption{CreateNode}
	\label{alg:add_node}
	\begin{algorithmic}[1]
		\Function{CreateNode}{$ S^+, S^- $}
		\State $ (\epsilon,\cdot) \gets \textsf{CheckEmptyness}(S^+,S^-) $ \Comment{Solve MILP Problem~\eqref{prob:eps}}
		\If{$ \epsilon > 0 $}
		\State $ node \gets \textsf{EmptyNode}() $
		\State $ node.S^+ \gets S^+ $
		\State $ node.S^- \gets S^- $
		\If{$ S^+ = \emptyset $}
		\State $ (\eta,\epsilon^*,x_n^*) \gets \textsf{Solve-lex-MILP-Opt}(S^+,S^-) $ \Comment{Solve lex-MILP Problem~\eqref{prob:opt-with-S-}}
		\Else 
		\State $ (\eta,\epsilon^*,x_n^*) \gets \textsf{Solve-lex-MILP}(S^+,S^-) $ \Comment{Solve lex-MILP Problem~\eqref{prob:lex_milp}}
		\EndIf
		\State $ node.ub \gets \eta $
		\State $ node.x_n^* \gets x_n^* $
		\State $ node.\epsilon^* \gets \epsilon^* $
		\State \Return $ node $
		\EndIf
		\State \Return $ \emptyset $
		\EndFunction
	\end{algorithmic}
\end{algorithm}


\section{Experimental Evaluation}\label{sec:experiments}

We carry out, in this section, an experimental evaluation of the equilibrium-finding algorithms introduced in the previous sections. In particular, we compare three methods:
\begin{itemize}
\item {\em QCQP}: the QCQP Formulation~\eqref{prob:reform}, which we solve with the state-of-the-art spatial-branch-and-bound code BARON~\cite{sahinidis:baron:14.3.1}; note that, as indicated in~\cite{sahinidis:baron:14.3.1}, global optimality cannot be guaranteed by BARON if the feasible region of the problem at hand is not bounded, which is the case of Formulation~\eqref{prob:reform}. Hence, solutions obtained with {\em QCQP} are, in the general case, feasible but not necessarily optimal.
\item {\em MILP}: the MILP Formulation~(\ref{prob:reformMILP}), with dual variables artificially bounded by $M$, which we solve with the state-of-the-art MILP solver Gurobi~$7.0.2$; experiments with different values of $M$ are reported.
\item {\em BnB-sup}: the {\em ad hoc} branch-and-bound algorithm we proposed, described in Subsection~\ref{sub_sec:bb_algorithm} and better detailed in Algorithm~\ref{alg:bnb}, which we run to compute $\sup_{x_n \in \Delta_n} f(x_n)$, i.e., the supremum of the leader's utility. The algorithm is coded in Python~$2.7$. The different MILP subproblems that are encountered during its execution are solved with Gurobi~7.0.2.
\item {\em BnB-$\alpha$}: the {\em ad hoc} branch-and-bound algorithm we proposed, run to find, if there is no $x_n \in \Delta_n$ at which the value of the supremum is attained, an $\alpha$-approximate strategy. Results obtained with different values of $\alpha$ are illustrated.
\end{itemize}


We conduct our experiments on a testbed of normal-form game instances built with GAMUT, a widely adopted suite of game instance generators~\cite{gamut}. All the game instances we used are of \texttt{RandomGame} class, with their payoffs independently drawn from a uniform distribution with values in the range $[1,100]$.
The testbed contains games with $n = 3, 4, 5$ players (i.e., with $2, 3, 4$ followers). We generate instances with $m \in \{4,6,\ldots,20,25,\ldots,70\}$ actions when $n = 3$, and $m \in \{3,4,\ldots,15\}$ actions when $n = 4, 5$. In order to obtain statistically more robust results, the testbed includes $30$ different instances for each pair of $n$ and $m$.


Throughout the experiments and for each algorithm, we collect the following figures for each game, which we then average over all the 30 game instances in the testbed with the same values of $n$ and $m$:
\begin{itemize}
\item \emph{Time}: computing time, in seconds and up to the time limit, needed to solve the game (i.e., to compute an equilibrium).
\item \emph{LB}: the lower-bound  corresponding to the value of the best feasible solution the algorithm managed to find before halting either due to convergence or due to an elapsed time limit; by playing the strategy $x_n$ encoded in this solution, the leader is guaranteed to obtain a utility equal to at least {\em LB}; in the average, this value is only considered for instances where a feasible solution is found.
\item \emph{Gap}: the additive gap of the returned solution, measured as UB - LB, where UB is the upper-bound returned by the algorithm on the value of an optimal solution to the equilibrium-finding search problem. Note that, when solving the two restricted formulations, i.e., the QCQP Formulation~\eqref{prob:reform} and the MILP Formulation~\eqref{prob:reformMILP}, {\em Gap} corresponds to the gap ``internal'' to the solution method, thus being, in general, not valid for the original, unrestricted problem. This is not the case for BnB-sup and BnB-$\alpha$, for which {\em Gap} is a correct additive estimate of the difference between the best found LB and the value of the supremum (overestimated by UB).
\end{itemize}
We also report, for each value of $n$ and $m$, the following two figures:
\begin{itemize}
\item {\em Opt}: the percentage of instances solved to optimality. The figure is only reported for BnB since, as previously explained, optimality cannot be guaranteed for the solutions to the two formulations.
\item {\em Feas}: the percentage of instances for which a feasible solution has been found. We report the figure for the two mathematical programming formulations as an alternative to {\em Opt}.
\end{itemize}

The experiments are run on a UNIX machine with a total of 32 cores working at 2.3 GHz, equipped with 128 GB of RAM. All the computations are carried out on a single thread, with a time limit of 3600 seconds per instance. 

\subsection{Experimental Results with Two Followers}

We report, first, the results obtained on games with two followers (i.e., with $n=3$), which are summarized in Table~\ref{tab:tab1}. In particular, the table compares:
\begin{itemize} 
\item QCQP;
\item MILP with three values of $M$, namely, $M=10,100,1000$;
\item BnB-sup;
\item BnB-$\alpha$, with three values of $\alpha$, namely, $\alpha=0.1,1,10$.
\end{itemize}

\begin{table}[h!]
\rotatebox{90}{
\setlength{\tabcolsep}{1.1pt}
\begin{tabular}{r|rrrr|rrrr|rrrr|rrrr|rrrr|rH|rH|rH}
\multicolumn{24}{c}{} \\
\multicolumn{24}{c}{} \\
\multicolumn{24}{c}{} \\
\multicolumn{24}{c}{} \\
\multicolumn{24}{c}{} \\
\multicolumn{24}{c}{} \\
\multicolumn{24}{c}{} \\
\multicolumn{24}{c}{} \\
\multicolumn{24}{c}{} \\
\multicolumn{24}{c}{} \\
 & \multicolumn{4}{c|}{} & \multicolumn{4}{c|}{MILP}  & \multicolumn{4}{c|}{MILP} & \multicolumn{4}{c|}{MILP} & \multicolumn{4}{c|}{} & \multicolumn{2}{c|}{BnB-$\alpha$}  & \multicolumn{2}{c|}{BnB-$\alpha$} & \multicolumn{2}{c}{BnB-$\alpha$}\\ 
 & \multicolumn{4}{c|}{QCQP} & \multicolumn{4}{c|}{$M=10$}  & \multicolumn{4}{c|}{$M=100$} & \multicolumn{4}{c|}{$M=1000$} & \multicolumn{4}{c|}{BnB-sup} & \multicolumn{2}{c|}{$\alpha=0.1$}  & \multicolumn{2}{c|}{$\alpha=1$} & \multicolumn{2}{c}{$\alpha=10$}\\ 
\hline
$m$ & Time & LB & Gap & Fea & Time & LB & Gap & Fea & Time & LB & Gap & Fea & Time & LB & Gap & Fea & Time & LB & Gap & Opt & Time & LBe & Time & LBe & Time & LBe \\ 
\hline
4 & 3600 & 81.3 & 18.7 & 100 & 2 & 83.5 & 0.0 & 100 & 1 & 85.8 & 0.0 & 100 & 1 & 85.3 & 0.0 & 100 & 1 & 85.7 & 0.0 & 100 & 1 & 85.6 & 1 & 84.7 & 0 & 75.7 \\ 
6 & 3600 & 80.4 & 19.6 & 100 & 761 & 90.2 & 0.1 & 100 & 137 & 91.7 & 0.0 & 100 & 173 & 91.8 & 0.0 & 100 & 2 & 91.9 & 0.0 & 100 & 3 & 91.8 & 2 & 90.9 & 2 & 81.9 \\ 
8 & 3600 & 70.6 & 29.4 & 100 & 1788 & 92.9 & 1.1 & 100 & 1419 & 93.8 & 0.9 & 100 & 1760 & 93.9 & 1.1 & 100 & 5 & 94.5 & 0.0 & 100 & 9 & 94.4 & 9 & 93.5 & 9 & 84.5 \\ 
10 & 3600 & 67.2 & 32.8 & 100 & 2672 & 90.3 & 6.7 & 97 & 2161 & 95.4 & 1.7 & 100 & 2116 & 95.2 & 1.9 & 100 & 7 & 96.7 & 0.0 & 100 & 17 & 96.6 & 16 & 95.7 & 17 & 86.7 \\ 
12 & 3600 & 63.3 & 36.7 & 100 & 3456 & 84.1 & 13.0 & 100 & 3184 & 89.6 & 7.7 & 100 & 3117 & 86.7 & 10.4 & 100 & 15 & 96.8 & 0.0 & 100 & 39 & 96.7 & 39 & 95.8 & 32 & 86.8 \\ 
14 & 3600 & 57.3 & 42.7 & 97 & 3600 & 64.1 & 35.9 & 80 & 3585 & 68.3 & 30.9 & 100 & 3591 & 66.0 & 33.1 & 100 & 20 & 97.9 & 0.0 & 100 & 72 & 97.8 & 79 & 96.9 & 78 & 87.9 \\ 
16 & 3600 & 45.2 & 54.8 & 77 & 3600 & 34.1 & 65.9 & 50 & 3600 & 61.7 & 38.3 & 93 & 3600 & 59.7 & 40.3 & 100 & 53 & 97.9 & 0.0 & 100 & 226 & 97.8 & 248 & 96.9 & 230 & 87.9 \\ 
18 & 3243 & 58.3 & 41.7 & 50 & 3600 & 32.9 & 67.1 & 57 & 3600 & 53.2 & 46.8 & 83 & 3600 & 54.5 & 45.5 & 100 & 160 & 98.3 & 0.0 & 100 & 449 & 98.2 & 432 & 97.3 & 488 & 88.3 \\ 
20 & -- & -- & -- & -- & 3600 & 32.0 & 68.0 & 73 & 3600 & 41.5 & 58.5 & 93 & 3600 & 43.5 & 56.5 & 100 & 222 & 98.6 & 0.0 & 100 & 1048 & 98.5 & 1056 & 97.6 & 1075 & 88.6 \\ 
25 & -- & -- & -- & -- & 3600 & 33.0 & 67.0 & 73 & 3600 & 33.0 & 67.0 & 73 & 3600 & 33.4 & 66.6 & 73 & 777 & 99.0 & 0.0 & 100 & 3271 & 98.9 & 3112 & 98.0 & 3146 & 89.0 \\ 
30 & -- & -- & -- & -- & 3600 & 36.1 & 63.9 & 97 & 3600 & 36.1 & 63.9 & 97 & 3600 & 36.0 & 64.0 & 97 & 2687 & 95.5 & 3.8 & 47 & -- & -- & -- & -- & -- & -- \\ 
\end{tabular}
}
\caption{Experimental results for games with $n=3$ players. The figures are averaged over games with the same values of $m$.}
\label{tab:tab1}
\end{table}


As the table shows, the QCQP formulation can be solved (not to global optimality, as previously mentioned) only for instances with, at most, $m = 18$ actions, due to BARON running out of memory on larger games. Further experimental results with a larger value of $m$ are, thus, not reported. While, with $m\leq 18$, feasible solutions are found, on average, in 91\% of the cases, their quality is quite poor, as indicated by an additive gap equal to, on average, 34.6. The running times are also extremely large, with the time limit being reached on each instance, even those with $m=4$, with the sole exception of those with $m=18$, on which the solver halted prematurely due to memory issues.

Empirically, the MILP formulation performs much better than the QCQP one, allowing us to tackle instances with up to $m=30$ actions per player.
The best choice of $M$, out of the three that we have tested, seems to be $M=10$, for which we obtain, on average, LBs of 68.2 and gaps of 28.7, with a computing time slightly smaller than 2600 seconds. Although the percentage of instances where a feasible solution has been found is slightly larger with $M=1000$ (97\% as opposed to 94\%), LBs and gap become slightly worse with $M=1000$, possibly due to the fact that, as it is well-known, MILP solvers are typically quite sensitive to the magnitude of ``big M'' coefficients which, if too large, are likely to lead to large condition numbers, resulting in an algorithm which is prone to numerical issues.

As to BnB-sup, the table clearly shows that this method substantially outperforms the two mathematical programming formulations, being capable of finding not just feasible solutions, but {\em optimal} ones for {\em every} game instance with $m \leq 25$, while solving to optimality 47\% of the instances with $m=30$. We register, on average, a computing time of 359 seconds, which further reduces to 126 if we only consider the instances with $m \leq 25$ (which are all solved to optimality). Interestingly, BnB-sup shows that the supremum of the leader's utility is very large on the normal-form random games in our testbed, being equal, on average on the instances with $m \leq 25$ for which the supremum is computed exactly, to 96.

As to BnB-$\alpha$, we observe that the time taken by the method to find an $\alpha$-approximate strategy is, in essence, unaffected by the value of $\alpha$, thus allowing for the computation of $\alpha$-approximate strategy extremely close, in value, to the supremum,  without requiring a too large computational effort. We remark that, in the experiments, BnB-$\alpha$ is run only on instances with $m \leq 25$ as, in its implementation, BnB-$\alpha$ requires a relaxed outcome configuration on which the value of the supremum has been attained to compute an $\alpha$-approximate strategy. As such, experiments on games where the supremum has not been computed exactly would not give a correct solution. Note that one could, nevertheless, easily modify BnB-$\alpha$ so to look for an $\alpha$-approximate solution at each leaf node, rather than after BnB-sup has halted. This way, if the method halts due to the time limit being met, one would still obtain a solution which is $\alpha$-approximate w.r.t. the value of the best (LB) estimation of the supremum that has been found within the time limit.

\begin{table}[htbp]
\caption{Results obtained with BnB-sup for games with $n=3$ players and $35 \leq m \leq 70$.}
\begin{center}
\begin{tabular}{r|rrrr}
 & \multicolumn{4}{c}{BnB-sup}  \\ 
\hline
$m$ & Time & LB & Gap & \%Opt \\ 
\hline
35 & 3573 & 78.8 & 20.9 & 3 \\ 
40 & 3560 & 63.1 & 36.8 & 0 \\ 
45 & 3600 & 50.2 & 49.8 & 0 \\ 
50 & 3600 & 48.8 & 51.2 & 0 \\ 
55 & 3600 & 52.3 & 47.7 & 0 \\ 
60 & 3600 & 48.9 & 51.1 & 0 \\ 
65 & 3600 & 49.2 & 50.8 & 0 \\ 
70 & 3600 & 49.1 & 50.9 & 0 \\ 
\end{tabular}
\end{center}
\label{tab:tab2}
\end{table}

Table~\ref{tab:tab2} reports further results obtained with BnB-sup for games with up to $m=70$ actions per player. As the table shows, while some optimal solutions can still be found for $m=35$, optimality is lost for any game instance with $m \geq 40$. Interestingly, though, BnB-sup still manages to find feasible solutions for instances up to $m=70$. On average, the method obtains solutions with an average LB of 55.1 and an average additive gap of 44.9. Under the conservative assumption that, in the testbed, games with $35 \leq m \leq 70$ admit suprema of value close to 100 (which is empirically true when $m \leq 30$), BnB-sup provides, on average, solutions that are less than 50\% off of optimal ones.


\subsection{Experimental Results with More Followers and Final Observations}

Results obtained with BnB-sup with more than two followers (i.e., with $n=4,5$) are reported in Table~\ref{tab:tab3} for $m \leq 14$.

\begin{table}[htbp]
\begin{center}
\caption{Results obtained with BnB-sup for games with $n=3,4,5$ players and $4 \leq m \leq 70$.}
\label{tab:tab3}
\begin{tabular}{r|rrr|rrr|rrr}
 & \multicolumn{3}{c|}{BnB-sup} & \multicolumn{3}{c|}{BnB-sup} & \multicolumn{3}{c}{BnB-sup}\\ 
 & \multicolumn{3}{c|}{$n=3$} & \multicolumn{3}{c|}{$n=4$} & \multicolumn{3}{c}{$n=5$}\\
\hline
$m$ & Time & Gap & Opt & Time & Gap & Opt & Time & Gap & Opt \\ 
4 & 0 & 0.0 & 100 & 3 & 0.0 & 100 & 8 & 0.0 & 100 \\ 
6 & 2 & 0.0 & 100 & 17 & 0.0 & 100 & 137 & 0.0 & 100 \\ 
8 & 5 & 0.0 & 100 & 126 & 0.0 & 100 & 2953 & 11.3 & 53 \\ 
10 & 7 & 0.1 & 100 & 955 & 0.0 & 100 & 3461 & 45.0 & 13 \\ 
12 & 15 & 0.0 & 100 & 2784 & 5.7 & 60 & 3600 & 52.9 & 0 \\ 
14 & 20 & 0.1 & 100 & 3600 & 49.9 & 0 & 3600 & 51.8 & 0 \\ 
16 & 53 & 0.0 & 100 & - & - & - & - & - & - \\ 
18 & 160 & 0.0 & 100 & - & - & - & - & - & - \\ 
20 & 222 & 0.1 & 100 & - & - & - & - & - & - \\ 
30 & 2687 & 3.8 & 47 & - & - & - & - & - & - \\ 
40 & 3560 & 36.8 & 0 & - & - & - & - & - & - \\ 
50 & 3600 & 51.2 & 0 & - & - & - & - & - & - \\ 
60 & 3600 & 51.1 & 0 & - & - & - & - & - & - \\ 
70 & 3600 & 50.9 & 0 & - & - & - & - & - & - \\ 
\end{tabular}
\end{center}
\end{table}

As the table illustrates, computing the value of the supremum of the leader's utility becomes very hard already for $m=12$ with $n=4$, for which the algorithm manages to find optimal solution in only 60\% of the cases. For $m=14$, no instance is solved to optimality within the time limit. For $n=5$, the problem becomes hard already for $m=8$, for which only 53\% of the instances are solved to optimality, whereas, for $m=12$, no instances at all are solved to optimality.


We do not report results on game instances with $n=4,5$ and $m > 14$ as such games are so large that, on them, BnB-sup incurs memory problems due to the MILP subproblems it solves.


In spite of the problem of computing a P-LFPNE being a nonconvex pessimistic bilevel program, with our branch-and-bound algorithm we can find solutions with an optimality gap $\leq 0.01$ for 3-player games with up to within $m=20$ actions (containing three payoffs matrices with 8000 entries each), which are comparable, in size, to those solved in previous works which solely tackled the problem of computing a single NE maximising the social welfare, see, e.g.,~\cite{SandholmGC05}.

\section{Conclusions}\label{sec:conclusions}


We have shown that the problem of computing a pessimistic leader-follower equilibrium with multiple followers playing pure strategies simultaneously and noncooperatively is \textsf{NP}-hard with two or more followers and inapproximable in polynomial time when the number of followers is three or more unless $\textsf{P}=\textsf{NP}$.
We have proposed an exact single-level QCQP reformulation for the problem, with a restricted version which we cast into an MILP, and an exact exponential-time algorithm (which we have then embedded in a branch-and-bound scheme) for finding
the supremum of the leader's utility and, in case there is no leader's strategy where such value is attained, also an $\alpha$-approximate strategy.

Future developments include applications to structured games (e.g., congestion games), establishing the approximability status of the problem with two followers, and the generalization to the case with both leader and followers playing mixed strategies---even though we conjecture that this problem could be much harder, probably $\Sigma_2^p$-complete.


\bibliographystyle{spbasic}      
\bibliography{paper}   




\end{document}